\newcommand{\Cli}{\mathrm{Cl}}
\newcommand{\bbF}{\mathbb{Z}} 
\newcommand{\Sp}{\mathrm{Sp}}
\newcommand{\RR}{\mathbb{R}}
\newcommand{\error}{\epsilon}
\DeclareMathOperator{\Sym}{Sym}
\newtheorem{definition}{Definition}
\newtheorem{corollary}{Corollary}
\newtheorem{theorem}{Theorem}
\newtheorem{lemma}{Lemma}
\newtheorem{proposition}{Proposition}
\definecolor{myblue}{RGB}{51,51,178}
\definecolor{mygreen}{RGB}{0,128,0}
\definecolor{myred}{RGB}{189,26,26}
\newcommand{\vc}[1]{#1}
\newcommand{\mt}[1]{#1}
\newcommand{\tr}{\mathrm{tr}}
\newcommand{\sym}{\mathrm{Sym}}
\newcommand{\CC}{\mathbb{C}}
\newcommand{\ZZ}{\mathbb{Z}}
\begin{document}

\title{Low rank matrix recovery from Clifford orbits}

\author{Richard Kueng}
\author{Huangjun Zhu}

\affiliation{Institute for Theoretical Physics, University of Cologne, Germany}

\author{David Gross}
\affiliation{Institute for Theoretical Physics, University of Cologne, Germany}
\affiliation{Centre for Engineered Quantum Systems, School of Physics,
The University of Sydney, Sydney, NSW 2006, Australia}

\date{\today}


\begin{abstract}
We prove that low-rank matrices can be recovered efficiently from a small number of measurements that are sampled from orbits of a certain matrix group.
As a special case, our theory makes statements about the \emph{phase retrieval} problem.
Here, the task is to recover a vector given only the \emph{amplitudes} of its inner product with a small number of vectors from an orbit.
Variants of the group in question have appeared under different names in many areas of mathematics.
In coding theory and quantum information, it is the \emph{complex Clifford group}; in time-frequency analysis the \emph{oscillator group}; and in mathematical physics the \emph{metaplectic group}.
It affords one particularly small and highly structured orbit that includes and generalizes the discrete Fourier basis:
While the Fourier vectors have coefficients of constant modulus and phases that depend linearly on their index, the vectors in said orbit have phases with a quadratic dependence.
In quantum information, the orbit is used extensively and is known as the set of \emph{stabilizer states}.
We argue that due to their rich geometric structure and their near-optimal recovery properties, stabilizer states form an ideal model for structured measurements for phase retrieval.
Our results hold for $m \geq C  \kappa_r r d \log (d)$ measurements, where the  oversampling factor $\kappa_r$ varies between $\kappa_r=1$ and $\kappa_r=r^2$ depending on the orbit.
The reconstruction is stable towards both additive noise and deviations from the assumption of low rank.
If the matrices of interest are in addition positive semidefinite, reconstruction may be performed by a simple constrained least squares regression.
Our proof methods could be adapted to cover orbits of other groups.
\end{abstract}


\maketitle

\section{Motivation}

\subsection{Phase retrieval}

Starting point of this paper is the \emph{phase retrieval problem}
\cite{walther_question_1963}.
The problem is to reconstruct an unknown vector $\vc{x} \in \mathbb{C}^d$ from measurements of the form
\begin{equation}
	y_k = \left| \langle \vc{a}_k, \vc{x} \rangle \right|^2 + \epsilon_k \quad 1 \leq k \leq m.	\label{eq:phaseless_measurements}
\end{equation}
Here, the $\vc{a}_1,\ldots,\vc{a}_m \in \mathbb{C}^d$ model linear measurements and the $\epsilon_k$'s additive noise. 
The phase retrieval problem occurs in many areas of science, for example in X-ray crystallography \cite{millane_phase_1990}, astronomy \cite{fienup_phase_1987,fienup_hubble_1993} and diffraction imaging \cite{chapman_femtosecond_2006, pfeiffer_phase_2006}, as well as pure state quantum estimation theory \cite{flammia_minimal_2005, peres_quantum_2006, heinosaari_quantum_2013, baldwin_strictly_2016,carmeli_efficient_2016}.

While a  number of (often heuristic) algorithms for solving the inverse problem (\ref{eq:phaseless_measurements}) have long been used \cite{fienup_phase_1982}, a rigorous analysis is quite involved. 
Even in the absence of noise ($\epsilon_k=0$), it is not obvious how many measurments are necessary, which measurements can be employed, and whether the vector $\vc{x}$ can be recovered in a numerically stable and computationally efficient way.

Recently, techniques from convex optimization theory have been used with great success to analyze the phase retrieval problem. 
This ansatz is known as \emph{PhaseLift} \cite{candes_phase_2013, candes_exact_2013}.
Beyond suggesting an algorithm for reconstruction, it brings powerful methods---e.g.\ convex duality theory---into the fold. 
There is now a fast-growing body of literature (including \cite{candes_solving_2013, gross_partial_2014, alexeev_phase_2014, candes_coded_2015, gross_improved_2015, kueng_orthonormal_2015, kueng_low_2015, salanevich_polarization_2015, tropp_convex_2015, krahmer_phase_2016, kabanava_stable_2015}) using these tools to establish recovery guarantees for phase retrieval for a variety of measurement models.

The present paper is also based on PhaseLift, and we will review the technique in Section~\ref{sec:convex} below.
However, our main focus lies on identifying a new class of measurement vectors ${a}_1, \dots, {a}_m$ for which the phase retrieval problem can be proven to be well-posed. 
Consequently, we start the introduction by discussing these measurement models.

\subsection{Measurement models}

While some deterministic sets of measurements vectors for the phase retrieval problem have recently been described \cite{kech_explicit_2015,bodmann_algorithms_2016,carmeli_efficient_2016, pohl_phaseless_2014}, 
most constructions are randomized.
The typical result states that if the $m$ vectors ${a}_k$ are drawn independently according to some distribution in $\mathbb{C}^d$, the inverse problem associated with \eqref{eq:phaseless_measurements} is well-posed with overwhelming probability.
An incomplete and \emph{ad hoc} classification of known examples might look as follows:

\emph{Gaussian or Haar distributions}.---The strongest, easiest to prove, and earliest examples used Gaussian random vectors, or vectors drawn uniformly from the unit-sphere (\emph{Haar} distribution) \cite{candes_exact_2013,candes_solving_2013}.
While powerful, this ensemble is rarely suitable in practical applications and gives no indication as to which particular properties are necessary for phase retrieval.
These two deficits are addressed by the next two categories.

\emph{Ensembles modeled after particular applications}.---Measurements modeling practical applications have been analyzed. 
An early example is given by the works on \emph{coded diffraction patterns} that are motivated by problems arising in diffraction imaging \cite{candes_coded_2015,gross_improved_2015}.
While highly relevant, the arguments tend to be very specific to the particular use case.

\emph{Designs}.---In contrast, one can ask for weak abstract properties of ensembles that are sufficient for phase retrieval. 
Since the proofs establishing recovery guarantees typically require information on higher moments $\mathbb{E} \left[{a}^{\otimes t} ({a}^*)^{\otimes t}\right]$ of the random measurement vectors, Ref.~\cite{gross_partial_2014} analyzed the suitability of \emph{complex projective $t$-designs} \cite{delsarte_spherical_1977,renes_symmetric_2004} for phase retrieval, see also \cite{kueng_spherical_2015, ehler_phase_2015}. 
These are ensembles that reproduce the first $2t$-th moments of the uniform distribution on the sphere.
This program has been successful in the sense that excellent recovery guarantees for designs of degree $t\geq 4$ have been established \cite{kueng_low_2015}.
At the same time, it has not yet lived up to some early expectations, because constructions known for infinite families of $4$-designs \cite{ambainis_quantum_2007,brandao_local_2012} are arguably significantly less explicit and ``well-structured'' as is the case for $t=3$ \cite{zhu_multiqubit_2015, webb_clifford_2015, kueng_qubit_2015}, or lower \cite{klappenecker_mutually_2005,bodmann_achieving_2015}.
Amending this situation was one of the motivations for the present work.

Here, we pursue a different route and establish representation-theoretic techniques for deriving recovery guarantees for measurement vectors sampled uniformly from orbits of matrix groups.
We pay particular attention to the \emph{complex Clifford group} and its orbit of \emph{stabilizer states}, introduced below.

Stabilizer states have appeared in several areas of science.
They are particularly central to quantum information theory. 
Therefore, this (and closely related \cite{povm_paper}) results might find direct practical applications e.g.\ in quantum state estimation \cite{perina_quantum_2012,banaszek_focus_2013}.

Beyond that, however, we want to put forth the argument that stabilizer states provide an ideal model for phase retrieval measurements, due to their rich geometric structure and the near-optimal recovery guarantees that can be proved.

The remaining paragraphs of this section are somewhat subjective and speculative.
Readers more interested in mathematical meat than meta-mathematical chatter should skip ahead.

To explain what we have in mind, we recall the situation for related inverse problems that can be tackled using convex optimization theory \cite{foucart_mathematical_2013}: \emph{compressed sensing} for sparse vectors, and \emph{low-rank matrix recovery}.

Also in compressed sensing, first results pertained to Gaussian measurements \cite{donoho_compressed_2006, candes_near_2006, foucart_mathematical_2013}. 
Attention then quickly shifted to ``more structured'' models, with the most natural one being measurements sampled from the Fourier basis \cite{candes_stable_2006,candes_robust_2006}.
Beyond its practical relevance, the high degree of a geometric and algebraic structure connected to Fourier vectors makes their study particularly fruitful. 
The absolutely tight bounds in \cite{tao2003uncertainty} serve as one example.

The story is similar in low-rank matrix recovery. 
Initial results are for Gaussian measurements \cite{recht_guaranteed_2010}, with follow-up work concentrating on the practically relevant measurement model of random matrix elements \cite{candes_exact_2009,candes_power_2010,keshavan_matrix_2010}.
A measurement model whose ``high degree of structure'' allows for a particularly simple analysis is given by the \emph{Pauli basis} \cite{gross_quantum_2010,gross_recovering_2011, liu_universal_2011,FlamGLE12}. 
For example, the fact that it constitutes a \emph{unitary operator basis} yields short proofs and tight bounds on the sampling rate.
What is more, its rich algebraic structure has been a crucial ingredient to a matching converse bound \cite[Theorem~15]{gross_recovering_2011}.

Arguably, a measurement model for phase retrieval, that would take up a role analogous to the ones plaid by the Fourier and Pauli bases, has not yet emerged.
A main reason may be the lack of obvious candidates: 
There are just not that many infinite families of high-dimensional vector configurations that have been widely studied.
To the present authors, the set of stabilizer states constitutes a worthy candidate for that role.

\section{Stabilizer states and the Clifford group}

Here, we introduce the concepts of \emph{stabilizer states} and the (complex) \emph{Clifford group} from various points of view.
These, and related notions, have been discovered several times in different branches of science and mathematics, including quantum information theory \cite{gottesman_stabilizer_1997}, coding theory \cite{macwilliams1977theory, nebe_self_2006}, time-frequency analysis \cite{grochenig2013foundations,pfander-chapter}, as well as mathematical physics and functional analysis \cite{v1931eindeutigkeit,mackey1955theory,Foll89book}.

Owing to the upbringing of the authors, we initially adopt the vantage point of quantum information, where these concepts go back to Ref.~\cite{gottesman_stabilizer_1997}.  
The textbook \cite{nielsen_quantum_2010} treats them extensively.
A point of view focused on connections to symplectic geometry is given in Refs.~\cite{gross_hudson_2006,kueng_qubit_2015}.
Here, we mainly summarize results from these sources.

In Section~\ref{sec:related_groups}, we comment on related gropus and give some pointers to the literature of other fields (though we are by no means experts in this regard).

\subsection{Stabilizer states: Elementary approach}
\label{sec:stabsElementary}

In this section, we provide a concrete basis representation of stabilizer states. 
While very transparent, it turns out that calculations are best done using an indirect representation in terms of \emph{stabilizer groups} or certain structures from symplectic geometry.
This point of view is described in the following sections. 

We first recall the definition of the Fourier  basis 
associated with the discrete vector space $\ZZ_2^{n}$.
To this end, we label the standard basis $\{\vc{e}_x\}_x$ of $\mathbb{C}^{2^n}$ by vectors $x\in\ZZ_2^n$.
A \emph{Fourier vector} $\vc{f}_l\in\CC^{2^n}$ depends on a $\ZZ_2$-linear form $l:\ZZ_2^n\to \ZZ_2$ and has expansion coefficients given by
\begin{equation}\label{eq:fourierbasis}
	\left(\vc{f}_{l}\right)_x
	=
	2^{-n/2}\,(-1)^{ l(x) }.
\end{equation}

Essentially, stabilizer states are obtained by generalizing (\ref{eq:fourierbasis}) to allow for a \emph{quadratic} dependence of the phase on the label $x$.
Indeed, the simplest type of stabilizer state $\psi_q$ is defined by a
\emph{quadratic form} $q$ on $\ZZ_2^n$ and has expansion coefficients
\begin{equation}\label{eq:stabsimple}
	(\psi_q)_x = 2^{-n/2}\,(-1)^{q(x)}.
\end{equation}
The most general form goes beyond Eq.~(\ref{eq:stabsimple}) in two ways: (i) The
non-zero coefficients can be restricted to an affine subset of
$\ZZ_2^n$, and (ii) certain complex phase factors are also allowed.
To be precise \cite{hostens2005stabilizer,gross2008lu}: 

\begin{definition}
	A \emph{stabilizer state} $\psi \in \CC^{2^n}$ is defined by the following data:
	\begin{enumerate}
		\item
		An affine subset $A\subset \ZZ_2^n$,
		\item
		a quadratic form $q:A\to\ZZ_2$,
		\item
		a linear form $l:A \to \ZZ_2$.
	\end{enumerate}
	Its components are
	\begin{equation*}
		\psi_x = 
		\left\{
			\begin{array}{ll}
				|A|^{-1/2}\,i^{l(x)} (-1)^{q(x)} & x \in A \\
				0 & x \not\in A.
			\end{array}
		\right.
	\end{equation*}
\end{definition}

More explicitly, recall that the standard inner product gives rise to a one-one correspondence between linear forms $l_y$ on $\ZZ_2^n$ and elements $y$ of $\ZZ_2^n$ via
\begin{equation*}
	l_y(x) := \langle y , x \rangle = \sum_{i=1}^n y_i x_i \mod 2.
\end{equation*}
A quadratic form $q$ on $\ZZ_2^n$ is a function that can be written as
\begin{equation}\label{eq:quadratic}
	q(x) = \sum_{i\leq j} q_{i,j}\,x_i x_j\mod 2,
\end{equation}
for some upper triangular matrix $q_{i,j}$.
Because in characteristic $2$, it holds that $x_i^2 = x_i$, Eq.~(\ref{eq:quadratic}) inlcudes linear forms as a special case:
\begin{equation*}
	\langle x, y \rangle = \sum_{i\leq j}
	\operatorname{diag}(y)_{i,j}\,x_i x_j,
\end{equation*}
where $\operatorname{diag}(y)$ is the matrix with the vector $y$ on its main diagonal.
In particular, the Fourier basis is included in the set of stabilizer states.

Many properties of stabilizer states are known. 
E.g.\ they can be partitioned into disjoint sets of ortho-normal bases; and the inner product between two stabilizer states depends essentially only on the basis they belong to.
All these properties are hard to see from their basis expansion. 
The formalism of \emph{stabilizer groups}, introduced next, is less explicit, but makes such computations much easier.

\subsection{Stabilizer states from stabilizer groups}
\label{sec:stabsGeometric}

Stabilizer states can be defined implicitly as the common eigenvectors of maximal sets of commuting Pauli operators.

To make this precise, define the \emph{Pauli operators} on $\CC^2$ as
\begin{align}
	\sigma_{(0,0)} =&  
	\left(
		\begin{array}{cc}
			1 & 0 \\
			0 & 1 
		\end{array}
	\right), 
&
	\sigma_{(0,1)} =  &
	\left(
		\begin{array}{cc}
			0 & 1 \\
			1 & 0 
		\end{array}
	\right), \label{eq:singlequbitpaulis}\\
	\sigma_{(1,0)} =&  
	\left(
		\begin{array}{cc}
			1 & 0 \\
			0 & -1 
		\end{array}
	\right),
&
	\sigma_{(1,1)} =  &
	\left(
		\begin{array}{cc}
			0 & -i \\
			i &0 
		\end{array}
	\right). \nonumber
\end{align}

A Pauli operator $\mt{W}_a$ on 
\begin{equation*}
	\underbrace{\CC^2\otimes\dots\otimes \CC^2}_{n \text{ factors}}
	\simeq
	\CC^{2^n}
\end{equation*}
is defined as the tensor product of $n$ such matrices:
\begin{equation}
	\mt{W}_a := \sigma_{(a_1,a_2)} \otimes \dots \otimes
	\sigma_{(a_{2n-1},a_{2n})}. \label{eq:pauli_label}
\end{equation}
Clearly, the index $a$ takes values in the $\ZZ_2$-vector space $\ZZ_2^{2n}$.

We denote the set of all Pauli operators on $\CC^{2^n}$ by
\begin{equation*}
	\bar{\mathcal{P}}_n = \{ \mt{W}_a \,|\, a\in\ZZ_2^{2n} \}.
\end{equation*}
The \emph{Pauli group} $\mathcal{P}_n$ is the group generated by all the Pauli operators in $\bar{\mathcal{P}}_n$.
It turns out that the group consists of the Pauli operators multiplied by phase factors that are powers of the 
imaginary unit  $i$:
\begin{equation*}
\mathcal{P}_n = 
\langle \bar{\mathcal{P}}_n \rangle
= \{ i^j \mt{W}_a \,|\, a\in\bbF_2^{2n}, j \in \ZZ_4 \}.
\end{equation*}

\begin{definition}\label{def:stabilizergroup}
	A \emph{stabilizer group} is a subgroup $S \subset \mathcal{P}_n$ of the Pauli group such that
	\begin{enumerate}
		\item
		$S$ is abelian,
		\item
		$S$ does not contain $-\mathbb{1}$,
		\item
		$S$ has cardinality\footnote{
  There is a theory that generalizes stabilizer states to so-called \emph{stabilizer codes} \cite{gottesman_stabilizer_1997,nielsen_quantum_2010} and, for that purpose, drops the restriction on the cardinality of $S$.
  However, we do not require these concepts in the present paper and thus stick to the more restrictive definition.
		}
		$|S|=2^n$.
	\end{enumerate}
\end{definition}

Because a stabilizer group $S$ is abelian, there is an eigenbasis common to all $\mt{W} \in S$.
It turns out that this basis is unique (up to phase factors) and given by stabilizer states.
Every stabilizer state arises this way. 

In fact, it suffices to consider the joint $(+1)$-eigenspace:

\begin{proposition}
	There is a one-one correspondence between stabilizer states and stabilizer groups.

	Given a stabilizer state $\psi\in\CC^{2^n}$, the associated stabilizer group is
	\begin{equation*}
		S = \{ \mt{W} \subset \mathcal{P}_n, \,|\, \mt{W} \psi = \psi\}.
	\end{equation*}

	Given a stabilizer group $S\subset\mathcal{P}_n$, a projection onto the associated stabilizer state is
	\begin{equation*}
		\psi \psi^*
		=
		2^{-n} \sum_{\mt{W} \in S} \mt{W}.
	\end{equation*}
\end{proposition}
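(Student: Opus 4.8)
The plan is to analyze the two maps in the statement separately and then verify that they are mutually inverse. Throughout I would use the standard structural properties of Pauli operators: each $\mt{W}_a$ is Hermitian and unitary with $\mt{W}_a^2=\id$, any two of them either commute or anticommute, their products obey $\mt{W}_a\mt{W}_b=\pm\mt{W}_{a+b}$, and they are trace-orthogonal with $\tr(\mt{W}_a)=2^n\delta_{a,0}$. First I would show that $P:=2^{-n}\sum_{\mt{W}\in S}\mt{W}$ is a rank-one orthogonal projector for any stabilizer group $S$. A preliminary observation is that every element of $S$ has the form $\pm\mt{W}_a$: writing a group element as $i^j\mt{W}_a$, its square equals $(-1)^j\id$, and since $S$ is a group that avoids $-\id$ the exponent $j$ must be even. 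In particular each element of $S$ is Hermitian, so $P$ is Hermitian. Idempotency $P^2=P$ is immediate from the group property via the rearrangement $\sum_{\mt{W}'\in S}\mt{W}\mt{W}'=\sum_{u\in S}u$, valid for every fixed $\mt{W}\in S$. Finally $\tr(P)=2^{-n}\sum_{\mt{W}\in S}\tr(\mt{W})=1$, because the only nonvanishing trace comes from $\id\in S$ while $-\id\notin S$. Hence $P=\psi\psi^*$ for a unit vector $\psi$, which is the claimed projection formula.

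Next I would treat the opposite map. For any nonzero $\psi$ the stabilizer $\mathrm{Stab}(\psi)=\{\mt{W}\in\mathcal{P}_n:\mt{W}\psi=\psi\}$ is abelian and avoids $-\id$: if $\mt{W},\mt{W}'\in\mathrm{Stab}(\psi)$ anticommuted, then $\psi=\mt{W}\mt{W}'\psi=-\mt{W}'\mt{W}\psi=-\psi$, forcing $\psi=0$; and $-\id\notin\mathrm{Stab}(\psi)$ since $-\psi\neq\psi$. To glue the two maps together, note that the same rearrangement gives $\mt{W}P=P$ for every $\mt{W}\in S$, so (recalling $P\psi=\psi$) we get $\mt{W}\psi=\mt{W}P\psi=P\psi=\psi$ and therefore $S\subseteq\mathrm{Stab}(\psi)$. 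Moreover $S$ is recoverable from $P$, since trace-orthogonality yields $\tr(\mt{W}_a P)\in\{0,\pm1\}$ with the value $\pm1$ occurring exactly when $\pm\mt{W}_a\in S$; this makes $S\mapsto P$ injective.

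The step I expect to carry the real weight is the cardinality constraint $|S|=2^n$ and, intertwined with it, the reconciliation of this implicit description with the explicit basis expansion of Definition~1. For the upper bound I would pass to the symplectic picture: modulo signs, an abelian subgroup corresponds to a subspace of $\ZZ_2^{2n}$ on which the commutation form vanishes, i.e.\ an isotropic subspace, and these have dimension at most $n$; thus $|\mathrm{Stab}(\psi)|\le 2^n$. Together with $S\subseteq\mathrm{Stab}(\psi)$ and $|S|=2^n$ this promotes the inclusion to the equality $\mathrm{Stab}(\psi)=S$, closing the bijection at the level of groups. The genuinely laborious part is matching the two notions of \emph{state}: one must check that the one-dimensional range of $P$ is spanned by a vector of the prescribed form $(A,q,l)$, and conversely that each explicit vector of Definition~1 has a stabilizer of full size $2^n$—equivalently, that its joint $(+1)$-eigenspace is exactly one-dimensional, which is precisely where the full rank $2^n$ is needed (note that an abelian, $-\id$-free group of size $2^k$ only forces the eigenspace to have dimension $2^{n-k}$). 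I would settle this either by a direct computation, reading the support $A$, the quadratic form $q$, and the linear form $l$ off a diagonalization of a generating set of $S$ in canonical form $\pm\mt{W}_a$, or, more economically, by a counting argument that equates the number of order-$2^n$ stabilizer groups with the number of admissible triples $(A,q,l)$, so that the injection established above is forced to be a bijection. This is bookkeeping rather than conceptual difficulty, but it is where the effort concentrates.
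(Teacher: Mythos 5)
The paper offers no proof of this proposition; it is presented as a summary of known facts with pointers to the literature, so there is no in-paper argument to compare against. Judged on its own terms, your proposal is essentially the standard argument and its worked-out parts are correct: the observation that every element of $S$ has the form $\pm \mt{W}_a$ (ruling out imaginary phases via the $-\id$ condition), the verification that $P = 2^{-n}\sum_{\mt{W}\in S}\mt{W}$ is Hermitian, idempotent by the group rearrangement trick, and of unit trace, hence a rank-one projector; the proof that $\mathrm{Stab}(\psi)$ is abelian and $-\id$-free; the inclusion $S \subseteq \mathrm{Stab}(\psi)$ via $\mt{W}P = P$; the recovery of $S$ from $P$ by trace-orthogonality; and the maximality argument $|\mathrm{Stab}(\psi)| \le 2^n$ via isotropy of the associated subspace of $\ZZ_2^{2n}$ (noting that at most one sign per $a$ can occur). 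This already establishes the bijection between stabilizer groups and the rank-one projectors they generate.

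The one genuine gap is the part you yourself flag: the proposition's claim of a correspondence with \emph{stabilizer states} requires showing that the range of $P$ is spanned by a vector of the explicit form $(A,q,l)$ from Definition~1, and conversely that every such vector has a stabilizer of full cardinality $2^n$ (equivalently, a one-dimensional joint $(+1)$-eigenspace). You correctly identify that an abelian, $-\id$-free subgroup of order $2^k$ only pins down a $2^{n-k}$-dimensional eigenspace, so the full rank $2^n$ is essential, and you propose two viable routes (direct diagonalization of a canonical generating set, or a counting argument matching the number of stabilizer groups against the number of admissible triples). Either route works, but as written this step is a plan rather than a proof; the counting route in particular needs care because distinct triples $(A,q,l)$ can yield the same vector up to phase, so the count of states is not literally the count of triples. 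Since the paper itself treats the whole proposition as citable background, this is an acceptable level of completeness, but be aware that the identification with Definition~1 is where the actual content of the ``one-one correspondence'' resides.
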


In quantum information theory, stabilizer states are usually introduced this way, and not using the basis representation we gave in the previous section. 
This explains the name.

\subsection{The symplectic connection}

There is a more abstract way to look at stabilizer states in terms of certain objects in discrete symplectic vector spaces (c.f.\ e.g.\ \cite{gross_hudson_2006, kueng_qubit_2015} and references therein).
We will briefly introduce this connection next.

Symplectic structures appear in the composition law and the commutation relation of Pauli operators.
To explain this, let $\mt{J}$ be the $2n\times 2n$ block-diagonal matrix with $n$ blocks of $\left(\begin{smallmatrix} 0 &1\\ 1 &0 \end{smallmatrix}\right)$ on the diagonal. 
Then
\begin{equation*}
	[\vc{a},\vc{b}]=\vc{a}^{\mathrm{T}} \mt{J} \vc{b}
\end{equation*}
defines a \emph{symplectic form} on $\ZZ_2^{2n}$.
One can then verify that the communtation relation 
\begin{eqnarray}
	\mt{W}_a \mt{W}_b &=&(-1)^{[ a,b]}\mt{W}_b \mt{W}_a \label{eq:pauli_commutation_relation}
\end{eqnarray}
holds.
With a slight abuse of notation, the group law of the Pauli group can be written as
\begin{eqnarray}
	\mt{W}_a \mt{W}_b &=& i^{[a,b]} \mt{W}_{a+b}, \label{eq:pauli_group_law} 
\end{eqnarray}
where the arithemtic in the exponent of $i$ is to be performed modulo $4$ (as opposed to in $\ZZ_2$).

We can use these relations to analyze the structure of stabilizer groups.
Let $S\subset\mathcal{P}_n$ be a stabilizer group.
It is of the form
\begin{equation}\label{eq:stabilizergroup}
	S = \{ (-1)^{s(a)} \mt{W}_a \,|\, a \in M \}
\end{equation}
for some set $M\subset \ZZ_2^{2n}$ and some function $s: M \to \ZZ_2$.
Pauli operators with complex coefficients $\pm i\,\mt{W}_a$ cannot occur inside a stabilizer group, for else their square
\begin{equation*}
	\big( \pm i\,\mt{W}_a \big)^2 = - \mathbb{1}
\end{equation*}
would also be an element of the group, contrary to Definition~\ref{def:stabilizergroup}.

Because $S$ is a group and because (\ref{eq:pauli_group_law}) states that composition of Pauli operators $W_a W_b$ corresponds to addition $a+b$ of their indices, it follows that $M$ is closed under addition and hence a subspace of $\ZZ_2^{2n}$. 
The fact that $S$ is abelian and the commutation relation (\ref{eq:pauli_commutation_relation}) together imply that the symplectic form $[\cdot, \cdot]$ vanishes on $M$. 
Such spaces are called \emph{isotropic} in symplectic geometry.
Finally, the fact that $|S|=2^n$ means that $M$ has dimension $n$ as a subspace of $\ZZ_2^{2n}$.
As isotropic subspaces can have at most half the dimension of the ambient vector space, this means that $M$ is a \emph{maximal isotropic subspace}, or a \emph{Lagrangian subspace}.

Next, we turn to the phase function $s: M \to \ZZ_2$ defined in (\ref{eq:stabilizergroup}).
Its value can be chosen freely on a basis $\{b_1, \dots, b_n\}$ of $M$.
This choice gives rise to a generating set 
\begin{equation*}
	S = 
	\langle
		\{
			(-1)^{s(b_1)} W_{b_1},
			\dots,
			(-1)^{s(b_n)} W_{b_n}
		\}
	\rangle,
\end{equation*}
which extends $s$ uniquely to all of $M$.
In this way, one obtains $2^n$ different stabilizer groups for any given $M$ and one can show that this set does not depend on the choice of basis for $M$.

The $2^n=d$ stabilizer states associated to any given Lagrangian subspace $M$ turn out to form an orthonormal basis for $\CC^d$.
Thus, the set of stabilizer states can be partitioned into disjoint orthonormal bases.

To summarize:

\begin{proposition}
	A stabilizer group $S\subset\mathcal{P}_n$ can be specified by the following data:
	\begin{enumerate}
		\item
		A Lagrangian subspace $M\subset\ZZ_2^{2n}$,
		\item
		a phase function $s: M \to \ZZ_2$, which can be freely chosen on a basis $\{b_1, \dots, b_n\}$ of $M$.
	\end{enumerate}
	The stabilizer group $S$ is then generated by the $n$ Pauli operators $(-1)^{s(b_k)} W_{b_k}$ for $k=1, \dots, n$.
\end{proposition}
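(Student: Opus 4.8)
The plan is to establish the claimed correspondence in two directions: first to extract the data $(M,s)$ from a given stabilizer group $S$, and then to verify that any admissible choice of $(M,s)$ reproduces a legitimate stabilizer group in the sense of Definition~\ref{def:stabilizergroup}. Much of the forward direction has already been sketched in the paragraphs preceding the statement, so the remaining work lies chiefly in the converse and in confirming that the phase function is genuinely free on a basis while being uniquely determined everywhere else.

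For the forward direction I would first argue that every element of $S$ has a real coefficient. Any element is of the form $i^j \mt{W}_a$; if $j$ were odd, squaring would give $(i^j \mt{W}_a)^2 = i^{2j}\mt{W}_a^2 = -\mathbb{1}$ (using $\mt{W}_a^2=\mathbb{1}$), which is excluded by condition~2 of Definition~\ref{def:stabilizergroup}. Hence $S=\{(-1)^{s(a)}\mt{W}_a : a\in M\}$ with $M:=\{a : (-1)^{s(a)}\mt{W}_a\in S \text{ for some sign}\}$. The same squaring argument shows that a given label $a$ cannot occur with both signs, since the product of the two elements would equal $-\mathbb{1}\in S$; thus $s:M\to\ZZ_2$ is well-defined. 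Closure of $S$ under multiplication together with the group law \eqref{eq:pauli_group_law} forces $M$ to be closed under addition, hence a subspace; the abelian hypothesis and \eqref{eq:pauli_commutation_relation} force $[a,b]=0$ for all $a,b\in M$, so $M$ is isotropic; and $|S|=2^n$ pins down $\dim M=n$, which makes $M$ maximal isotropic, i.e.\ Lagrangian. Because $[a,b]=0$ on $M$, the factor $i^{[a,b]}$ in \eqref{eq:pauli_group_law} is trivial, and comparing $(-1)^{s(a)+s(b)}\mt{W}_{a+b}$ with the canonical representative $(-1)^{s(a+b)}\mt{W}_{a+b}$ gives $s(a+b)=s(a)+s(b)$. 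Thus $s$ is $\ZZ_2$-linear, determined by its values on any basis $\{b_1,\dots,b_n\}$ of $M$, and intrinsic to $M$ rather than to the chosen basis.

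For the converse, given a Lagrangian $M$, a basis $\{b_1,\dots,b_n\}$, and freely prescribed values $s(b_1),\dots,s(b_n)\in\ZZ_2$, I would set $S:=\langle (-1)^{s(b_k)}\mt{W}_{b_k} : k=1,\dots,n\rangle$ and check the three defining properties. Since $M$ is isotropic, $[b_j,b_k]=0$, so by \eqref{eq:pauli_commutation_relation} the generators commute and $S$ is abelian. Each generator is an involution, so every element of $S$ is a product of distinct generators indexed by some subset $I\subseteq\{1,\dots,n\}$, which by repeated use of \eqref{eq:pauli_group_law} (again with vanishing symplectic phases) collapses to $\pm\mt{W}_a$ with $a=\sum_{k\in I}b_k$. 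As the $b_k$ are linearly independent, distinct $I$ give distinct $a$ and hence distinct operators $\mt{W}_a$, so $|S|=2^n$. Finally, $-\mathbb{1}=\pm\mt{W}_0$ can only arise from $a=0$, i.e.\ $I=\emptyset$, which yields the identity with a $+$ sign; hence $-\mathbb{1}\notin S$. This checks all conditions and shows the generators realize precisely the linear extension of $s$ from the basis to $M$.

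The step I expect to require the most care is the phase bookkeeping in both directions: the group law \eqref{eq:pauli_group_law} carries factors $i^{[a,b]}$ whose exponent must be tracked modulo $4$, and the whole argument hinges on the fact that isotropy of $M$ trivializes these factors, leaving only the $\pm1$ signs encoded by $s$. The exclusion of $-\mathbb{1}$ — both as the obstruction to two signs per label in the forward direction and as the property to be verified in the converse — is where the hypotheses of Definition~\ref{def:stabilizergroup} do the essential work, and it is the point at which one must be most careful that no hidden sign or complex phase is introduced.
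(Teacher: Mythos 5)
Your overall route---extracting $(M,s)$ from $S$ via the exclusion of imaginary coefficients, isotropy from commutativity, and the cardinality count, then verifying the three conditions of Definition~\ref{def:stabilizergroup} for the group generated by the $(-1)^{s(b_k)}\mt{W}_{b_k}$---is the same as the paper's (which proves the proposition in the prose leading up to it), and most of your steps are sound, including the observation that a label cannot occur with both signs and the verification that $-\mathbb{1}\notin S$ in the converse. However, one step is genuinely false: the claim that isotropy of $M$ trivializes the phase in the group law, so that $s(a+b)=s(a)+s(b)$ and $s$ is $\ZZ_2$-linear. Isotropy is the statement $[a,b]=0$ \emph{in} $\ZZ_2$, whereas the exponent of $i$ in Eq.~\eqref{eq:pauli_group_law} must be tracked modulo $4$; for commuting Pauli operators the residual phase is $\pm 1$, not necessarily $+1$. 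Concretely, for $n=2$ let $a$ be the label of $X\otimes X$ and $b$ the label of $Z\otimes Z$. These commute, yet $(X\otimes X)(Z\otimes Z)=(XZ)\otimes(XZ)=(-iY)\otimes(-iY)=-\,Y\otimes Y=-\mt{W}_{a+b}$. The stabilizer group $\langle X\otimes X,\,Z\otimes Z\rangle$ of the Bell state therefore contains $-\,Y\otimes Y$, so $s(a)=s(b)=0$ while $s(a+b)=1$: $s$ is not linear, and the generators do not realize ``the linear extension of $s$'' as the last sentence of your converse asserts. (This non-linearity is not incidental---it is the same phenomenon that forces the \emph{quadratic} forms in the basis representation of stabilizer states in Section~\ref{sec:stabsElementary}.)

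Fortunately the proposition does not need linearity, and your converse already supplies what is actually required without it: the $2^n$ products of distinct generators exhaust $S$, each collapsing to $\pm\mt{W}_a$ with $a=\sum_{k\in I}b_k$, and linear independence of $b_1,\dots,b_n$ guarantees that each $a\in M$ occurs exactly once and that $\mt{W}_0$ arises only from the empty product with a $+$ sign, so $-\mathbb{1}\notin S$ and $|S|=2^n$. The sign attached to $\mt{W}_a$ \emph{is} the unique extension of $s$ from the basis to $M$; it simply fails to be additive. Delete the linearity claim and the phrase ``with vanishing symplectic phases,'' and instead carry the $\pm 1$ produced by Eq.~\eqref{eq:pauli_group_law} for commuting elements; the rest of your argument goes through unchanged and coincides with the paper's.
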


\subsection{Symmetries: The Clifford group}

Most important for our analysis below is the fact that the set of stabilizer states affords a large, transitive symmetry group. 
To introduce it, we define the \emph{Clifford group} as follows\footnote{
We remark that the term \emph{Clifford group} sometimes refers to a minor variant of the group introduced here. 
Indeed, note that if $\mt{U} \in\Cli_n$, then so is $e^{i\phi} \mt{U}$ for every phase $\phi$. 
For our purposes, these phase factors are unimportant.
But often, it is desirable to work with a version of the Clifford group that includes as few phases as possible in the sense that its intersection with the center $Z=\{ e^{i\phi} \mathbb{1} \}_{\phi\in\mathbb{R}}$ of $U(2^n)$ is the smallest.
One can find explicit generators for a group $\Cli_n'$ which is identical to $\Cli_n$ up to phases in that $\Cli_n'/Z = \Cli_n/Z$ and such that $\Cli_n'\cap Z = \{ i, -1, 1, -i \}$, which is minimal \cite{other_paper}.
We also remark that the term \emph{Clifford group} is sometimes used to refer to the cover group of the orthogonal group that is given by the invertible elements inside a \emph{Clifford algebra}. 
Despite this unfortunate coincidence in names, there seems to be no connection between this group and the one used here.
}.

\begin{definition}
	The \emph{Clifford group} $\Cli_n$ is the  \emph{normalizer} of $\mathcal{P}_n$ inside of $U(2^n)$. 
\end{definition}

In other words, $\Cli_n$ is the set of unitaries $\mt{U}$ such that, 
for all Pauli operators $\mt{W}_a$, it holds that
\begin{equation*}
	\mt{U} \mt{W}_a \mt{U}^\dagger \in \mathcal{P}_n.
\end{equation*}

Because the Clifford group maps elements of the Pauli group to elements of the Pauli group under conjugation, it also maps stabilizer groups to stabilizer groups.
By the preceding sections, this means that the Clifford group maps stabilizer states onto stabilizer states.
That action is known to be transitive---i.e.\ the set of stabilizer states forms an orbit under the Clifford group.

From Equation~(\ref{eq:pauli_commutation_relation}), it follows that for any pair of Pauli operators $\mt{W}_a, \mt{W}_b$ it holds that
\begin{equation*}
	\mt{W}_a \mt{W}_b \mt{W}_a^\dagger = \pm \mt{W}_b.
\end{equation*}
Thus the Pauli group forms a subgroup of the Clifford group $\mathcal{P}_n \subset \Cli_n$.

More interesting is the quotient of the Clifford group up to phases and the Pauli group.
To explain it, note that since an action by conjugation preserves group laws and since the group law (\ref{eq:pauli_group_law}) of the Pauli group is tied to both the linear structure and the symplectic form on $\ZZ_2^{2n}$, it seems plausible that discrete symplectic groups might play a role.
This is indeed true.
Let $\Sp(2n,\ZZ_2)$ be the \emph{symplectic group} composed of all $2n\times 2n$ matrices  $\mt{F}$ over $\ZZ_2$ that satisfy the relation
\begin{equation}
\mt{F}\mt{J}\mt{F}^T=\mt{J}.
\end{equation}
Then we have:

\begin{proposition}
  For every $\mt{U}\in \Cli_n$, there is a unique symplectic matrix $\mt{F}\in\Sp(2n,\ZZ_2)$ such that
  \begin{equation}\label{eqn:symplecticaction}
  \mt{U} \mt{W}_a \mt{U}^\dagger = (-1)^{f(a)} \mt{W}_{Fa}\qquad \forall a\in \bbF_2^{2n},
  \end{equation}
  where  $f$ is a  function from $\bbF_2^{2n}$ to $\bbF_2$. Conversely, for each symplectic matrix $\mt{F} \in\Sp(2n,\ZZ_2)$ there exists a $\mt{U}\in \Cli_n$ and a suitable function $f$ such that the above equation is satisfied.  
\end{proposition}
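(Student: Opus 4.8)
The plan is to prove the two halves of the correspondence separately. For the forward direction, I start from an arbitrary $\mt{U}\in\Cli_n$ and study the induced map on Pauli labels. Since $\mt{U}$ normalizes $\mathcal{P}_n$, for each $a\in\ZZ_2^{2n}$ we have $\mt{U}\mt{W}_a\mt{U}^\dagger\in\mathcal{P}_n$, so by the explicit description of $\mathcal{P}_n$ it must equal $i^{j(a)}\mt{W}_{g(a)}$ for some $g(a)\in\ZZ_2^{2n}$ and $j(a)\in\ZZ_4$. I would first argue that the label map $g:\ZZ_2^{2n}\to\ZZ_2^{2n}$ is $\ZZ_2$-linear: applying conjugation to the group law~(\ref{eq:pauli_group_law}) and comparing indices forces $g(a+b)=g(a)+g(b)$. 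Hence $g$ is a linear map $\mt{F}$ on $\ZZ_2^{2n}$. To see that $\mt{F}$ is symplectic, I conjugate the commutation relation~(\ref{eq:pauli_commutation_relation}): conjugation is an automorphism and preserves signs, so $(-1)^{[a,b]}=(-1)^{[\mt{F}a,\mt{F}b]}$ for all $a,b$, i.e.\ $[\mt{F}a,\mt{F}b]=[a,b]$, which is exactly $\mt{F}\mt{J}\mt{F}^T=\mt{J}$. The next point is that the complex phases $i^{j(a)}$ must in fact be real signs $(-1)^{f(a)}$: since $\mt{W}_a$ is Hermitian and unitary conjugation preserves Hermiticity, $\mt{U}\mt{W}_a\mt{U}^\dagger$ is Hermitian, while $i^{j(a)}\mt{W}_{g(a)}$ is Hermitian only when $j(a)\in\{0,2\}$, giving $j(a)\in 2\ZZ_4$ and thus a sign $(-1)^{f(a)}$. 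This establishes~(\ref{eqn:symplecticaction}); uniqueness of $\mt{F}$ is immediate because the Pauli operators $\{\mt{W}_a\}$ are linearly independent, so the labels $\mt{F}a$ are determined by the action.

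For the converse, I fix a symplectic matrix $\mt{F}\in\Sp(2n,\ZZ_2)$ and must produce a Clifford unitary $\mt{U}$ realizing it. The cleanest route is to exhibit $\mt{U}$ on a generating set of $\Sp(2n,\ZZ_2)$ and extend multiplicatively. Concretely, I would invoke the standard fact that $\Sp(2n,\ZZ_2)$ is generated by a small family of elementary symplectic transformations---transvections, or the discrete analogues of Hadamard, phase, and CNOT-type gates---each of which is known to be realized by an explicit Clifford unitary through~(\ref{eqn:symplecticaction}). Writing $\mt{F}$ as a product of these generators $\mt{F}=\mt{F}_1\cdots\mt{F}_k$ and setting $\mt{U}=\mt{U}_1\cdots\mt{U}_k$, the forward direction shows that $\mt{U}$ induces the composite symplectic map $\mt{F}_1\cdots\mt{F}_k=\mt{F}$, with some accompanying sign function $f$; this $\mt{U}$ lies in $\Cli_n$ since the Clifford group is closed under multiplication.

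The main obstacle is the converse, and within it the generation step: one needs that every element of $\Sp(2n,\ZZ_2)$ factors into generators that are \emph{individually} known to lift to Clifford unitaries. I expect to handle this by citing the classical generation result for symplectic groups together with the explicit qubit gate realizations. A secondary technical point, which I would treat carefully but not belabor, is that the phase function $f$ in~(\ref{eqn:symplecticaction}) need not be linear---the composition law for two Cliffords shows $f$ picks up a cocycle-type correction governed by the symplectic form---but since the statement only asserts the existence of \emph{some} suitable $f$, this does not affect the argument. The linearity of the label map and the reduction of phases to signs are routine once the group law~(\ref{eq:pauli_group_law}) and Hermiticity are invoked, so the real content is entirely on the surjectivity side.
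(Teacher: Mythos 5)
The paper does not actually prove this proposition: it is stated as a summary of standard facts about the Clifford group, with the reader referred to the cited literature (e.g.\ the stabilizer-formalism and symplectic-geometry references). So there is no in-paper proof to compare against; your proposal has to be judged on its own terms.

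On those terms it is sound and follows the standard route. The forward direction is complete: normality gives $\mt{U}\mt{W}_a\mt{U}^\dagger = i^{j(a)}\mt{W}_{g(a)}$; linear independence of the $\mt{W}_a$ makes $g$ well defined and makes $\mt{F}$ unique; conjugating the group law forces additivity of $g$; conjugating the commutation relation gives $[\mt{F}a,\mt{F}b]=[a,b]$, which together with the nondegeneracy of $\mt{J}$ over $\ZZ_2$ forces $\mt{F}$ invertible and hence symplectic in the sense $\mt{F}\mt{J}\mt{F}^T=\mt{J}$; and the Hermiticity argument correctly eliminates the phases $\pm i$. Your remark that the sign function $f$ need not be linear, and that only existence of some $f$ is claimed, is the right thing to say. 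The converse is the only place where real content is outsourced: you reduce surjectivity onto $\Sp(2n,\ZZ_2)$ to the classical generation of the symplectic group by transvections (equivalently, Hadamard-, phase- and CNOT-type generators) together with explicit Clifford lifts of each generator. That is a legitimate citation of standard results, and the multiplicativity argument that assembles $\mt{U}=\mt{U}_1\cdots\mt{U}_k$ from the lifts is correct. I see no gap.
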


Note that  Clifford unitaries of the form $e^{i\phi} \mt{U} \mt{W}_a$ for $\phi\in\mathbb{R}$ and $a\in \bbF_2^{2n}$ induce the same symplectic transformation. 
In fact, the quotient of $\Cli_n$ up to the Pauli group and phase factors is isomorphic to $\Sp(2n,\ZZ_2)$.

Thus, not only does the set of stabilizer states afford a transitive symmetry group, it is also true that the group has a geometric interpretation, in terms of symmetries of symplectic vector spaces.
The geometric description of the Clifford group enables some explicit calculations that are crucial ingredients to our main result: Theorem~\ref{thm:main}. 
Indeed, this statement depends on an analysis of the representation theory of tensor powers of the Clifford group \cite{other_paper}, which in turn relies on counting arguments involving orbits of tuples of vectors $\langle v_1, \dots, v_k \rangle \in (\ZZ_2^{2n})^{\times k}$  under the action of $\Sp(2n,\ZZ_2^{2n})$ \cite{zhu_multiqubit_2015}.

We feel that this lends credence to our earlier claim that their ``rich geometric structure'' makes stabilizer states into an ideal model measurement ensemble.

\subsection{Related groups and uses in other fields}
\label{sec:related_groups}

The Pauli group discussed in this paper is strongly related to the \emph{Heisenberg groups} and their \emph{Weyl representations}.
These appear in a number of fields.
The literature on this subject is vast and seems, unfortunately, to be quite disconnected.
We will not describe a unifying theory here, but merely mention some examples and how they relate to this work.

A variant that is of importance in in quantum mechanics, functional analysis \cite{Foll89book}, and time-frequency analysis \cite{grochenig2013foundations} involves operators acting on $\mathrm{L}^2(\mathbb{R})$, the set of square-integrable functions on the real line.
One way to approach it is to start with operators $P,Q$ that satisfy the \emph{canonical commutation relations}
\begin{equation*}
	[Q,P]=i\,\mathbb{1}.
\end{equation*}
This relation makes the linear space spanned by $P,Q$ into a Lie algebra. 
The elements of the associated Lie group are sometimes referred to as \emph{Weyl operators} and parameterized as
\begin{equation*}
	W_{p,q} = e^{-i\frac12 pq} e^{ip Q} e^{i q P}
\end{equation*}
for $(p,q)\in\RR^2$.
The group law can be verified to be
\begin{equation*}
	W_{a} W_{b} = e^{i \frac12 [a,b]} w(a+b),
\end{equation*}
for $a,b\in\RR^2$ and $[\cdot,\cdot]$ the standard symplectic form on $\RR^2$.
This is clearly analogous to the corresponding law (\ref{eq:pauli_group_law}) for the Pauli group.
The normalizer of the Weyl operators -- i.e.\ the analogue of the Clifford group -- is often called the \emph{metaplectic group} and is related to the symplectic group $\Sp(\RR^2)$ \cite{Foll89book}.
In time-frequency analysis, Weyl operators are usually known as \emph{time-frequency shifts} and the metaplectic group is sometimes referred to as the \emph{oscillator group} \cite{grochenig2013foundations}.
The relatives of stabilizer states are complex Gaussian vectors that play an important role e.g.\ in quantum optics \cite{walls_quantum_2008,hudson_wigner_1974}.

The Weyl operators act on functions $\psi\in\mathrm{L}^2(\mathbb{R})$ as
\begin{equation}\label{eqn:weylaction}
	\big(W_{p,q}\psi\big)(x)
	=
	e^{-i\frac12 pq} 
	e^{ipx}
	\psi(x-q).
\end{equation}
This formula suggests a natural way to define discretized versions of the Weyl operators: 
Just re-interpret the numbers $p,q,x$ as elements of $\ZZ_d$ for some natural number $d$ to obtain versions of $\mt{W}_{p,q}$ acting on $\mathrm{L}^2(\ZZ_d) \simeq \CC^d$.
It turns out 
\cite{debeaudrap_linearized_2011, appleby_symmetric_2005}
that the theory becomes slightly cleaner if one also  changes the phase factor in (\ref{eqn:weylaction}) as
\begin{equation*}
	e^{-i\frac12 pq} \mapsto \tau^{pq}, 
	\qquad
	\tau := \mathrm{e}^{\pi i (d^2+1)/d}=(-1)^d \mathrm{e}^{\pi i /d}.
\end{equation*}
This procedure does, in fact, define unitary operators on $\CC^d$ (and recovers the Pauli operators $\bar{\mathcal{P}}_1$ when setting $d=2$).

These discrete Weyl-Heisenberg operators are usually introduced from a different point of view.
To state it, let $\{\vc{e}_1, \dots, \vc{e}_d\}$ be the standard basis of $\CC^d$ and define
\begin{eqnarray*}
	\mt{X}: \vc{e}_{k} &\mapsto& \vc{e}_{k+1},  \\
	\mt{Z}: \vc{e}_{k} &\mapsto& \omega_d^k\,\vc{e}_{k},
\end{eqnarray*}
where $\omega_d=e^{i 2\pi / d}=\tau^2$ is a $d$-th root of unity.
Then the discrete Weyl-Heisenberg operators can also be written as
\begin{equation}\label{eq:generalized}
	\mt{W}_{p,q} = \tau^{pq} \mt{X}^q \mt{Z}^p.
\end{equation}
These operators are also known as \emph{generalized Pauli operators}.

Once again, one can define an associated Clifford group as the normalizer of these $W_{p,q}$'s and introduce stabilizer groups and states that are compatible with this structure.

Further related groups and applications come from coding theory. 
With every binary code, one can associate the \emph{weight enumerator polymonial} whose coefficients encode the number of codewords of a given weight.
For certain classes of self-dual codes, one can fairly easily see that these polynomials are invariant under an associated symmetry group \cite{macwilliams1977theory}.
The complex Clifford group used in this paper appears here, but also a real-valued variant, which is more commonly studied in this context \cite{nebe_self_2006}.
In our language, the real Clifford group arises as the normalizer of the group generated by the real Pauli operators.
This real group arises in many other contexts, e.g.\ as the symmetry group of the \emph{Barnes-Wall lattice}.
A good starting point to the literature covering this approach is the book \cite{nebe_self_2006} (in particular the Background section of their Chapter~6).

Given all these similarities, it is natural to ask whether the low-rank recovery results of the present paper can be adapted to these more general Clifford groups and stabilizer states.

For the set of Weyl operators that appear in (\ref{eq:generalized}), it seems clear that if similar results could be established, new techniques would have to be developed for this purpose.
Indeed, our proof relies crucially on the representation theory of the fourth tensor power of the Clifford group \cite{other_paper,helsen_representations_2016} to derive bounds on the $8$th moments of random vectors sampled from orbits.
It is known \cite{zhu_multiqubit_2015, webb_clifford_2015} that the representation theory of the particular Clifford group studied here behaves differently from its cousins already for the third tensor power.
What is more, there is a precise sense in which the moment bounds are worse for these more general stabilizer states: they fail to form a \emph{complex projective $3$-design} \cite{kueng_qubit_2015}. 
We refer to Section~\ref{sec:moments} below for an introduction of the design concept.

Deciding whether our results can be translated to the more general case despite these obstacles is an interesting open problem.

The situation might be better for the real Clifford group. 
Here, analogous representation-theoretic results to those proven in \cite{other_paper} in the complex case had been known for some time \cite{Runge96,nebe_self_2006}.

Deciding whether our results can be translated to the more general case remains an interesting open problem.

\section{Convex low-rank recovery and PhaseLift} \label{sec:convex}

Here, we briefly review some basic facts from the theory of convex low-rank recovery. 
We refer to Ref.~\cite{foucart_mathematical_2013} for a more thorough introduction.

Building on ideas from \emph{compressed sensing} \cite{foucart_mathematical_2013}, low rank matrix recovery aims at reconstructing unknown $d \times d$ matrices $\mt{X}$ of rank $r$ from few noisy linear measurements of the form
\begin{equation}
y_k = \tr \left( \mt{A}_k \mt{X} \right) + \epsilon_k \quad 1 \leq k \leq m. \label{eq:measurements}
\end{equation}
For simplicity, the present paper restricts attention to hermitian matrices $\mt{X}, \mt{A}_k \in H_d$ -- though we expect that standard constructions can be used to lift this assumption, see e.g.\  \cite{gross_recovering_2011}.

The measurement model \eqref{eq:measurements} can be written more succinctly as
\begin{equation*}
	\vc{y} = \mathcal{A}(\mt{X}) + \error,
\end{equation*}
where $\vc{y} = (y_1,\ldots,y_m)^T \in \mathbb{R}^m$ 
represents the measurement data, $\error = (\epsilon_1,\ldots,\epsilon_m )^T$ denotes additive noise corruption and $\mathcal{A} : H_d \to \mathbb{R}^m$ is the measurement operator
\begin{equation*}
\mathcal{A}(\mt{Z}) = \sum_{k=1}^m \mathrm{tr} \left( \mt{A}_k \mt{Z} \right) \vc{e}_k,
\end{equation*}
with $\vc{e}_1,\ldots,\vc{e}_m$ being the standard basis of $\mathbb{R}^m$.

For many measurement models, it has been proven that low-rank matrices can be recovered efficiently using a constrained nuclear norm minimization:
\begin{align}
  \underset{\mt{Z} \in H_d}{\textrm{minimize}} & \quad \left\| \mt{Z} \right\|_1 \label{eq:phaselift} \\
  \textrm{subject to} & \quad \left\| \mathcal{A}(\mt{Z}) - \vc{y} \right\|_{\ell_q} \leq \eta, \nonumber \quad 1 \leq q \leq \infty.
\end{align} 
Here, $\eta \geq \| \error \|_{\ell_q}$ is an upper bound on the noise corruption in \eqref{eq:measurements} and the nuclear norm $\| \mt{Z} \|_1 = \sum_{k=1}^d \left| \lambda_k (\mt{Z}) \right|$ corresponds to the $\ell_1$-norm of the vector of eigenvalues of $\mt{Z}$. 
Analytic reconstruction guarantees for low rank matrix reconstruction via \eqref{eq:phaselift} have been established for $m = C r d \mathrm{polylog}(d)$ sufficiently random measurements \cite{candes_power_2010,recht_guaranteed_2010,gross_recovering_2011,liu_universal_2011, foucart_mathematical_2013}.

Phase retrieval---i.e.\ the problem of recovering a complex vector $\vc{x} \in \mathbb{C}^d$ from measurements of the form \eqref{eq:phaseless_measurements}---can also be re-cast as a particular instance of matrix recovery:
\begin{equation}\label{eq:lifting}
y_k = \left| \langle \vc{a}_k, \vc{x} \rangle \right|^2 + \epsilon_k
= \tr \left( \vc{a}_k \vc{a}_k^* \vc{x} \vc{x}^* \right) + \epsilon_k.
\end{equation}
This matrix formulation---expressing the quadratic relations on $x$ in terms of linear relations on its outer product $\mt{X}=\vc{x}\vc{x}^*$---is called a \emph{lifting} \cite{balan_painless_2009, candes_phase_2013}. 
Because the unknown quantity is now represented by a rank-one matrix,
$\mt{X} = \vc{x} \vc{x}^*$, it is natural to  use the convex reconstruction protocol \eqref{eq:phaselift} for recovery.
This approach is called \emph{PhaseLift} and it has been shown that $m = C n \log (n) $ random Gaussian measurement vectors $\vc{a}_1,\ldots,\vc{a}_m \in \mathbb{C}^d$ suffice to guarantee that PhaseLift
recovers an unknown vector $\vc{x} \in \mathbb{C}^d$, up to a global phase factor, with high probability \cite{candes_exact_2013}.

The matrices $\mt{X}= \vc{x} \vc{x}^*$ associated with PhaseLift are not only rank-one, but also positive semidefinite: $\mt{X} \geq \mt{0}$.
Using this additional constraint, one can reduce PhaseLift to a feasibility problem \cite{demanet_stable_2014,candes_solving_2013}.
For instance, already $m = C n$ Gaussian measurements suffice to reconstruct any $\vc{x} \in \mathbb{C}^d$ via solving
\begin{equation} \label{eq:nnls}
	  \mt{Z}^\sharp = \underset{\mt{Z} \geq \mt{0}}{\textrm{argmin}} \left\| \mathcal{A}(\mt{Z}) - \vc{y} \right\|_{\ell_q}
\end{equation}
with $q=1$ \cite{candes_solving_2013}. 
This reconstruction has an added benefit: it does not require an a priori noise bound $\eta$ as additional input. The reconstruction error---measured in Frobenius norm $\| \mt{Z} \|_2 =\sqrt{ \tr \left( \mt{Z}^2 \right)}$---scales directly proportional to the true noise level \cite{candes_solving_2013}:
\begin{equation*}
\left\| \mt{Z}^\sharp - \vc{x}\vc{x}^* \right\|_2 \leq C_2 \frac{ \| \error \|_{\ell_1}}{ m}.
\end{equation*}
Going to back from matrices to vectors,
there exists a global phase $\phi \in [0,2 \pi[$ such that the largest eigenvector $\vc{z}^\sharp$ of $\mt{Z}^\sharp$ obeys
\begin{equation*}
\left\| \vc{z}^\sharp - \mathrm{e}^{i \phi} \vc{x} \right\|_{\ell_2} \leq C_2 \min \left\{ \| \vc{x} \|_{\ell_2}, \frac{ \| \error \|_{\ell_1}}{m \| \vc{x} \|_{\ell_2}} \right\}
\end{equation*}
c.f.\ 
 \cite[Theorem 1.3]{candes_solving_2013}.

These PhaseLift results can be generalized to cover recovery of hermitian matrices with higher rank. For instance, \cite[Theorem 2]{kueng_low_2015} implies that with high probability $m \geq C r n$ random Gaussian measurements $\mt{A}_k = \vc{a}_k \vc{a}_k^*$ suffice to reconstruct any hermitian rank-$r$ matrix $\mt{X}$ via \eqref{eq:phaselift}. 
If $\mt{X} \in H_d$ is also positive semidefinite, then reconstruction may be done via \eqref{eq:nnls} with $1 \leq q \leq \infty$ \cite[Theorem 4]{kabanava_stable_2015}. 
This reconstruction is not only stable with respect to noise corruption, but also robust towards the model assumption of low rank. The minimizer $\mt{Z}^\sharp$ of \eqref{eq:nnls} obeys
\begin{equation*}
\| \mt{Z}^\sharp - \mt{X} \|_2 \leq \frac{ C_2}{\sqrt{r}} \sigma_r \left( \mt{X} \right)+C_3 \frac{ \| \error \|_{\ell_q}}{m^{\frac{1}{q}}},
\end{equation*}
where 
\begin{equation}
\sigma_r \left( \mt{X} \right) = \inf \left\{ \| \mt{X} - \mt{Z} \|_1: \; \mathrm{rank}(\mt{Z}) = r \right\} \label{eq:approximation_error}
\end{equation}
 is the nuclear norm error of the best rank-$r$ approximation to $\mt{X}$.

\section{Results: Low-rank recovery from Clifford orbits}

In this work, we prove guarantees for low-rank recovery from Clifford orbits.

More precisely, set $d=2^n$ for some $n$, fix $\vc{z}\in\CC^d$ with $\|z\|_{\ell_2}=1$ and let $\vc{z}\vc{z}^*$ be the projection operator onto $z$.
The Clifford orbits we are concerned with are the sets
\begin{equation*}
	\Cli_n \cdot \vc{z}\vc{z}^*
	=
	\{
		\mt{U} \vc{z}\vc{z}^* \mt{U}^\dagger
		\,|\, \mt{U} \in \Cli_n
	\}.	
\end{equation*}
Sometimes, we find it advantageous 
to talk about the vectors $\mt{U}\vc{z}$, rather than their projections.
The two points of view are consistent if we set
\begin{equation*}
	\Cli_n \cdot \vc{z}
	= \{ \mt{U} \vc{z} \,|\, \mt{U} \in \Cli_n \} / \{e^{i\phi}\}_\phi,
\end{equation*}
where the quotient means that if two vectors differ by a phase $\mt{U}_1 \vc{z}= e^{i\phi} \mt{U}_2\vc{z}$, we retain only one of them (equivalently: we work in projective space). 
These orbits are always finite.

The quality of the recovery guarantee depends on a measure of sparsity of the expansion coefficients of $\vc{z}\vc{z}^*$ with respect to the Pauli basis.
To state that measure, consider a general hermitian matrix $\mt{Z}$.
Then
\begin{equation*}
	\mt{Z} = \sum_{a\in\ZZ_2^n} 2^{-n/2} (\tr \mt{W}_a \mt{Z})\,\mt{W}_a.
\end{equation*}
Thus, up to normalization constants, the expansion coefficients are given by
the \emph{characteristic function}\footnote{
	To explain the origina of the terminology,
	recall that in probability theory, the \emph{characteristic function} is the Fourier transform of a probability distribution. The quantum analogue of a distribution is a \emph{density operator} and the expansion of a density operator with respect to the Weyl-Heisenberg group plays an analogues role, in quantum probability theory, to the classical characteristic function (cf.\ e.g.\ \cite{walls_quantum_2008,gross_hudson_2006})
} 
$\Xi(Z): \ZZ_2^{2n}\to\mathbb{R}$ of $Z$. 
It is defined as
\begin{equation}\label{eq:characteristic}
	\Xi (Z)(a):= \tr \left( \mt{W}_a \mt{Z} \right), \quad a \in \ZZ_2^{2n}.
\end{equation}
(The same function is called the \emph{spreading function} in time-frequency analysis \cite{shenoy_weyl_1994, kozek_spectral_1996, grochenig2013foundations, pfander-chapter}.)

Our bounds turn out to depend on the 
$\ell_4$-norm of this function:
\begin{equation}
	\left\|\Xi \left( \vc{z} \vc{z}^*\right)\right\|_{\ell_4}
	=
	\Big(
		\sum_{a \in \ZZ_2^n} \big(\tr \mt{W}_a \vc{z} \vc{z}^*\big)^4
	\Big)^{1/4}. \label{eq:characteristic_function}
\end{equation}
The definition of the Clifford group as the normalizer of the Paulis implies that this quantity is constant along Clifford orbits.
Smaller values of $\|\Xi \left( \vc{z}\vc{z}^*\right)\|_{\ell_4}$ turn out to lead to better recovery guarantees.
At the same time,
the number of non-zero coefficients of the characteristic function is lower-bounded by 
$d^2\,\|\Xi \left( \vc{z}\vc{z}^*\right)\|_{\ell_4}^{-4}$.
In this sense, Clifford orbits are connected with good recovery guarantees only if their elements have a ``spread out'' or ``dense'' characteristic function.

More precisely, the following quantity
\begin{equation}
\kappa (\vc{z},r ) := \left( \frac{r}{d} \left\| \Xi \left( \vc{z} \vc{z}^*\right)\right\|_{\ell_4}^4 + 1 \right)^2. \label{eq:kappa}
\end{equation}
appears in the sampling rate of our main result.

\begin{theorem}[Main Theorem, general version]\label{thm:main}
  Let $d=2^n$ for some $n\in\mathbb{N}$, $\vc{z}\in\CC^d$ with $\|z\|_{\ell_2}=1$, and $1\leq r \leq d$.
  Choose
  \begin{equation}
  	m \geq C_1 \kappa (\vc{z},r) r d \log (d) \label{eq:sampling_rate}
  \end{equation}
  measurements $\mt{A}_k = \vc{a}_k \vc{a}_k^*$ independently and uniformly at random from the Clifford orbit $\Cli_n \cdot \vc{z}\vc{z}^*$.
  Then with probability at least $1-\mathrm{e}^{-\frac{\gamma m}{\kappa (\vc{z},r)}}$ any hermitian rank-$r$ matrix can be recovered from these measurements in the following sense:

  For every hermitian rank-$r$ matrix $\mt X$ and every $q\in[1,\infty]$, the minimizer $Z^\sharp$ of the convex optimization problem \eqref{eq:phaselift}  fulfills
\begin{equation}
  \| \mt{Z}^\sharp - \mt{X} \|_2 
  \leq    \frac{C_2}{\sqrt{r}} \sigma_r (\mt{X}) + C_3 \sqrt{ \kappa (\vc{z},r)} d m^{-\frac{1}{q}} \eta
  \label{eq:stability}
\end{equation}
  where $\eta$ is the noise bound from \eqref{eq:phaselift} and the approximation error $\sigma_r (\mt{X})$ was defined in \eqref{eq:approximation_error}. 
  Here, $C_1,C_2,C_3$ denote sufficiently large constants and the constant $\gamma$ is sufficiently small.
\end{theorem}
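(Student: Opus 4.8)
The plan is to follow the route that has become standard for rank-one measurement ensembles: reduce the recovery guarantee to a single lower bound on the sampling operator restricted to the relevant cone, and then verify that bound with Mendelson's small-ball method, feeding it moment estimates that come from the representation theory of the Clifford group. \emph{Reduction to a conic lower bound.} First I would invoke the deterministic theory of stable low-rank recovery via null-space properties \cite{kabanava_stable_2015,kueng_low_2015}. There, the error bound \eqref{eq:stability} for the minimizer of \eqref{eq:phaselift} is known to follow once the sampling operator is well-conditioned from below on the cone of (approximately) rank-$2r$ matrices, i.e.
\[
\frac1m\|\mathcal A(\mt Z)\|_{\ell_1}\;\geq\;\tau\,\|\mt Z\|_2
\qquad\text{for all }\mt Z=\mt Z^*\text{ with }\|\mt Z\|_1\leq 2\sqrt{2r}\,\|\mt Z\|_2,
\]
with $\tau$ a fixed constant; the $\tfrac1{\sqrt r}\sigma_r(\mt X)$ robustness term and the $m^{-1/q}\eta$ noise term arise from the convex-analytic part of that argument, the latter after converting $\|\cdot\|_{\ell_q}$ to $\|\cdot\|_{\ell_1}$ via H\"older. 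This turns the probabilistic claim into showing that, with the stated probability, the displayed lower bound holds on the cone $T:=\{\mt Z=\mt Z^*:\|\mt Z\|_1\leq 2\sqrt{2r}\,\|\mt Z\|_2\}$.

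\emph{Small-ball method.} To establish the lower bound I would apply Mendelson's small-ball method \cite{tropp_convex_2015} to the i.i.d.\ functionals $\mt Z\mapsto\tr(\mt A_k\mt Z)=\langle\vc a_k,\mt Z\vc a_k\rangle$. It yields, with probability at least $1-\mathrm e^{-u^2/2}$ and uniformly over $\mt Z\in T$,
\[
\frac1m\sum_{k}\bigl|\tr(\mt A_k\mt Z)\bigr|
\;\geq\;
\xi\,Q_{2\xi}(T)
-\frac2m\,\mathbb E\!\!\sup_{\mt Z\in T,\,\|\mt Z\|_2=1}\Bigl|\sum_k\varepsilon_k\tr(\mt A_k\mt Z)\Bigr|
-\xi\frac{u}{\sqrt m},
\]
where $Q_{2\xi}(T)=\inf_{\mt Z\in T,\|\mt Z\|_2=1}\Pr[\,|\tr(\vc a\vc a^*\mt Z)|\geq 2\xi\,]$ and the $\varepsilon_k$ are Rademacher signs. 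The Rademacher width is controlled through the dual pairing $|\tr(\mt A\mt Z)|\leq\|\mt A\|_\infty\|\mt Z\|_1$: on $T$ the supremum is at most $2\sqrt{2r}\,\mathbb E\|\sum_k\varepsilon_k\vc a_k\vc a_k^*\|_\infty$, and since $\sum_k\vc a_k\vc a_k^*\approx(m/d)\id$ (the orbit is a tight frame), the matrix Khintchine inequality bounds this operator norm by $\sqrt{m\log d/d}$ up to constants, for a width of order $\sqrt{r\log d/(md)}$. The marginal factor $Q_{2\xi}$ I would bound below by the Paley--Zygmund inequality, which reduces it to the ratio $(\mathbb E W^2)^2/\mathbb E W^4$ of the second and fourth moments of $W=\tr(\vc a\vc a^*\mt Z)$.

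\emph{Moments and the appearance of $\kappa$.} The second moment is exact: since the Clifford group is a unitary $2$-design, every orbit is a complex projective $2$-design, so $\mathbb E W^2=\bigl((\tr\mt Z)^2+\|\mt Z\|_2^2\bigr)/\bigl(d(d+1)\bigr)$, of order $\|\mt Z\|_2^2/d^2$. The fourth moment is the crux and the only place where the geometry of the seed $\vc z$ enters. I would compute the fourth moment operator $\mathbb E(\vc a\vc a^*)^{\otimes4}$ using the representation theory of the fourth tensor power of the Clifford group \cite{other_paper,helsen_representations_2016,zhu_multiqubit_2015}: because $\Cli_n$ fails to be a $4$-design, its order-four commutant is strictly larger than the image of the symmetric group $S_4$, and the extra invariant operators, once contracted with $\mt Z^{\otimes4}$, contribute a correction governed by the characteristic function---namely the seed-dependent constant $\|\Xi(\vc z\vc z^*)\|_{\ell_4}^4$ multiplied by $\|\Xi(\mt Z)\|_{\ell_4}^4=\sum_a(\tr\mt W_a\mt Z)^4$. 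On the cone $T$ the latter is at most of order $r\,d\,\|\mt Z\|_2^4$, because $|\tr\mt W_a\mt Z|\leq\|\mt Z\|_1\leq 2\sqrt{2r}\|\mt Z\|_2$ while $\sum_a(\tr\mt W_a\mt Z)^2=d\|\mt Z\|_2^2$. Combining the Haar part with the correction yields $\mathbb E W^4\lesssim\sqrt{\kappa(\vc z,r)}\,\|\mt Z\|_2^4/d^4$, whence $Q_{2\xi}\gtrsim\kappa(\vc z,r)^{-1/2}$ for $\xi$ a small constant multiple of $d^{-1}$.

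\emph{Balancing, and the main obstacle.} Putting the pieces together, the small-ball term $\xi Q_{2\xi}\asymp d^{-1}\kappa(\vc z,r)^{-1/2}$ dominates the Rademacher width $\asymp\sqrt{r\log d/(md)}$ precisely when $m\gtrsim\kappa(\vc z,r)\,r\,d\log d$, which is \eqref{eq:sampling_rate}; and choosing $u\asymp\sqrt{m/\kappa(\vc z,r)}$ keeps the deviation term below $\xi Q_{2\xi}$ while producing the failure probability $\mathrm e^{-\gamma m/\kappa(\vc z,r)}$. I expect the fourth-moment computation to be the principal difficulty: the sharpness of the result rests entirely on identifying the order-four Clifford commutant explicitly and tracking how its non-$S_4$ part produces exactly the combination $\tfrac rd\|\Xi(\vc z\vc z^*)\|_{\ell_4}^4+1$, rather than a cruder, rank-insensitive bound that would cost extra powers of $r$ or $\kappa$.
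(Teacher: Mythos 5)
Your proposal is correct and follows essentially the same route as the paper: reduction to a robust null space property on the cone of effectively low-rank Hermitian matrices (via the deterministic results of Kabanava et al.), Mendelson's small-ball method with the Rademacher width controlled by matrix Khintchine and the small-ball probability by Paley--Zygmund, and the fourth-moment bound extracted from the two-irrep decomposition of $\Sym^4(\CC^d)$ under the Clifford group, with the Pauli-expansion estimate $\sum_a(\tr \mt{W}_a\mt{Z})^4\leq \|\mt{Z}\|_1^2\cdot d\|\mt{Z}\|_2^2$ producing exactly the combination $\tfrac{r}{d}\|\Xi(\vc{z}\vc{z}^*)\|_{\ell_4}^4+1$. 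The only cosmetic imprecision is that the fourth moment should be bounded by $\sqrt{\kappa(\vc{z},r)}\,\bigl((\tr\mt{Z})^2+\|\mt{Z}\|_2^2\bigr)^2/d^4$ rather than $\sqrt{\kappa(\vc{z},r)}\,\|\mt{Z}\|_2^4/d^4$ (to absorb the $(\tr\mt{Z})^4$ term), which is how the paper states it and still yields your claimed ratio $Q_{2\xi}\gtrsim\kappa(\vc{z},r)^{-1/2}$.
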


Note that we have normalized the measurement vectors such that $\|a_k\|_{\ell_2}=1$. 
Other normalization conventions are also common.
For example, Gaussian random measurements have an expected length of $\|a_k\|_{\ell_2}\simeq\sqrt{d}$.
If we we drop this normalization restriction, the noise bound in (\ref{eq:stability}) generalizes to 
\begin{equation*}
  C_3 \sqrt{ \kappa \left(\vc{z}/\| \vc{z} \|_{\ell_2},r\right)} \frac{d}{\| \vc{z} \|_{\ell_2}^2} m^{-\frac{1}{q}} \eta.
\end{equation*}
Using the Gaussian normalization, the noise bound becomes independent of the ambient dimension $d$.

We prove Theorem~\ref{thm:main} by following techniques presented in \cite{kabanava_stable_2015}: we establish a strong notion of a matrix-valued null space property---see Definition~\ref{def:nsp} below---by invoking Mendelson's Small Ball Method \cite{koltchinskii_bounding_2015, mendelson_learning_2015,tropp_convex_2015}.
In order to do so, we employ recent insights about the fourth moments of the Clifford group. These are described in our companion paper \cite{other_paper}.

Theorem~\ref{thm:main} depends on the parameter $\kappa (\vc{z},r)$. According to Ref.~\cite{other_paper} it obeys 
\begin{equation}
\left(\frac{r}{d+1}+1\right)^2 \leq \kappa (\vc{z}, r ) \leq (r+1)^2
\label{eq:kappa_bound}
\end{equation}
and the upper bound is saturated for stabilizer states. 
Thus, Theorem~\ref{thm:main} requires a sampling rate of
\begin{equation*}
m \geq 2 C_1 r^3 d \log (d)
\end{equation*}
for randomly chosen stabilizer state measurements.
We believe that this worst case scaling in the rank parameter $r$ is an artifact of the proof technique. 
In contrast to this, typical orbits $\mathrm{Cl}(\vc{z})$ obey \cite{other_paper}
\begin{equation*}
\kappa(\vc{z},r) \leq \frac{6r}{d+1}+1 \leq 7
\end{equation*}
and the impact of $\kappa (\vc{z},r)$ on the sampling rate \eqref{eq:sampling_rate} is negligible.

As explained in Section~\ref{sec:convex}, if the matrices $\mt{X}$ are assumed to be positive semi-definite, one can sometimes use the convex otpimization problem (\ref{eq:nnls}) instead of (\ref{eq:phaselift}) for recovery. 
The most obvious advantage is that (\ref{eq:nnls}) does not require an estimate for the strength of the noise vector $\error$. 
Our second main result makes this precise for Clifford orbit measurements.

\begin{theorem}[Main Theorem, PSD version]\label{thm:main_psd}
	The statements of Theorem~\ref{thm:main} continue to hold under the following substitutions:
	\begin{itemize}
		\item
		The lower bound on the probability of success is weakened to $1-(d+1) \mathrm{e}^{-\frac{ \gamma m}{d+1}}$. 
  		\item
		The statement ranges over Hermitian matrices $\mt X$ of rank $r$, which are in addition assumed to be positive-semidefinite.
		\item
		The reconstruction $Z^\sharp$ is now given by the minimizer of the convex optimization problem in Eq.~(\ref{eq:nnls}),  for an arbitrary choice of $q\in[1,\infty]$.
  		\item
		The number $\eta$ in (\ref{eq:stability}) is replaced by $\|\error\|_{\ell_q}$, the true noise strength,
		with $q$ the same as above.
  \end{itemize}
\end{theorem}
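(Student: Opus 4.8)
The plan is to re-use essentially all of the analysis behind Theorem~\ref{thm:main} and to isolate the single place where the positive-semidefinite constraint enters. The proof of Theorem~\ref{thm:main} establishes, with probability at least $1-\mathrm{e}^{-\gamma m/\kappa(\vc z,r)}$, that the measurement operator $\mathcal{A}$ satisfies the $\ell_2$-robust rank null space property of Definition~\ref{def:nsp}; this is a statement about $\mathcal{A}$ alone and is completely independent of which convex program one later solves. I would therefore take this null space property as given and argue only that, for positive-semidefinite signals, the constrained least squares program \eqref{eq:nnls} produces an error $\mt H=\mt Z^\sharp-\mt X$ that can be fed into the same inequality. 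Two features of \eqref{eq:nnls} must be recovered without invoking nuclear-norm minimization. First, since $\mt X\geq\mt 0$ is feasible and $\mt Z^\sharp$ is optimal, one gets $\|\mathcal{A}(\mt Z^\sharp)-\vc y\|_{\ell_q}\leq\|\mathcal{A}(\mt X)-\vc y\|_{\ell_q}=\|\error\|_{\ell_q}$, whence $\|\mathcal{A}(\mt H)\|_{\ell_q}\leq 2\|\error\|_{\ell_q}$ by the triangle inequality; this is exactly what lets us drop the a~priori bound $\eta$ and replace it by the true noise strength in \eqref{eq:stability}. Second, and genuinely new, I would replace the cone inequality that nuclear-norm minimization supplies for free by one derived from the geometry of the positive-semidefinite cone together with an approximate trace functional.

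Concretely, let $\mt P$ be the orthogonal projector onto the range of $\mt X$ (rank $r$) and $\mt Q=\id-\mt P$. Because $\mt Z^\sharp=\mt X+\mt H\geq\mt 0$, the complementary block obeys $\mt Q\mt H\mt Q=\mt Q\mt Z^\sharp\mt Q\geq\mt 0$, so the tail $\|\mt H_{S^c}\|_1=\tr(\mt Q\mt H\mt Q)=\tr(\mt H)-\tr(\mt P\mt H\mt P)$ is controlled by $\tr(\mt H)$ up to a rank-$r$ correction absorbed into $\|\mt H_S\|_1$. It then remains to bound $\tr(\mt H)=\langle\id,\mt H\rangle$ by the measurement residual. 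For this I would use that the Clifford orbit is in particular a $1$-design, so that $\mathbb{E}[\vc a\vc a^*]=\id/d$, and apply the matrix Chernoff inequality to the independent rank-one summands $\vc a_k\vc a_k^*$ (each of operator norm one in dimension $d$) to conclude that $\|\tfrac{d}{m}\sum_k\vc a_k\vc a_k^*-\id\|_\infty\leq\tfrac12$ with probability at least $1-(d+1)\mathrm{e}^{-\gamma m/(d+1)}$. This concentration estimate is precisely the source of the weakened success probability in the statement. Writing $\langle\id,\mt H\rangle=\tfrac{d}{m}\langle\mathbf 1,\mathcal{A}(\mt H)\rangle+\langle\id-\tfrac{d}{m}\sum_k\vc a_k\vc a_k^*,\mt H\rangle$ and estimating the first term by H\"older's inequality and the second by the operator-norm bound just obtained yields a cone inequality of the form $\|\mt H_{S^c}\|_1\leq 3\|\mt H_S\|_1+C\,d\,m^{-1/q}\|\mathcal{A}(\mt H)\|_{\ell_q}$.

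Feeding this cone inequality, together with $\|\mathcal{A}(\mt H)\|_{\ell_q}\leq 2\|\error\|_{\ell_q}$, into the $\ell_2$-robust rank null space property of Definition~\ref{def:nsp}, and combining the two high-probability events by a union bound, then reproduces the Frobenius-norm bound \eqref{eq:stability} with $\eta$ replaced by $\|\error\|_{\ell_q}$, as claimed. I expect the main obstacle to be the trace-functional step: one must confirm that $\mathbb{E}[\vc a\vc a^*]=\id/d$ for the orbit at hand (it does, since even a $1$-design suffices) and then match the constants produced by matrix Chernoff to the advertised exponent $\gamma m/(d+1)$, taking care that the resulting cone constant remains compatible with the null space property constants inherited from Theorem~\ref{thm:main}. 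The positive-semidefinite cone argument and the residual bound are, by contrast, essentially routine once the approximate identity is in hand, and closely parallel the treatment of \cite{kabanava_stable_2015,candes_solving_2013}.
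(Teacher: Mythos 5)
Your proposal is correct and follows essentially the same route as the paper: establish the $r/\ell_q$-NSP exactly as in the proof of Theorem~\ref{thm:main}, add the matrix-Chernoff concentration of $\frac{d}{m}\sum_k \vc{a}_k\vc{a}_k^*$ around $\frac{1}{d}\id$ (the source of the weakened $(d+1)\mathrm{e}^{-\gamma m/(d+1)}$ probability), bound the residual by $2\|\error\|_{\ell_q}$ via feasibility of $\mt X$ and optimality of $\mt Z^\sharp$, and combine by a union bound. The only difference is that the paper invokes Theorem~8.1 of \cite{kabanava_stable_2015} as a black box for the deterministic implication (NSP plus approximate isotropy plus positivity implies the error bound), whereas you sketch its proof via the cone inequality $\mt Q\mt H\mt Q\geq\mt 0$ and the trace-functional estimate --- which is indeed the argument underlying that cited theorem, with the caveat you correctly flag that one must take $\rho<\tfrac12$ so the degraded cone constant remains compatible with the NSP.
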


Explicit bounds on the constants  $C_1, C_2, C_3, \gamma$ that appear in the two main theorems can in principle be extracted from our proofs. 
If one is only interested in the statement of Theorem~\ref{thm:main} alone (as opposed to both Theorem~\ref{thm:main} and Thoerem~\ref{thm:main_psd}), the constants improve somewhat.

Phase retrieval via PhaseLift is a particular case of matrix reconstruction, where $\mt{X} = \vc{x} \vc{x}^*$ is both positive semidefinite and rank-one. 
In this case, the bound in \eqref{eq:kappa_bound} becomes $\kappa (\vc{z},1) \leq 4$ and Theorem~\ref{thm:main_psd} implies the following statement:

\begin{corollary}[PhaseLift with Clifford orbit measurements]
  \label{cor:phaselift}
  Let $d=2^n$ and $\vc{z}\in\CC^d$ with $\|\vc{z}\|_{\ell_2} = \sqrt d$. 
  Choose
  \begin{equation}\label{eq:phaseLiftm}
  	m \geq  4 C_1\, d \log (d) 
  \end{equation}
  vectors $\vc{a}_1, \dots, \vc{a_m}$ uniformly and independently at random from Clifford orbit $\Cli_n\cdot \vc{z}$.

  Then with probability at least $1-d \mathrm{e}^{-\frac{\gamma m}{4d}}$, the phase retrieval problem \eqref{eq:phaseless_measurements} is well-posed in the following sense:
  For every $\vc{x}\in\CC^d$ and every $q\in[1,\infty]$, the leading eigenvector $\vc{z}^\sharp$ of the minimizer $\mt{Z}^\sharp$ of Eq.~(\ref{eq:nnls}) fulfills
  \begin{equation*}
		\min_{\phi\in[0,2\pi)} \|\vc{z}^\sharp - \mathrm{e}^{i \phi} \vc{x}\|_{\ell_2} \leq 2 C_3 \min \left\{ \| \vc{x} \|_{\ell_2}, \frac{ \| \error \|_{\ell_q}}{m^{1/q} \| \vc{x} \|_{\ell_2}} \right\}.
  \end{equation*}
\end{corollary}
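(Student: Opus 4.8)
The plan is to obtain Corollary~\ref{cor:phaselift} directly from Theorem~\ref{thm:main_psd} by specializing to the rank-one, positive-semidefinite target $\mt{X}=\vc{x}\vc{x}^*$ (so that $r=1$) and then converting the resulting Frobenius-norm reconstruction bound into a bound on the leading eigenvector. First I would invoke Theorem~\ref{thm:main_psd} with $r=1$. Its hypothesis \eqref{eq:sampling_rate} reads $m\geq C_1\kappa(\vc{z},1)\,d\log d$, and since the upper bound in \eqref{eq:kappa_bound} gives $\kappa(\vc{z},1)\leq(1+1)^2=4$, the stated sampling rate $m\geq 4C_1 d\log d$ of \eqref{eq:phaseLiftm} is sufficient. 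The success probability then inherits the form $1-d\,\mathrm{e}^{-\gamma m/(4d)}$ after bounding $\kappa(\vc{z},1)\leq4$ in the exponent.

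Next I would simplify the error estimate \eqref{eq:stability}. Because $\mt{X}=\vc{x}\vc{x}^*$ already has rank one, its best rank-one approximation is exact, so $\sigma_1(\mt{X})=0$ by \eqref{eq:approximation_error} and the first term of \eqref{eq:stability} drops out. For the noise term I would use the Gaussian normalization $\|\vc{z}\|_{\ell_2}=\sqrt d$ assumed in the corollary together with the rescaled noise bound recorded just after Theorem~\ref{thm:main}, in which the prefactor $d/\|\vc{z}\|_{\ell_2}^2$ equals $1$. In the PSD version $\eta$ is replaced by the true noise strength $\|\error\|_{\ell_q}$, and $\sqrt{\kappa(\vc{z},1)}\leq2$; collecting these ingredients yields the matrix-level guarantee $\|\mt{Z}^\sharp-\vc{x}\vc{x}^*\|_2\leq 2C_3\,m^{-1/q}\|\error\|_{\ell_q}$, where $\mt{Z}^\sharp$ is the minimizer of \eqref{eq:nnls}.

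The remaining and only nontrivial step is to pass from this Frobenius-norm guarantee on the lifted matrix to the claimed $\ell_2$ guarantee on $\vc{z}^\sharp$. Here I would invoke a standard leading-eigenvector extraction lemma for perturbations of rank-one positive-semidefinite matrices, of the type used throughout the PhaseLift literature (cf.\ \cite{candes_solving_2013}): if $\vc{z}^\sharp$ is the top eigenvector of $\mt{Z}^\sharp$ scaled so as to carry the leading eigenvalue, then there is a global phase $\phi$ with $\|\vc{z}^\sharp-\mathrm{e}^{i\phi}\vc{x}\|_{\ell_2}\leq C\,\|\mt{Z}^\sharp-\vc{x}\vc{x}^*\|_2/\|\vc{x}\|_{\ell_2}$, while the trivial bound $\|\vc{z}^\sharp-\mathrm{e}^{i\phi}\vc{x}\|_{\ell_2}\leq\|\vc{x}\|_{\ell_2}$ can always be enforced. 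Taking the better of the two produces the minimum appearing in the statement, and substituting the matrix bound from the previous paragraph gives exactly $\min_\phi\|\vc{z}^\sharp-\mathrm{e}^{i\phi}\vc{x}\|_{\ell_2}\leq 2C_3\min\{\|\vc{x}\|_{\ell_2},\,\|\error\|_{\ell_q}/(m^{1/q}\|\vc{x}\|_{\ell_2})\}$.

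I expect the main difficulty to be bookkeeping rather than conceptual. One must check that the two normalization conventions in play---$\|\vc{z}\|_{\ell_2}=1$ in Theorem~\ref{thm:main_psd} versus $\|\vc{z}\|_{\ell_2}=\sqrt d$ here---together with the eigenvector-scaling convention combine so that the ambient-dimension factor $d$ cancels cleanly and the constant collapses to $2C_3$. The eigenvector-extraction lemma itself is classical, so the only genuine content is verifying that its hypotheses are met: closeness of $\mt{Z}^\sharp$ to a rank-one matrix, which is exactly the output of the previous paragraph, and positive-semidefiniteness of $\mt{Z}^\sharp$, which is guaranteed by the constraint $\mt{Z}\geq\mt{0}$ in \eqref{eq:nnls}. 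Both hold by construction, so the conversion goes through.
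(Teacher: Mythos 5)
Your proposal is correct and follows essentially the same route as the paper, which obtains the corollary by specializing Theorem~\ref{thm:main_psd} to $r=1$, bounding $\kappa(\vc{z},1)\le 4$ via \eqref{eq:kappa_bound}, using the rescaled noise bound for $\|\vc{z}\|_{\ell_2}=\sqrt d$ recorded after Theorem~\ref{thm:main}, and passing to the leading eigenvector with the standard extraction lemma of \cite{candes_solving_2013}. The only loose end---shared with the paper, which leaves these details implicit---is the cosmetic mismatch between the failure probability $(d+1)\mathrm{e}^{-\gamma m/(d+1)}$ of Theorem~\ref{thm:main_psd} and the stated $d\,\mathrm{e}^{-\gamma m/(4d)}$ (the exponent there comes from $d+1$, not from $\kappa$), which is absorbed into the unspecified constant $\gamma$.
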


Up to a single $\log$-factor in the sampling rate $m$, and a weaker bound on the probability of failure, Corollary~\ref{cor:phaselift} reproduces \cite[Theorem 1.3]{candes_solving_2013}--the strongest recovery guarantee for PhaseLift with Gaussian measurements we are aware of.
We have chosen the particular normalization $\| \vc{z} \|_{\ell_2}= \sqrt{d}$ to match the typical scaling of Gaussian random vectors and facilitate a direct comparison with Ref.~\cite{candes_solving_2013}.
Corollary~\ref{cor:phaselift} provides a theoretical justification for our prior numerical observation that random stabilizer measurements show close-to-optimal behavior as measurements for phase retrieval \cite[Section 2]{gross_partial_2014}.

We do not know whether the $\log$-factor in Eq.~(\ref{eq:phaseLiftm}) is necessary or not.
In the case of Pauli measurements, one can prove the necessity of a $\log$-factor by considering the recovery of stabilizer states \cite[Theorem 15]{gross_recovering_2011}. 
This suggests trying to establish a lower bound on $m$ by analyzing the recovery of stabilizer states under stabilizer measurements. 
However, heuristic arguments (Appendix~\ref{sec:heuristic}) indicate that this strategy is bound to fail. 
We therefore leave it as an open problem to determine whether or not the scaling of the sampling rate $m$ can be made linear in the dimension $d$, or whether a logarithmic correction is required.

\section{Moments of group orbits}
\label{sec:moments}

Our results make use of representation theory to proof recovery guarantees for measurements that are sampled from group orbits.
While we apply it only to the Clifford group, the technique is more general than that.
In this section, we introduce the underlying concepts.

In our analysis of the probabilistic construction of the measurement operator $\mathcal{A}$, we will make essential use of finite moments of the random vectors $a_k$.
Here, we define the $2t$-th moments of a random vector $a \in \mathbb{C}^d$ as
\begin{equation}\label{eq:moments}
  \mathbb{E} \left[ \vc{a}^{\otimes t}(\vc{a}^*)^{\otimes t} \right]
\end{equation}
(we will not make use of moments of odd degree.)
To give an example, we consider the cases where $a$ is a random Gaussian, or a vector drawn uniformly from the unit-sphere in $\CC^d$. 
In these cases, there is a simple, explicit expression for the moments.
Indeed, let $S_t$ be the symmetric group on $t$ symbols and consider its representation on $(\CC^d)^{\otimes t}$ by permuting tensor factors:
\begin{equation*}
	\pi \, x_1 \otimes \dots \otimes x_t = x_{\pi_1} \otimes \dots \otimes x_{\pi_t},
	\quad
	\forall x_k \in \CC^d, \pi \in S_t.
\end{equation*}
Let $\Sym^t(\CC^d)$ be the \emph{totally symmetric subspace}, i.e.\ the subspace of $(\CC^d)^{\otimes t}$ on which $S_t$ acts trivially. 
Let $P_{[t]}$ be the orthogonal projection onto it.
It is clear that $a^{\otimes t}\in \Sym^t(\CC^d)$ with probability one.
Therefore, the $2t$-th moments as defined in Eq.~(\ref{eq:moments}) is a Hermitian operator with support in $\Sym^t(\CC^d)$.
For Gaussian or uniform random vectors, one can show (see below) that it is, in fact, proportional to the projection onto the totally symmetric subspace:
\begin{equation}
  \mathbb{E} \left[ \vc{a}^{\otimes t}(\vc{a}^*)^{\otimes t} \right]
  = c_t\,P_{[t]}, \label{eq:infty_design}
\end{equation}
with normalization constant
\begin{equation*}
	c_t
	=
	\frac{\mathbb{E}[\|a\|_{\ell_2}^{2t}]}{\dim  \Sym^t(\CC^d)}.
\end{equation*}

Often, one can cast the analysis of randomized constructions into a form that only makes use of $2t$-th moments up to some finite value of $t$.
This has been used in particular for the analysis of PhaseLift \cite{gross_partial_2014,kueng_low_2015,ehler_phase_2015}.
The strongest result, established in Ref.~\cite{kueng_low_2015}, shows that any random vector whose $4$-th moments match the ones of vectors drawn uniform from the sphere, performs essentially optimally for PhaseLift.
Such ensembles have a name \cite{delsarte_spherical_1977, renes_symmetric_2004}:

\begin{definition}
	A random vector $a \in \mathbb{C}^d$ taking values on the complex unit-sphere is called a \emph{complex projective $t$-design} if its $2t$-th moment is proportional to the projection $P_{[t]}$ onto the totally symmetric subspace.
\end{definition}

With the relevance of moment calculations for phase retrieval established, it is natural to ask how to identify natural random vectors whose moments can be computed.
A simple but powerful approach to this problem is to relate moments to symmetries \cite{bannai_survey_2009}.

Indeed, assume that $a$ is drawn from some set $S\subset \CC^d$. 
Let $G$ be  a subgroup of the unitary group $U(d)$ such that the distribution on $S$ is $G$-invariant (if $S$ is finite and $a$ is drawn uniformly from $S$, this just means that $G$ acts on $S$).
Then clearly, for every $U\in S$, the operator $U^{\otimes t}$ commutes with 
$\mathbb{E} \left[ \vc{a}^{\otimes t}(\vc{a}^*)^{\otimes t} \right]$.
This allows us to invoke Schur's Lemma: 
Let
\begin{equation*}
	\Sym^t(\CC^d) = \bigoplus_{\lambda} V_\lambda \otimes \CC^{d_\lambda}
\end{equation*}
be the decomposition of $\Sym^t(\CC^d)$ into irreps $V_\lambda$ of $G$ with multiplicity $d_\lambda$. 
Then Schur's Lemma says that
\begin{equation}\label{eq:moments_schur}
  \mathbb{E} \left[ \vc{a}^{\otimes t}(\vc{a}^*)^{\otimes t} \right]
  =
  \bigoplus_{\lambda}
  	P_\lambda \otimes B_\lambda,
\end{equation}
where $P_\lambda$ is the identity on $V_\lambda$ and $B_\lambda$ a suitable matrix acting on the multiplicity space $\CC^{d_\lambda}$.
If all irreps are non-degenerate, i.e.\ $d_\lambda = 1\,\forall \lambda$, the expression simplifies to 
\begin{equation}\label{eq:moments_nondeg}
  \mathbb{E} \left[ \vc{a}^{\otimes t}(\vc{a}^*)^{\otimes t} \right]
  =
  \bigoplus_{\lambda} \beta_\lambda P_\lambda,
\end{equation}
for suitable $\beta_\lambda\in\CC$.
This turns out to be the case for the groups we are interested in.

This analysis allows us to give a one-line proof of Eq.~(\ref{eq:infty_design}):
The formula follows from the fact that the Gaussian and the Haar distribution are invariant under $U(d)$ and that $U(d)$ acts irreducibly on $\Sym^t(\CC^d)$ for every $t,d$.

More generally: 
Let $G\subset U(d)$ be a group such that $G$ acts irreducibly on $\Sym^4(\CC^d)$. 
Then by the above discussion, the orbit of any normalized vector $z\in\CC^d$ under $G$ is a complex projetive 4-design and therefore, Ref.~\cite{kueng_low_2015} establishes near-perfect recovery guarantees for PhaseLift with measurements sampled from this orbit.
(Ref.~\cite{gross_partial_2014} gives weaker results, but makes non-trivial statements already for $3$-designs.)

The analysis in Ref.~\cite{kueng_low_2015} uses 4-th moments to establish certain large-deviation bounds on quantities associated with the random vectors.
If a random vector $a$ fails to be a 4-design, these arguments cannot be used directly.
However, the technical premise of this paper is that the proofs can sometimes be adapted.
Indeed, if the measurement ensemble affords a symmetry group that
is sufficiently large such that $\Sym^4(\CC^d)$ decomposes into few, simple, and ideally non-degenerate representation spaces, then it might be feasible to bound all necessary quantities from Eqs.~(\ref{eq:moments_schur}), (\ref{eq:moments_nondeg}).

Motivated by this, the present authors were part of a collaboration that computed the representation theory of the Clifford group acting on $\Sym^4(\CC^d)$ \cite{other_paper}.
While it was already known that the action could not be irreducible \cite{zhu_multiqubit_2015, webb_clifford_2015, kueng_qubit_2015}, the ``second best'' scenario turned out to be realized: 
There are only two, non-degenerate irreducible representations. 
What is more, there is a simple description of these irreps.
The results from \cite{other_paper} required in this paper are summarized in the following theorem:

\begin{theorem}[\cite{other_paper}] \label{thm:main_technical}
Let $d=2^n$ and let $\mt{P}_{[4]}$ denote the projector onto the totally symmetric subspace $\Sym^4(\CC^d)$.
Let
\begin{equation*}
	\mt{Q} = \frac{1}{d^2}\sum_{a \in \ZZ_2^n} \mt{W}_a^{\otimes 4}
\end{equation*}
and define
\begin{equation}
	P_+=P_{[4]}Q,\qquad P_-=P_{[4]}(1-Q).
\end{equation}
Then $\{P_+, P_-\}$ are the projections onto the irreducible representations of $\Cli_n$ within $\Sym^4(\CC^d)$.

In particular, if $a$ is drawn uniformly from a Clifford orbit $\Cli_n\cdot z$ with $\|z\|_{\ell_2}=1$, it holds that
\begin{equation}
  \mathbb{E} \left[ \vc{a}^{\otimes t}(\vc{a}^*)^{\otimes t} \right]
	= \beta_+ (z) P_+ + \beta_- (z) P_-,\label{eq:main_technical}
\end{equation}
with coefficients
\begin{align*}
\beta_+ (\vc{z} ) =& \frac{6}{(d+2)(d+1)d^2} \left\| \Xi \left( \vc{z} \vc{z}^*\right)\right\|_{\ell_4}^4, \\
\beta_- (\vc{z} ) =& \frac{24 \left( 1 - \frac{1}{d^2}\left\| \Xi \left( \vc{z} \vc{z}^*\right)\right\|_{\ell_4}^4 \right)}{(d+4)(d+2)(d+1)(d-1)}.
\end{align*}
\end{theorem}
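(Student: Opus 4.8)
The plan is to split Theorem~\ref{thm:main_technical} into a representation-theoretic part---that $\{P_+,P_-\}$ are the projectors onto two inequivalent irreducible $\Cli_n$-modules inside $\Sym^4(\CC^d)$---and a purely computational part---the evaluation of $\beta_\pm(\vc z)$. First I would pin down the structural properties of $\mt Q$. Because $\mt W_a\mt W_b=i^{[a,b]}\mt W_{a+b}$ by \eqref{eq:pauli_group_law} and $i^{4[a,b]}=1$, the operators $\{\mt W_a^{\otimes 4}\}_{a\in\ZZ_2^{2n}}$ form an ordinary, phase-free representation of the abelian group $\ZZ_2^{2n}$. Hence $\mt Q=d^{-2}\sum_{a}\mt W_a^{\otimes 4}$ is the orthogonal projector onto the subspace fixed by all $\mt W_a^{\otimes 4}$, so $\mt Q^2=\mt Q$. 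Permuting the four tensor factors commutes with every $\mt W_a^{\otimes 4}$, hence with $\mt Q$; taking $P_{[4]}=\tfrac1{4!}\sum_{\pi\in S_4}\pi$ this gives $[P_{[4]},\mt Q]=0$. Finally $\mt Q$ commutes with the Clifford action: by \eqref{eqn:symplecticaction}, $\mt U^{\otimes 4}\mt Q\,\mt U^{\dagger\otimes 4}=d^{-2}\sum_a\big((-1)^{f(a)}\mt W_{Fa}\big)^{\otimes 4}=d^{-2}\sum_a\mt W_{Fa}^{\otimes 4}=\mt Q$, the sign dying on the fourth power and $a\mapsto Fa$ being a bijection. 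Thus $P_+=P_{[4]}\mt Q$ and $P_-=P_{[4]}(\id-\mt Q)$ are mutually orthogonal projectors onto $\Cli_n$-submodules with $P_++P_-=P_{[4]}$.

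The hard part will be proving that these two modules are irreducible and inequivalent, i.e.\ that the commutant $\mathrm{End}_{\Cli_n}(\Sym^4(\CC^d))$ has dimension exactly two. Since $P_+$ and $P_-$ are linearly independent elements of this commutant, its dimension is at least two; an algebra of dimension two containing the orthogonal idempotents $P_+,P_-$ summing to $P_{[4]}$ must be $\CC P_+\oplus\CC P_-$, which forces two multiplicity-free irreducibles and makes \eqref{eq:moments_nondeg} applicable. The matching upper bound is the genuinely representation-theoretic input: one evaluates $\dim\mathrm{End}_{\Cli_n}\big((\CC^d)^{\otimes4}\big)$ through the orbit-counting of \cite{zhu_multiqubit_2015, other_paper}---the commutant is spanned by permutation operators dressed by Pauli projectors, and its dimension is controlled by the number of $\Sp(2n,\ZZ_2)$-orbits on quadruples of vectors in $\ZZ_2^{2n}$---and then restricts to the symmetric subspace. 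This is the only step that resists elementary manipulation, and it is precisely where the special behaviour of this Clifford group (whose cousins already differ at the third tensor power) is used.

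Granting the first part, the coefficients follow from two trace evaluations. Writing $M=\mathbb{E}[\vc a^{\otimes 4}(\vc a^*)^{\otimes 4}]=\beta_+P_++\beta_-P_-$ and using $\mt Q P_+=P_+$, $\mt Q P_-=0$, I obtain $\beta_+=\tr(\mt Q M)/\tr(P_+)$ and $\beta_-=\big(\tr M-\tr(\mt Q M)\big)/\tr(P_-)$. Since $\vc a^{\otimes 4}(\vc a^*)^{\otimes 4}=(\vc a\vc a^*)^{\otimes 4}$ has unit trace, $\tr M=1$; and $\tr\big(\mt Q(\vc a\vc a^*)^{\otimes 4}\big)=d^{-2}\sum_{b}\big(\tr \mt W_b\vc a\vc a^*\big)^4=d^{-2}\|\Xi(\vc a\vc a^*)\|_{\ell_4}^4$, which by Clifford-invariance of the $\ell_4$-norm of the characteristic function \eqref{eq:characteristic} equals $d^{-2}\|\Xi(\vc z\vc z^*)\|_{\ell_4}^4$ for $\vc a\in\Cli_n\cdot\vc z$.

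It then remains to compute the two dimensions. Using $P_{[4]}=\tfrac1{4!}\sum_\pi\pi$ and the fact that $\tr(\pi\,\mt W_b^{\otimes 4})$ factorises over the cycles of $\pi$ as a product of $\tr(\mt W_b^{\,c})$ over cycle lengths $c$---which for $b\neq0$ equals $d$ for even $c$ and $0$ for odd $c$, since $\mt W_b$ is traceless and squares to $\id$---only the cycle types $2^2$ and $4$ contribute, giving $\tr(P_{[4]}\mt W_b^{\otimes 4})=d(d+2)/8$ for $b\neq0$ and $\dim\Sym^4(\CC^d)=\binom{d+3}{4}$ for $b=0$. Averaging over the $d^2$ Paulis yields $\tr(P_+)=(d+1)(d+2)/6$, and hence $\tr(P_-)=\binom{d+3}{4}-\tr(P_+)=(d-1)(d+1)(d+2)(d+4)/24$. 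Substituting these dimensions and the two traces into the ratios for $\beta_\pm$ reproduces exactly the claimed expressions, completing the plan; the only substantial obstacle throughout is the commutant dimension count in the second paragraph.
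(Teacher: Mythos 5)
The paper does not actually prove this statement: it is imported verbatim from the companion paper \cite{other_paper}, so there is no internal proof to compare against. That said, your proposal is sound as a reduction. The structural claims about $\mt{Q}$ (that $a\mapsto \mt{W}_a^{\otimes 4}$ is a phase-free representation of $\ZZ_2^{2n}$, hence $\mt{Q}$ is an orthogonal projector commuting with both $S_4$ and $\Cli_n^{\otimes 4}$) are correct, and your trace computations check out: $\tr(P_{[4]}\mt{W}_b^{\otimes 4})=\tfrac{1}{24}(3d^2+6d)=d(d+2)/8$ for $b\neq 0$ (only cycle types $2^2$ and $4$ survive), giving $\tr P_+=(d+1)(d+2)/6$ and $\tr P_-=(d-1)(d+1)(d+2)(d+4)/24$, which together with $\tr M=1$ and $\tr(\mt{Q}M)=d^{-2}\|\Xi(\vc{z}\vc{z}^*)\|_{\ell_4}^4$ reproduce exactly the stated $\beta_\pm(\vc{z})$. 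You have also correctly isolated the one genuinely nontrivial ingredient—that the commutant of $\Cli_n^{\otimes 4}$ restricted to $\Sym^4(\CC^d)$ is two-dimensional—and deferred it to the orbit-counting results of \cite{zhu_multiqubit_2015,other_paper}, which is precisely where the paper itself locates that content. So the proposal is a correct derivation of the coefficient formulas conditional on the irreducibility statement, but it does not supply an independent proof of that statement, and neither does the paper.
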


The dependency of the coefficients $\beta_\pm(z)$ on the characteristic function is the ultimate reason for $\Xi \left( z z^*\right)$ appearing in the sampling rate $m$ of Theorem~\ref{thm:main} through $\kappa(z,r)$.
To compare the situation to 4-designs, we borrow the bound
\begin{equation}
\frac{2d}{(d+1)} \leq \left\| \Xi \left( \vc{z} \vc{z}^*\right)\right\|_{\ell_4}^4 \leq d \;\; \forall z \in \mathbb{C}^d: \|z \|_{\ell_2}=1 \label{eq:alpha_bound}
\end{equation}
from Ref.~\cite{other_paper}.
It includes the special case $\| \Xi \left( \vc{z} \vc{z}^*\right)\|_{\ell_4}^4= \frac{4d}{(d+3)}$, which is indeed attained for certain $z$'s \cite{other_paper}.
One verifies that for this value, the coefficients coincide: $\beta_+ (z) = \beta_- (z) = \binom{d+3}{4}^{-1}$.
Using $P_+ + P_- = P_{[4]}$, this implies that Eq.~\eqref{eq:main_technical} reduces to the the defining property of a 4-design. 
Hence, the strong recovery results from \cite{kueng_low_2015,kabanava_stable_2015} apply to these specific orbits.

However, for general orbits, $\beta_+(z)$ and $\beta_- (z)$ do not coincide.
Stabilizer states are an extreme case, in the sense that they saturate the upper bound presented in \eqref{eq:alpha_bound} \cite{other_paper}, which in turn implies that the difference between $\beta_+ (z)$ and $\beta_-(z)$ is maximal.
For such orbits, we obtain the weakest results, in that the oversampling factor $\kappa(z,r)$ in Theorem~\ref{thm:main} becomes largest.
Fortunately, the averse scaling of $\kappa(z,r)$ does not become relevant for bounded rank $r$. 
In particular, our results on phase retrieval ($r=1$) are near-optimal for all Clifford orbits (c.f.~Corollary~\ref{cor:phaselift}).

Finally, we mention again that the deviation between the moments of Clifford orbits and of uniform random vectors first occurs for $t=4$. 
Clifford orbits have recently been proven \cite{zhu_multiqubit_2015,webb_clifford_2015,kueng_qubit_2015} to form are known to form complex projective $t$-designs for $t=1,2,3$.

\section{Proofs}

 Our proof strategy is as follows: We aim to establish a robust \emph{Null Space Property} for measurement operators $\mathcal{A}$ comprised of random elements of a Clifford orbit. Roughly speaking, this means that no low-rank matrix is contained in the kernel (or null space) of $\mathcal{A}$. To do so, we follow the proof technique of Ref.~\cite{kabanava_stable_2015} and invoke a now-well-known tool called \emph{Mendelson's Small Ball Method} \cite{koltchinskii_bounding_2015,mendelson_learning_2015,tropp_convex_2015}. 
This statement depends on certain concentration properties of the measurements $A_k = a_k a_k^*$. These, in turn, are derived from representation theoretic data of the Clifford group, most notably Theorem~\ref{thm:main_technical}.

The remainder of this section is organized as follows:
\begin{enumerate}
\item In Section~\ref{sub:nsp} we recall the definition of the Null Space Property, as well as Mendelson's Small Ball Method.
\item Section~\ref{sub:auxiliary} shows how relevant concentration parameters of the measurements can be derived from the representation theory of their symmetry group.
\item The main work is done in Section~\ref{sub:clifford_nsp}, where we combine these ingredients to prove a Null Space Property for Clifford orbits.
\item In Section~\ref{sub:main_proof} we use this Null Space Property to derive our first main result: Theorem~\ref{thm:main}.
\item Finally, Section \ref{sub:psd} generalizes our findings to positive-semidefinite matrix recovery (Theorem~\ref{thm:main_psd}).
\end{enumerate}

\subsection{The robust Null Space Property and Mendelson's Small Ball Method} \label{sub:nsp}

The notion of a \emph{Null Space Property} is somewhat folklore in the field of compressed sensing, see e.g.\ \cite{foucart_mathematical_2013} for a discussion of its origin. One can define analogous properties for matrix reconstruction \cite{mohan_iterative_2010, recht_null_11, recht_necessary_2008, fornasier_low_2011, kabanava_stable_2015}.
Roughly speaking, a measurement operator $\mathcal{A}: H_d \to \mathbb{R}^m$ obeys a null space property of order $r$, if no rank-$r$ matrix is contained in the kernel, or nullspace, of $\mathcal{A}$.
This is a necessary criterion for \emph{uniform} rank-$r$ matrix recovery, where uniform means that all matrices of rank $r$ or less, can be reconstructed:

\begin{definition}[Definition 3.1 in \cite{kabanava_stable_2015} for hermitian matrices] \label{def:nsp}
For fixed $r$ and $q \geq 1$, a measurement operator $\mathcal{A}:H_d \to \mathbb{R}^m$ obeyes the \emph{$\ell_q$-robust Null Space Property of order $r$} ($r/\ell_q$-NSP)
with constants $\rho \in (0,1)$ and $\tau > 0$, if
\begin{equation}
\| \mt{Z}_r \|_2 \leq \frac{\rho}{\sqrt{r}} \| Z_c \|_1 + \tau \| \mathcal{A}(\mt{Z}) \|_{\ell_q} \quad \forall \mt{Z} \in H_d.
\label{eq:nsp}
\end{equation}
Here $\mt{Z}_r=\mathrm{argmin}_{\mathrm{rank}(Y)=r} \left\| Y-Z \right\|_1$ denotes the minimizer of the approximation error $\sigma_r (\mt{Z})$ in \eqref{eq:approximation_error} and $Z_c = Z-Z_r$ obeys $\| Z_c \|_1 = \sigma_r (Z)$.
\end{definition}

Validity of a $r/\ell_q$-NSP implies that any matrix $\mt{Z}$ with rank at most $r$ obeys $\| \mt{Z} \|_2 \leq \tau \| \mathcal{A}(\mt{Z}) \|_{\ell_q}$ and therefore does not lie in  $\mathcal{A}$'s null space. 
While this is clearly necessary for uniform rank-$r$ matrix recovery, it is also sufficient, c.f.\ \cite[Theorem 3.3]{kabanava_stable_2015}. 
We will use this assertion to derive Theorem~\ref{thm:main} in Section~\ref{sub:main_proof}.

Note that \eqref{eq:nsp} is invariant under scaling and we may set $\|\mt{Z} \|_2 = 1$ without loss of generality.
Moreover, any normalized matrix $\mt{Z} \in H_d$ which also obeys $\| \mt{Z}_r \|_2 \leq \frac{\rho}{\sqrt{r}} \| \mt{Z}_c \|_1$ 
fulfills \eqref{eq:nsp}, irrespective of $\mathcal{A}$.
So, when aiming to establish a $r/\ell_q$-NSP for any particular $\mathcal{A}$, we may restrict our attention to 
\begin{equation}
T_{\rho,r} = \left\{ \mt{Z} \in H_d:\| \mt{Z}_r \|_2 > \frac{\rho}{\sqrt{r}} \| \mt{Z}_c \|_1, \| \mt{Z} \|_2 = 1 \right\}. \label{eq:T}
\end{equation}
In turn, $\mathcal{A}: H_d \to \mathbb{R}^m$ obeys the $r/\ell_q$-NSP with constants $\rho \in (0,1)$ and $\tau >0$, if
\begin{equation}
\inf_{\mt{Z} \in T_{\rho,r}} \left\| \mathcal{A} (\mt{Z}) \right\|_{\ell_q} \geq \frac{1}{\tau}. \label{eq:nsp_sufficient}
\end{equation}
Note that the parameters $r,\rho$ implicitly feature in the definition of $T_{\rho,r}$, while $\tau$ is inversely proportional to the best lower bound achievable in \eqref{eq:nsp_sufficient}.

Our NSP-proof is based on the following statement \cite{koltchinskii_bounding_2015,mendelson_learning_2015,tropp_convex_2015}. 

\begin{theorem}[Variant of Mendelson's small ball method\footnote{We remark that Mendelson's small ball method often refers to a lower bound on $\inf_{\vc{z} \in E}\sqrt{\sum_{k=1}^m \left| \langle  \phi_k, \vc{z} \rangle \right|^2}$, while this statement is slightly stronger, as it bounds 
$\frac{1}{\sqrt{m}} \inf_{\vc{z} \in E} \sum_{k=1}^m | \langle \phi_k, \vc{z} \rangle |$ instead. This stronger claim, however, is also implied by Mendelson's original proof, see for instance \cite[Remark 5.1]{kabanava_stable_2015}.}] \label{thm:mendelson}
Fix $E \subset \mathbb{R}^d$ and let $\phi_1,\ldots,\phi_m \in \mathbb{R}^d$ be independent copies of a random vector $\phi$.
For $\xi >0$ define
\begin{align}
Q_\xi \left( E ; \phi \right) =& \inf_{\vc{z} \in E} \mathrm{Pr} \left[ \left| \langle \phi, \vc{z} \rangle \right| \geq \xi \right], \quad \textrm{and} \label{eq:Qxi} \\
W_m (E; \phi) =& \mathbb{E} \left[ \sup_{\vc{z} \in E} \langle \vc{h}, \vc{z} \rangle \right] \quad \textrm{with} \label{eq:Wm} \\
\vc{h} =& \frac{1}{\sqrt{m}} \sum_{k=1}^m \varepsilon_k \phi_k \in \mathbb{R}^d, \label{eq:h}
\end{align}
where each $\varepsilon_k$ is an independent instance of a Rademacher random variable (i.e.\ $\varepsilon_k$ assumes $+1$ and $-1$ with equal probability). Then for any $\xi >0$ and $t \geq 0$, the following bound is true with probability at least $1-\mathrm{e}^{-2t^2}$:
\begin{equation*}
\frac{1}{\sqrt{m}} \inf_{\vc{z} \in E} \sum_{k=1}^m | \langle \phi_k, \vc{z} \rangle | \geq \xi \sqrt{m} Q_{2\xi} (E; \phi) - 2W_m (E; \phi) - \xi t.
\end{equation*}
\end{theorem}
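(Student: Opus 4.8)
The plan is to bound the empirical sum from below by a \emph{smoothed counting functional} and then control that functional via a concentration estimate and a symmetrization argument; the three penalty terms $\xi\sqrt m\,Q_{2\xi}$, $2W_m$ and $\xi t$ will each arise from a separate, clearly identifiable step.

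First I would introduce a $(1/\xi)$-Lipschitz cutoff $\psi:\RR\to[0,1]$ with $\psi(0)=0$ and
\[
\mathbbm{1}\{|a|\ge 2\xi\}\;\le\;\psi(a)\;\le\;\mathbbm{1}\{|a|\ge\xi\}.
\]
Since $|a|\ge\xi\,\mathbbm{1}\{|a|\ge\xi\}\ge\xi\,\psi(a)$ holds pointwise, for every $\vc z\in E$
\[
\frac{1}{\sqrt m}\sum_{k=1}^m|\langle\phi_k,\vc z\rangle|
\;\ge\;
\xi\sqrt m\cdot\frac1m\sum_{k=1}^m\psi(\langle\phi_k,\vc z\rangle),
\]
so it suffices to lower-bound the empirical average of $\psi(\langle\phi_k,\vc z\rangle)$ uniformly over $E$. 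Writing $g_{\vc z}(\phi)=\psi(\langle\phi,\vc z\rangle)$ and
\[
Z=\sup_{\vc z\in E}\Big(\mathbb E\,g_{\vc z}(\phi)-\tfrac1m\textstyle\sum_{k}g_{\vc z}(\phi_k)\Big),
\]
the inequality $\psi\ge\mathbbm{1}\{|\cdot|\ge2\xi\}$ gives $\mathbb E\,g_{\vc z}(\phi)\ge\mathrm{Pr}[|\langle\phi,\vc z\rangle|\ge2\xi]\ge Q_{2\xi}(E;\phi)$, whence
\[
\frac{1}{\sqrt m}\inf_{\vc z\in E}\sum_{k}|\langle\phi_k,\vc z\rangle|
\;\ge\;
\xi\sqrt m\,\big(Q_{2\xi}(E;\phi)-Z\big).
\]
It then remains to argue that $\xi\sqrt m\,Z\le 2W_m(E;\phi)+\xi t$ with probability at least $1-\mathrm e^{-2t^2}$.

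This I would handle in two moves. Because every $g_{\vc z}$ takes values in $[0,1]$, altering a single $\phi_k$ changes $Z$ by at most $1/m$ uniformly in $\vc z$; McDiarmid's bounded-differences inequality then yields $\mathrm{Pr}[Z\ge\mathbb E\,Z+t/\sqrt m]\le\mathrm e^{-2t^2}$, and multiplying the excess $t/\sqrt m$ by $\xi\sqrt m$ produces exactly the $\xi t$ term. For the mean, standard symmetrization gives $\mathbb E\,Z\le 2\,\mathbb E\sup_{\vc z\in E}\tfrac1m\sum_k\varepsilon_k\psi(\langle\phi_k,\vc z\rangle)$ with Rademacher signs $\varepsilon_k$. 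Conditioning on the $\phi_k$ and applying the Ledoux--Talagrand contraction principle, valid since $\psi$ is $(1/\xi)$-Lipschitz and $\psi(0)=0$, strips off $\psi$ at the price of a factor $1/\xi$:
\[
\mathbb E\sup_{\vc z}\tfrac1m\sum_k\varepsilon_k\psi(\langle\phi_k,\vc z\rangle)
\;\le\;
\frac1\xi\,\mathbb E\sup_{\vc z}\tfrac1m\sum_k\varepsilon_k\langle\phi_k,\vc z\rangle
\;=\;
\frac{1}{\xi\sqrt m}\,W_m(E;\phi),
\]
where the last identity uses $\tfrac1m\sum_k\varepsilon_k\langle\phi_k,\vc z\rangle=\tfrac1{\sqrt m}\langle\vc h,\vc z\rangle$ with $\vc h$ as in \eqref{eq:h}. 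Hence $\xi\sqrt m\,\mathbb E\,Z\le 2W_m(E;\phi)$, and on the McDiarmid event $\xi\sqrt m\,Z\le 2W_m(E;\phi)+\xi t$, which is precisely what was needed.

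Combining these displays finishes the proof. I expect no conceptual obstacle here, since symmetrization and contraction are by now textbook; the real care lies in two pieces of bookkeeping. The first is the choice of cutoff width: the gap between the thresholds $\xi$ and $2\xi$ is exactly what converts the Lipschitz surrogate into the small-ball probability $Q_{2\xi}$, and its Lipschitz constant $1/\xi$ must cancel the prefactor $\xi\sqrt m$ to leave a clean $2W_m$. The second is tracking the constant in the bounded-differences step so that the failure probability lands on $\mathrm e^{-2t^2}$ rather than a rescaled version. Both are routine but must be carried out consistently.
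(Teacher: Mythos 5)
Your argument is correct and is precisely the standard proof of Mendelson's small ball method (soft-indicator truncation between thresholds $\xi$ and $2\xi$, McDiarmid's bounded-differences inequality, symmetrization, and the Ledoux--Talagrand contraction principle); the paper itself does not prove Theorem~\ref{thm:mendelson} but imports it from the cited references, which use exactly this route, with the $\ell_1$-form of the conclusion justified as in Remark~5.1 of \cite{kabanava_stable_2015}. Your bookkeeping is consistent: the $1/m$ bounded-differences constant yields the stated $\mathrm{e}^{-2t^2}$ failure probability, the factor $2$ in $2W_m(E;\phi)$ comes from symmetrization alone, and the Lipschitz constant $1/\xi$ of the cutoff cancels the prefactor $\xi\sqrt{m}$ as required.
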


In this work, we will 
employ the following corollary:

\begin{corollary} \label{cor:mendelson}
Fix $r$, $\rho$ and let $T_{\rho,r} \subset H_d$ be the set introduced in \eqref{eq:T}.
Suppose that $\mathcal{A}: H_d \to \mathbb{R}^m$ is a measurement operator containing $m$ independent instances of a single random matrix $\mt{A} \in H_d$ as individual measurements. 
Then for any $q \geq 1$, $\xi >0$ and $t \geq 0$ 
\begin{align}
& \inf_{\mt{Z} \in T_{\rho,r}} \| \mathcal{A} (\mt{Z}) \|_{\ell_q} \nonumber \\
\geq & m^{\frac{1}{q}-\frac{1}{2}}\left( \xi \sqrt{m} Q_{2\xi}(T_{\rho,r};\mt{A}) - 2 W_m (T_{\rho,r},\mt{A}) - \xi t \right) \label{eq:mendelson}
\end{align}
is true with probability at least $1- \mathrm{e}^{-2t^2}$. Here $Q_{2 \xi} (T_{\rho,r}; \mt{A})$ and $W_m (T_{\rho_r}, \mt{A})$ are the parameters defined in \eqref{eq:Qxi} and \eqref{eq:Wm}.
\end{corollary}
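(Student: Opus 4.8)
The plan is to reduce the statement directly to Theorem~\ref{thm:mendelson} by passing from the $\ell_q$-norm to the $\ell_1$-norm and by identifying the Hermitian matrix space $H_d$ with a real Euclidean space. The corollary is really a bookkeeping consequence of the small-ball bound; all the probabilistic content is already contained in Theorem~\ref{thm:mendelson}.

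First I would make the identification explicit. Equipped with the Hilbert--Schmidt inner product $\langle \mt{A}, \mt{Z} \rangle = \tr ( \mt{A} \mt{Z})$, the space $H_d$ is a real inner-product space isomorphic to $\mathbb{R}^{d^2}$, and for Hermitian $\mt{A}, \mt{Z}$ the pairing is real. Under this identification the $k$-th component of $\mathcal{A}(\mt{Z})$ equals $\langle \mt{A}_k, \mt{Z} \rangle$, and the $m$ independent copies $\mt{A}_1, \dots, \mt{A}_m$ of $\mt{A}$ play the role of the vectors $\phi_1, \dots, \phi_m$ in Theorem~\ref{thm:mendelson}. Applying that theorem with $E = T_{\rho,r}$ and $\phi = \mt{A}$ therefore controls the $\ell_1$-type quantity $\frac{1}{\sqrt{m}} \inf_{\mt{Z} \in T_{\rho,r}} \sum_{k=1}^m |\langle \mt{A}_k, \mt{Z} \rangle| = \frac{1}{\sqrt{m}} \inf_{\mt{Z} \in T_{\rho,r}} \| \mathcal{A}(\mt{Z}) \|_{\ell_1}$ from below by $\xi \sqrt{m} Q_{2\xi}(T_{\rho,r}; \mt{A}) - 2 W_m(T_{\rho,r}; \mt{A}) - \xi t$, on an event of probability at least $1 - \mathrm{e}^{-2t^2}$.

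Second I would upgrade this $\ell_1$ estimate to the desired $\ell_q$ estimate via the deterministic norm comparison that follows from H\"older's inequality: for every $v \in \mathbb{R}^m$ and every $q \in [1,\infty]$ one has $\| v \|_{\ell_q} \geq m^{1/q - 1} \| v \|_{\ell_1}$. Taking $v = \mathcal{A}(\mt{Z})$ and passing to the infimum over $T_{\rho,r}$ gives $\inf_{\mt{Z} \in T_{\rho,r}} \| \mathcal{A}(\mt{Z}) \|_{\ell_q} \geq m^{1/q - 1} \inf_{\mt{Z} \in T_{\rho,r}} \| \mathcal{A}(\mt{Z}) \|_{\ell_1}$. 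Substituting the small-ball bound from the previous step and collecting the powers of $m$ (using $m^{1/q - 1} \cdot \sqrt{m} = m^{1/q - 1/2}$) yields exactly \eqref{eq:mendelson}. Because the norm comparison holds for all $v$ deterministically, the combined inequality is valid on the very same event, hence with probability at least $1 - \mathrm{e}^{-2t^2}$.

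I do not expect a genuine obstacle: the argument is a two-line reduction. The only points that require minor care are (i) spelling out the isomorphism $H_d \cong \mathbb{R}^{d^2}$ so that Theorem~\ref{thm:mendelson} applies verbatim to matrix-valued measurements, and (ii) checking that the exponent $m^{1/q - 1}$ in the norm comparison is correct across the full range $q \in [1,\infty]$, including the endpoint $q = \infty$ where $1/q = 0$ and the inequality reads $\| v \|_{\ell_\infty} \geq m^{-1} \| v \|_{\ell_1}$.
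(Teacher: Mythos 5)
Your proposal is correct and follows essentially the same route as the paper's own proof: identify $H_d$ with $\mathbb{R}^{d^2}$ via the Frobenius inner product, apply Theorem~\ref{thm:mendelson} with $E = T_{\rho,r}$ to control the $\ell_1$-quantity, and then invoke the deterministic comparison $\|v\|_{\ell_1} \leq m^{1-1/q}\|v\|_{\ell_q}$ to obtain the $\ell_q$ bound on the same event. No gaps.
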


\begin{proof}
$H_d$ is a real-valued vector space isomorphic to $\mathbb{R}^{d^2}$ and we may identify each $\mt{A}_k$ with an instance $ \phi_k$ of the random vector $\mt{A} := \phi \in \mathbb{R}^{d^2} \simeq H_d$. 
Also, the Frobenius inner product $(Y,Z) = \tr (YZ)$ endows $H_d$ with an inner product.
Setting  $E=T_{\rho,r} \subset H_d \simeq \mathbb{R}^{d^2}$, where $T_{\rho,r}$ was defined in \eqref{eq:T} and applying Theorem~\ref{thm:mendelson} yields
\begin{equation*}
\inf_{\mt{Z} \in T_{\rho,r}} \frac{1}{\sqrt{m}} \sum_{k=1}^m \left| \left( \mt{A}_k, \mt{Z} \right) \right| = \inf_{\mt{Z} \in T_{\rho,r}} \frac{1}{\sqrt{m}}\| \mathcal{A} (\mt{Z}) \|_{\ell_1}.
\end{equation*}
Finally, we employ the basic norm inequality $\| \vc{z} \|_1 \leq m^{1-\frac{1}{q}} \| \vc{z} \|_{\ell_q} \; \forall \vc{z} \in \mathbb{R}^m,\; \forall q \geq 1$ (see for instance \cite[Equation A.3]{foucart_mathematical_2013})
to conclude
\begin{equation*}
\inf_{\mt{Z}\in T_{\rho,r}} \| \mathcal{A}(\mt{Z}) \|_{\ell_q} \geq m^{\frac{1}{q}-\frac{1}{2}} \inf_{\mt{Z}\in T_{\rho,r}} \frac{1}{\sqrt{m}} \| \mathcal{A} (\mt{Z}) \|_{\ell_1}.
\end{equation*}
\end{proof}

\subsection{Bounding the relevant parameters in Corollary~\ref{cor:mendelson} for Clifford orbits} \label{sub:auxiliary}

Two parameters feature prominently in Corollary~\ref{cor:mendelson}: $W_m \left( T_{\rho,r}; A \right)$ defined in \eqref{eq:Wm} and $Q_{2 \xi} \left( T_{\rho,r}; A\right)$ defined in \eqref{eq:Qxi}.
Both parameters crucially depend on the geometry of $T_{\rho,r} \subseteq H_d$ introduced in \eqref{eq:T} and the distribution of the measurement matrices $A= \vc{a} \vc{a}^*$. In our case, these are uniformly selected from a Clifford orbit $\Cli_n \cdot \vc{z} \vc{z}^*$.

A fist auxiliary statement addresses the geometry of $T_{\rho,r}$ and asserts that the \emph{effective rank} of every $T_{\rho,r}$ cannot be too large:

\begin{lemma} \label{lem:effective_low_rank}
Let $T_{\rho,r} \subset H_d$ be the set introduced in \eqref{eq:T} for some $\rho \in (0,1)$ and $1 \leq r \leq d$. Then 
\begin{equation}
\frac{ \| \mt{Z} \|_1^2}{\| \mt{Z} \|_2^2} \leq \left( \frac{\rho+1}{\rho}\right)^2 r \quad \forall \mt{Z} \in T_{\rho,r}. \label{eq:effective_low_rank}
\end{equation}
\end{lemma}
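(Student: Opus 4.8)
The plan is to bound the nuclear norm $\|\mt{Z}\|_1$ directly in terms of the Frobenius norm $\|\mt{Z}\|_2$ by exploiting two facts simultaneously: the orthogonal splitting $\mt{Z} = \mt{Z}_r + \mt{Z}_c$ into the leading rank-$r$ part and its tail, and the defining inequality of $T_{\rho,r}$. Since $\mt{Z}$ is Hermitian, I would diagonalize it and observe that $\mt{Z}_r$ and $\mt{Z}_c$ live on complementary, mutually orthogonal eigenspaces; consequently $\tr(\mt{Z}_r \mt{Z}_c)=0$, so the Pythagorean identity $\|\mt{Z}\|_2^2 = \|\mt{Z}_r\|_2^2 + \|\mt{Z}_c\|_2^2$ holds and in particular $\|\mt{Z}_r\|_2 \leq \|\mt{Z}\|_2$. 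The nuclear norm, by contrast, always splits additively: $\|\mt{Z}\|_1 = \|\mt{Z}_r\|_1 + \|\mt{Z}_c\|_1$.

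Next I would control each of the two nuclear-norm contributions separately. For the leading part, $\mt{Z}_r$ has rank at most $r$, so applying the Cauchy--Schwarz inequality to the vector of its nonzero eigenvalues yields $\|\mt{Z}_r\|_1 \leq \sqrt{r}\,\|\mt{Z}_r\|_2$. For the tail, membership $\mt{Z}\in T_{\rho,r}$ is by definition the statement $\|\mt{Z}_r\|_2 > \frac{\rho}{\sqrt{r}}\|\mt{Z}_c\|_1$, which I would simply rearrange into $\|\mt{Z}_c\|_1 < \frac{\sqrt{r}}{\rho}\|\mt{Z}_r\|_2$.

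Combining these gives
\begin{align*}
\|\mt{Z}\|_1 &= \|\mt{Z}_r\|_1 + \|\mt{Z}_c\|_1 \\
&< \sqrt{r}\,\|\mt{Z}_r\|_2 + \frac{\sqrt{r}}{\rho}\|\mt{Z}_r\|_2 \\
&= \frac{\rho+1}{\rho}\sqrt{r}\,\|\mt{Z}_r\|_2 \leq \frac{\rho+1}{\rho}\sqrt{r}\,\|\mt{Z}\|_2,
\end{align*}
where the last step uses $\|\mt{Z}_r\|_2 \leq \|\mt{Z}\|_2$. Squaring and rearranging then produces the claimed bound $\|\mt{Z}\|_1^2/\|\mt{Z}\|_2^2 \leq \left(\frac{\rho+1}{\rho}\right)^2 r$, in fact with strict inequality.

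I do not anticipate a genuine obstacle: the statement is an elementary consequence of two standard facts---Cauchy--Schwarz relating the $\ell_1$ and $\ell_2$ norms of the eigenvalue vector of a low-rank matrix, and orthogonality of the eigenspaces supporting $\mt{Z}_r$ and $\mt{Z}_c$---together with the definition of $T_{\rho,r}$. The only point requiring a moment's care is the bookkeeping of which quantities carry the nuclear versus the Frobenius norm; in particular, one must keep the tail term $\|\mt{Z}_c\|_1$ in nuclear norm, exactly as dictated by the definition of $T_{\rho,r}$, rather than inadvertently passing to $\|\mt{Z}_c\|_2$, which would break the chain of inequalities.
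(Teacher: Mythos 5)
Your argument is correct and is essentially identical to the paper's own proof: both split $\|\mt{Z}\|_1$ into $\|\mt{Z}_r\|_1+\|\mt{Z}_c\|_1$, bound the first term by $\sqrt{r}\,\|\mt{Z}_r\|_2$ via Cauchy--Schwarz and the second via the defining inequality of $T_{\rho,r}$, then use $\|\mt{Z}_r\|_2\leq\|\mt{Z}\|_2$. The extra remarks on orthogonality of the eigenspaces are harmless but not needed, since only the triangle-inequality direction of the nuclear-norm split is used.
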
 

\begin{proof}
Combining $\| \mt{Z}_r \|_1 \leq \sqrt{r} \| \mt{Z}_r \|_2$ with the defining property of $\mt{Z} \in T_{\rho,r}$ reveals
\begin{equation*}
\| \mt{Z} \|_1 \leq \| \mt{Z}_r \|_1 + \| \mt{Z}_c \|_1 \leq   \frac{\rho+1}{\rho} \sqrt{r} \| \mt{Z}_r \|_2, 
\end{equation*}
and the claim follows from $\| \mt{Z}_r \|_2 \leq \| \mt{Z} \|_2=1$.
\end{proof}

This insight allows one to bound the first parameter featuring in Corollary~\ref{cor:mendelson}:

\begin{proposition} \label{prop:Wm}
Fix $d=2^n$ and suppose that  $\mt{A} = \vc{a} \vc{a}^*$ results from choosing an element of a Clifford orbit $\Cli_n \cdot \vc{z} \vc{z}^*$ with $\|z \|_{\ell_2}=1$ uniformly at random. 
Also, fix $1 \leq r \leq d$, $\rho \in (0,1)$ and suppose $m \geq 2 d \log (d)$. Then 
\begin{equation*}
W_m (T_{\rho,r}; \vc{A} ) \leq \frac{ 6.2098 }{\rho} \sqrt{\frac{r \log (2d)}{d+1}}.
\end{equation*}
\end{proposition}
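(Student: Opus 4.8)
The plan is to bound the Rademacher mean width $W_m(T_{\rho,r};\mt A)$ by passing to the operator norm of a matrix Rademacher series and then controlling that norm with standard matrix concentration inequalities. Writing $\mt H = \frac{1}{\sqrt m}\sum_{k=1}^m \varepsilon_k \mt A_k$ for the random matrix from \eqref{eq:h}, matrix Hölder duality gives $\langle \mt H,\mt Z\rangle = \tr(\mt H\mt Z)\leq \|\mt H\|_\infty\|\mt Z\|_1$ for every Hermitian $\mt Z$. Since every $\mt Z\in T_{\rho,r}$ is normalized ($\|\mt Z\|_2=1$), Lemma~\ref{lem:effective_low_rank} bounds its nuclear norm by $\|\mt Z\|_1\leq \frac{\rho+1}{\rho}\sqrt r$. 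Taking the supremum over $T_{\rho,r}$ and then the expectation therefore yields
\begin{equation*}
W_m(T_{\rho,r};\mt A)\leq \frac{\rho+1}{\rho}\sqrt r\,\frac{1}{\sqrt m}\,\mathbb{E}\Bigl\|\sum_{k=1}^m\varepsilon_k\mt A_k\Bigr\|_\infty,
\end{equation*}
which reduces the problem to estimating the expected operator norm of a matrix Rademacher series.

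Next I would apply the matrix Khintchine inequality conditionally on the $\mt A_k$: for fixed Hermitian summands one has $\mathbb{E}_\varepsilon\|\sum_k\varepsilon_k\mt A_k\|_\infty\leq\sqrt{2\log(2d)}\,\|\sum_k\mt A_k^2\|_\infty^{1/2}$. The crucial simplification is that each $\mt A_k=\vc a_k\vc a_k^*$ is a rank-one orthogonal projector, because the orbit consists of unit vectors ($\|\vc z\|_{\ell_2}=1$); hence $\mt A_k^2=\mt A_k$ and $\sum_k\mt A_k^2=\sum_k\mt A_k$. Taking the expectation over the orbit and applying Jensen's inequality $\mathbb{E}\sqrt{X}\leq\sqrt{\mathbb{E}X}$, it then remains to control $\mathbb{E}\bigl\|\sum_k\mt A_k\bigr\|_\infty$.

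Finally I would invoke a matrix Chernoff bound for the sum of the independent positive-semidefinite matrices $\mt A_k$. Each satisfies $\lambda_{\max}(\mt A_k)=1$, and because the Clifford group acts irreducibly on $\CC^d$ the orbit is a $1$-design, so $\mathbb{E}[\mt A_k]=\frac1d\id$ and $\mu_{\max}=\lambda_{\max}(\sum_k\mathbb{E}\mt A_k)=m/d$ (this first-moment fact is consistent with Theorem~\ref{thm:main_technical}). The standing assumption $m\geq 2d\log d$ guarantees $\log d\leq \mu_{\max}/2$, so the $\mu_{\max}$ contribution dominates the additive $\log d$ term in the Chernoff estimate and produces $\mathbb{E}\|\sum_k\mt A_k\|_\infty\leq c\,m/(d+1)$ for an explicit constant $c$. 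Substituting back and collecting constants—using $\frac{\rho+1}{\rho}<\frac2\rho$ for $\rho\in(0,1)$—gives the claimed bound.

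The main obstacle is purely quantitative: reaching the sharp numerical prefactor $6.2098$ requires carefully tracking constants through the matrix Khintchine step, Jensen's inequality, and above all the matrix Chernoff bound, where one must optimize the free parameter and exploit $m\geq 2d\log d$ to absorb the $\log d$ contribution. The representation-theoretic input is light at this stage—only the first moment $\mathbb{E}[\vc a\vc a^*]=\frac1d\id$ enters—so the finer structure of Theorem~\ref{thm:main_technical} will instead be decisive for the companion small-ball parameter $Q_{2\xi}$ rather than for $W_m$.
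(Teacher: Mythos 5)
Your proposal is correct and follows essentially the same route as the paper: matrix H\"older duality combined with Lemma~\ref{lem:effective_low_rank} to reduce to $\mathbb{E}\left[\|\mt{H}\|_\infty\right]$, then noncommutative Khintchine plus a matrix Chernoff bound using only the first-moment identity $\mathbb{E}[\mt{A}_k]=\frac{1}{d}\id$ and the assumption $m\geq 2d\log(d)$. The only difference is that the paper outsources the Khintchine--Chernoff step to \cite[Proposition 13]{kueng_low_2015} (adapted to unit normalization, yielding the constant $3.1049$), whereas you sketch that argument inline.
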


\begin{proof}
This proof closely resembles a comparable analysis provided in \cite{kueng_low_2015}. Matrix Hoelder together with Lemma~\ref{lem:effective_low_rank} implies
\begin{align*}
W_m (T_{\rho,r}; \mt{A}) & =\mathbb{E} \left[ \sup_{\mt{Z} \in T_{\rho,r}} \left( \mt{H}, \mt{Z} \right) \right]
\leq \sup_{\mt{Z} \in T_{\rho,r}} \| \mt{Z} \|_1 \mathbb{E} \left[ \| \mt{H} \|_\infty \right] \\
& \leq \frac{\rho+1}{\rho} \sqrt{r} \mathbb{E} \left[ \| \mt{H} \|_\infty \right]
\leq \frac{2}{\rho} \sqrt{r} \mathbb{E} \left[ \| \mt{H} \|_\infty \right],
\end{align*}
with $\mt{H} = \frac{1}{\sqrt{m}} \sum_{k=1}^m \epsilon_k \vc{a}_k \vc{a}_k^*$. Each $\mt{A}_k =\vc{a}_k \vc{a}_k^*$ obeys $\mathbb{E} \left[ A_k \right] = \frac{1}{d} \mathbb{I}$, because it is uniformly chosen from a Clifford orbit (Formula~\eqref{eq:infty_design} for $t=1$).
This property alone together with the Rademacher randomness in $H$ allows for bounding $\mathbb{E} \left[ \left\| H \right\|_{\infty} \right]$ by combining a non-commutative Khintchine inequality with a matrix Chernoff bound, see for instance \cite[Proposition 13]{kueng_low_2015}.
Adapting said statement to unit normalization  ($\| \vc{a}_k \|_{\ell_2} = \| \vc{z} \|_{\ell_2}= 1$) implies
\begin{equation*}
\mathbb{E} \left[ \| \mt{H} \|_\infty \right] \leq 3.1049  \sqrt{\frac{\log(2d)}{d+1}},
\end{equation*}
provided that $m \geq 2d \log (d)$
and the claim readily follows.
\end{proof}

Establishing a lower bound on the remaining parameter $Q_{\xi} \left( T_{\rho, r}; \vc{a} \vc{a}^* \right)$ for Clifford orbits is considerably more challenging.
We do so by applying a
Paley-Zygmund inequality
that depends on the following auxiliary statement.

\begin{lemma} \label{lem:moments}
Fix $\mt{Z} \in H_d$ and define the random variable $S_\mt{Z}:= \langle \vc{a}, \mt{Z} \vc{a} \rangle$, where $\vc{a}$ is uniformly chosen from a Clifford orbit $\Cli_n \cdot z z^*$ with $\| z \|_{\ell_2}=1$. Then
\begin{align}
\mathbb{E} \left[ S_\mt{Z}^2 \right] =& \frac{ \left( \| \mt{Z} \|_2^2 + \tr (\mt{Z})^2 \right)}{(d+1)d}  \quad \textrm{and} \label{eq:second_moment}\\
\mathbb{E} \left[ S_{\mt{Z}}^4 \right] \leq&  \left( \frac{6}{d} \left\| \Xi \left( \vc{z} \vc{z}^* \right) \right\|_{\ell_4}^4 \frac{ \| \mt{Z} \|_1^2}{\| \mt{Z} \|_2^2}+13 \right) \mathbb{E} \left[ S_{\mt{Z}}^2 \right]^2.\label{eq:fourth_moment} 
\end{align}

\end{lemma}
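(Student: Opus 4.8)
The plan is to express both moments as contractions of $\mt Z^{\otimes t}$ against the moment operator, using $S_{\mt Z}=\langle \vc a,\mt Z \vc a\rangle=\tr(\mt Z\, \vc a\vc a^*)$, which is real since $\mt Z$ is Hermitian, so that $\mathbb{E}[S_{\mt Z}^t]=\tr\big(\mt Z^{\otimes t}\,\mathbb{E}[\vc a^{\otimes t}(\vc a^*)^{\otimes t}]\big)$. For the second moment I would invoke the fact (recorded above) that a Clifford orbit is a complex projective $2$-design, so $\mathbb{E}[\vc a^{\otimes 2}(\vc a^*)^{\otimes 2}]=\binom{d+1}{2}^{-1}P_{[2]}$. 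Writing $P_{[2]}=\tfrac12(\id+F)$ with $F$ the flip operator and using $\tr(\mt Z^{\otimes 2}F)=\tr(\mt Z^2)=\|\mt Z\|_2^2$ together with $\tr(\mt Z^{\otimes 2})=\tr(\mt Z)^2$ immediately gives \eqref{eq:second_moment}.

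For the fourth moment I would feed the representation-theoretic decomposition of Theorem~\ref{thm:main_technical} into the contraction, reducing the task to the two projector traces $\tr(\mt Z^{\otimes 4}P_+)$ and $\tr(\mt Z^{\otimes 4}P_-)$. Since $P_+=P_{[4]}Q$ and $P_-=P_{[4]}(1-Q)$, it is convenient to rewrite $\mathbb{E}[S_{\mt Z}^4]=\beta_-(\mt z)\,\tr(\mt Z^{\otimes 4}P_{[4]})+(\beta_+(\mt z)-\beta_-(\mt z))\,\tr(\mt Z^{\otimes 4}P_{[4]}Q)$. The first trace is the complete homogeneous symmetric polynomial $h_4$ in the eigenvalues of $\mt Z$; expanding $P_{[4]}=\tfrac1{24}\sum_{\pi\in S_4}\pi$ over cycle types expresses it through the power sums $p_k=\tr(\mt Z^k)$, and the elementary bounds $|p_3|\le p_2^{3/2}$ and $p_4\le p_2^2$ give $\tr(\mt Z^{\otimes 4}P_{[4]})\le\tfrac{13}{24}\big(\tr(\mt Z^2)+\tr(\mt Z)^2\big)^2$. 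Combined with the explicit value of $\beta_-(\mt z)$ and the inequality $d^2(d+1)\le(d+4)(d+2)(d-1)$, this manufactures precisely the additive $13\,\mathbb{E}[S_{\mt Z}^2]^2$ term. For the second trace I would use $\mt Z^{\otimes 4}Q=d^{-2}\sum_a (\mt Z\mt W_a)^{\otimes 4}$ and again expand $P_{[4]}$ over cycle types, turning everything into Pauli sums $\sum_a\prod_{\text{cycles }c}\tr((\mt Z\mt W_a)^{|c|})$. The identity cycle contributes $\sum_a\tr(\mt Z\mt W_a)^4=\|\Xi(\mt Z)\|_{\ell_4}^4$; a Parseval identity for the symplectic Fourier transform shows the $(2,2)$-type sum equals the same quantity, while the twirl $\sum_a \mt W_a M\mt W_a=d\,\tr(M)\,\id$ together with Cauchy--Schwarz in Schatten norms bounds each remaining mixed-type sum by $d\|\mt Z\|_2^4$. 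Estimating the leading term via $\|\Xi(\mt Z)\|_{\ell_4}^4\le\|\mt Z\|_1^2\,\|\Xi(\mt Z)\|_{\ell_2}^2=d\|\mt Z\|_1^2\|\mt Z\|_2^2$ and pairing it with $\beta_+(\mt z)\propto\|\Xi(\mt z\mt z^*)\|_{\ell_4}^4$ produces the factor $\tfrac{6}{d}\|\Xi(\mt z\mt z^*)\|_{\ell_4}^4\,\tfrac{\|\mt Z\|_1^2}{\|\mt Z\|_2^2}$.

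The hard part will be the final bookkeeping. The coefficient $\beta_+(\mt z)-\beta_-(\mt z)$ changes sign (positive for stabilizer-type orbits, negative near the $4$-design point $\|\Xi(\mt z\mt z^*)\|_{\ell_4}^4=\tfrac{4d}{d+3}$), and the individual traces $\tr(\mt Z^{\otimes 4}P_\pm)$ need not be non-negative, so one cannot simply discard terms: one must track how the sign of the cycle-type remainder correlates with that of $\beta_+-\beta_-$ and check that the sub-leading pieces (those scaling like $d\|\mt Z\|_2^4$ rather than $d\|\mt Z\|_1^2\|\mt Z\|_2^2$) are small enough to be absorbed into the stated constants $6$ and $13$ across the whole admissible range $\tfrac{2d}{d+1}\le\|\Xi(\mt z\mt z^*)\|_{\ell_4}^4\le d$. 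The Pauli sums themselves are delicate—naive phase tracking through the $\ZZ_2$ symplectic form is insufficient, because the group law \eqref{eq:pauli_group_law} carries $\ZZ_4$ phases—so I would carry out all of these evaluations through the permutation and twirl identities rather than by explicit matrix manipulation, precisely to avoid such errors.
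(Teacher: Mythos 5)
Your skeleton matches the paper's: the second-moment identity via the $2$-design property is exactly the paper's computation, and your treatment of $\tr(P_{[4]}\mt{Z}^{\otimes 4})$ --- cycle-type expansion into the power sums $p_k=\tr(\mt{Z}^k)$, then $|p_3|\le p_2^{3/2}$, $p_4\le p_2^2$ and AM--GM to reach $\tfrac{13}{24}\big(\tr(\mt{Z})^2+\|\mt{Z}\|_2^2\big)^2$ --- is the paper's argument essentially verbatim, as is the splitting of the fourth moment into a $\beta_-\,\tr(P_{[4]}\mt{Z}^{\otimes 4})$ piece and a $(\beta_+-\beta_-)\,\tr(P_{[4]}\mt{Q}\mt{Z}^{\otimes 4})$ piece with the coefficient bounds $|\beta_+-\beta_-|\le 6\|\Xi(\vc{z}\vc{z}^*)\|_{\ell_4}^4/((d+1)^2d^2)$ and $\beta_-\le 24/((d+1)^2d^2)$. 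Where you diverge is the $\mt{Q}$-trace. The paper does not expand it over cycle types at all: since $P_+=P_{[4]}\mt{Q}$ is an orthogonal projection, it writes $|\tr(P_{[4]}\mt{Q}\mt{Z}^{\otimes 4})|\le\tr(P_{[4]}\mt{Q}|\mt{Z}|^{\otimes 4})\le\tr(\mt{Q}|\mt{Z}|^{\otimes 4})=d^{-2}\sum_a\tr(\mt{W}_a|\mt{Z}|)^4$ and then applies exactly your H\"older--Parseval pair (to $|\mt{Z}|$ rather than to $\mt{Z}$) to obtain $\|\mt{Z}\|_1^2\|\mt{Z}\|_2^2/d$. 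This three-line positivity argument dissolves both difficulties you flag at the end: because one only needs an upper bound on the manifestly non-negative quantity $\mathbb{E}[S_{\mt{Z}}^4]$, the triangle inequality with $|\beta_+-\beta_-|$ and $|\beta_-|$ suffices and no sign-correlation between coefficients and traces has to be tracked; and there are no sub-leading mixed cycle-type Pauli sums left to absorb into the constants $6$ and $13$, because the entire $\mt{Q}$-trace is controlled in one step. Your alternative route --- evaluating $\sum_a\prod_c\tr((\mt{Z}\mt{W}_a)^{|c|})$ for each of the five cycle types of $S_4$ via twirl and symplectic-Parseval identities --- looks completable (your claim that the $(2,2)$ type equals $\|\Xi(\mt{Z})\|_{\ell_4}^4$ does check out), but it is precisely the bookkeeping you acknowledge not having carried out, and the three- and four-cycle sums are not obviously bounded by $d\|\mt{Z}\|_2^4$ without additional work. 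As written, the proposal is a sound plan whose single genuinely hard step is left open; the cheapest way to close it is to replace the cycle-type expansion of the $\mt{Q}$-term by the passage to the matrix absolute value $|\mt{Z}|$.
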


\begin{proof}
Equation~\eqref{eq:second_moment} is a consequence of the fact that Clifford orbits obey Formula~\eqref{eq:infty_design} for $t=2$:
\begin{align*}
\mathbb{E} \left[ S_{\mt{Z}}^2 \right] =& \mathbb{E} \left[ \langle \vc{a}, \mt{Z} \vc{a} \rangle^2\right] = \mathbb{E} \left[ \tr \left( \vc{a} \vc{a}^* \mt{Z} \right)^2 \right] \\
=&  \tr \left( \mathbb{E} \left[  \vc{a}^{\otimes 2} \left( \vc{a}^* \right)^{\otimes 2} \right]  \mt{Z}^{\otimes 2} \right) \\
=& \frac{ 2 \tr \left(\mt{P}_{\sym^2} \mt{Z}^{\otimes 2} \right)}{(d+1)d}
= \frac{ \tr (\mt{Z}^2) + \tr (\mt{Z})^2}{(d+1)d}.
\end{align*}
The last equality follows from applying standard techniques from multilinear algebra, see e.g.\ \cite[Lemma 6]{gross_partial_2014}, or \cite[Lemma 17]{kueng_low_2015}.

Deriving the fourth moment bound \eqref{eq:fourth_moment} is more involved. For any $\mt{Z} \in H_d$ Theorem~\ref{thm:main_technical} implies
\begin{align*}
 \mathbb{E} \left[ S_\mt{Z}^4 \right] =&  \mathbb{E} \left[ \langle \vc{a}, \mt{Z} \vc{a} \rangle^4 \right] = \tr \left( \mathbb{E} \left[  \vc{a}^{\otimes4} \left( \vc{a}^* \right)^{\otimes 4} \right] \mt{Z}^{\otimes 4} \right) \nonumber \\
=& \beta_+ (z) \tr \left( P_+ Z^{\otimes 4} \right) + \beta_- (z) \tr\left( P_- Z^{\otimes 4} \right).
\end{align*}
We can use $P_+ = P_{[4]} Q$ and $P_- = P_{[4]}(1-Q)$ with $Q = \frac{1}{d^2} \sum_{a \in \ZZ_2^n} W_a^{\otimes 4}$ to rewrite this expression as
\begin{align}
\mathbb{E} \left[ S_\mt{Z}^4 \right] =& \left( \beta_+ (z) - \beta_- (z) \right) \tr \left( P_{[4]} Q Z^{\otimes 4} \right) \nonumber \\
+& \beta_- (z) \tr \left( P_{[4]} Z^{\otimes 4} \right). \label{eq:technical_aux1}
\end{align}

The second  trace expression can be explicitly computed, e.g.\ by adapting the argument of \cite[Lemma 17]{kueng_low_2015}: 
\begin{align*}
& 24 \left| \tr \left( \mt{P}_{\sym^4} \mt{Z}^{\otimes 4} \right) \right| \\
=& \left| \tr (\mt{Z})^4 + 8 \tr (\mt{Z}) \tr \left( \mt{Z}^3 \right) + 3 \tr \left( \mt{Z}^2 \right)^2 \right.\\
+& \left. 6 \tr \left( \mt{Z} \right)^2 \tr \left( \mt{Z}^2 \right) + 6 \tr \left( \mt{Z}^4 \right) \right| \\
\leq & 3 \left( \tr \left( \mt{Z} \right)^2 + \tr \left( \mt{Z}^2 \right) \right)^2 + 8 |\tr (\mt{Z})| \| \mt{Z} \|_2^3 + 6 \| \mt{Z} \|_2^4 \\
\leq & 3 \left( \tr (\mt{Z})^2 + \| \mt{Z} \|_2^2 \right)^2 + 4 \left( \tr (\mt{Z})^2 + \| \mt{Z} \|_2^2 \right) \| \mt{Z} \|_2^2 + 6 \| \mt{Z} \|_2^4 \\
\leq & 13 \left( \tr (\mt{Z})^2 + \| \mt{Z} \|_2^2 \right)^2.
\end{align*}
Here, we have used standard Schatten-$p$ norm inequalities such as $\tr \left( Z^3 \right) \leq \| Z \|_3^3 \leq \| Z \|_2^3$ and $\tr \left( Z^4 \right) = \| Z \|_4^4 \leq \| Z \|_2^4$ as well as the AM-GM inequality:
 $| \tr (\mt{Z})| \| \mt{Z} \|_2 \leq \frac{1}{2} \left( \tr (\mt{Z})^2 + \| \mt{Z} \|_2^2 \right)$.

We start bounding the remaining trace expression by noticing
\begin{align*}
\left| \tr \left( P_{[4]}\mt{Q} \mt{Z}^{\otimes 4} \right) \right| \leq & \tr \left( P_{[4]} \mt{Q} | \mt{Z} |^{\otimes 4} \right) \leq  \tr\left( Q |Z|^{\otimes 4} \right) \\
=&  \frac{1}{d^2} \sum_{a \in \ZZ_2^n} \left( \mt{W}_a^{\otimes 4}, | \mt{Z}|^{\otimes 4} \right) \\
=& \frac{1}{d^2} \sum_{a \in \ZZ_2^n} \left( \mt{W}_a, | \mt{Z} | \right)^4.
\end{align*}
Here $|Z| = \sqrt{ Z Z^*}$ denotes the matrix absolute value of $Z$ and the inequalities above are standard relations for positive-semidefinite matrices.
Applying matrix Hoelder and using the fact that Schatten-$p$-norms of $\mt{Z}$ and $| \mt{Z} |$ coincide by definition allows us to deduce
\begin{align*}
\frac{1}{d^2} \sum_{a \in \ZZ_2^n} \left( \mt{W}_a, | \mt{Z} | \right)^4
\leq & \frac{1}{d^2} \sum_{a \in \ZZ_2^n} \| \mt{W}_a \|_\infty^2  \| |\mt{Z}| \|_1^2 \left( \mt{W}_a, |\mt{Z}| \right)^2 \\
=& \frac{ \| \mt{Z} \|_1^2}{d^2} \sum_{a \in \ZZ_2^n} \left( \mt{W}_a, |\mt{Z}| \right)^2 
= \frac{ \| \mt{Z} \|_1^2 \| \mt{Z} \|_2^2}{d}.
\end{align*}
The second line is due to the fact that Pauli matrices are a unitary  ($\| \mt{W}_a \|_\infty = 1$)  matrix basis of $H_d$ that is orthogonal with respect to the Frobenius inner product ($(\mt{W}_a,\mt{W}_b) = d \delta_{a,b}$).

We can now move on to bound pre-factors. Formula~\eqref{eq:alpha_bound} implies 
\begin{align*}
\beta_-(z)
\leq & \frac{24 \left( 1 - \frac{2}{(d+1)d}  \right)}{(d+4)(d+2)(d+1)(d-1)} \\
=& \frac{24}{(d+4)(d+1)^2 d} 
\leq  \frac{24}{(d+1)^2 d^2},
\end{align*}
as well as
\begin{align*}
\left| \beta_+ (z) - \beta_- (z) \right|
=& \frac{ \left| d^2+3d - 4d^2/\left\| \Xi \left( \vc{z} \vc{z}^*\right)\right\|_{\ell_4}^4 \right|}{(d+4)(d-1)} \beta_+ (z) \\
\leq& \frac{d \beta_+(z)}{(d+4)} 
\leq  \frac{6 \| \Xi \left( \vc{z} \vc{z}^*\right)\|_{\ell_4}^4}{(d+1)^2 d^2}
\end{align*}
Inserting all these individual bounds into \eqref{eq:technical_aux1} implies
\begin{align*}
\mathbb{E} \left[ S_Z^4 \right]
\leq & \left| \beta_+ (z) - \beta_- (z)\right| \left| \tr \left( P_{[4]} Q Z^{\otimes 4} \right) \right| \\
+& \left| \beta_- (z) \right| \left| \tr \left( P_{[4]} Z^{\otimes 4} \right) \right| \\
\leq & \frac{6 \left\| \Xi \left( \vc{z} \vc{z}^*\right)\right\|_{\ell_4}^4 \| Z \|_1^2 \| Z \|_2^2}{(d+1)d^3}
+ \frac{13\left(\tr (Z)^2 + \| Z \|_2^2\right)^2 }{(d+1)^2d^2} \\
\leq & \left(\frac{6}{d}\left\| \Xi \left( \vc{z} \vc{z}^*\right)\right\|_{\ell_4}^4 \frac{ \| Z \|_1^2}{\| Z \|_2^2} + 13 \right)  \left( \frac{ \tr(Z)^2 + \| Z \|_2^2}{(d+1)d} \right)^2 \\
\end{align*}
and the claim follows.
\end{proof}

The pre-factor in the 4-th moment bound \eqref{eq:fourth_moment}  depends both on the characteristic function $\Xi \left( \vc{z} \vc{z}^*\right)$ and the effective rank of $\mt{Z}$. 
However, without putting further restrictions on $Z$ and the Clifford orbit, this is unavoidable up to multiplicative constants: Choosing $z = \psi \in \mathbb{C}^d$ to be a stabilizer state and setting $Z= W_a$ ($a\neq 0 \in \ZZ_2^n$) 
results in 
\begin{align*}
\mathbb{E} \left[ S_{W_a}^4 \right] =& (d+1)\mathbb{E} \left[ S_{W_a}^2 \right]^2 \\
=& \left( \frac{1}{d} \left\| \Xi \left( \psi \psi^*\right) \right\|_{\ell_4}^4 + 1 \right)\mathbb{E} \left[ S_{W_a}^2 \right]^2.
\end{align*}

\begin{proposition} \label{prop:Qxi}
Fix $d=2^n$ and let $\mt{A} = \vc{a} \vc{a}^*$ be a randomly chosen element of a Clifford orbit $\Cli_n \cdot \vc{z} \vc{z}^*$. Then the parameter $Q_{\xi} (T_{\rho,r},\mt{A})$, featuring in Corollary~\ref{cor:mendelson}, obeys
\begin{align*}
Q_{\xi} \left( T_{\rho,r}; \mt{A} \right) \geq \frac{\rho^2}{24 \sqrt{\kappa (\vc{z},r)}} \left(1-\left( \sqrt{(d+1)d} \xi \right)^2 \right)^2,
\end{align*}
where $\kappa (\vc{z},r)$ was introduced in \eqref{eq:kappa}. 
This bound is true
for any $0 \leq \xi \leq \frac{1}{\sqrt{(d+1)d}}$, $1 \leq r \leq d$ and $\rho \in (0,1)$. 
\end{proposition}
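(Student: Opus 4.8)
The plan is to apply the Paley--Zygmund inequality to the nonnegative random variable $S_{\mt{Z}}^2 = \langle \vc{a}, \mt{Z}\vc{a}\rangle^2$, whose second and fourth moments are exactly the quantities controlled by Lemma~\ref{lem:moments}. Since $\mt{A} = \vc{a}\vc{a}^*$ gives $\langle \mt{A},\mt{Z}\rangle = \tr(\vc{a}\vc{a}^*\mt{Z}) = S_{\mt{Z}}$, the quantity of interest is $Q_\xi(T_{\rho,r};\mt{A}) = \inf_{\mt{Z}\in T_{\rho,r}} \mathrm{Pr}[S_{\mt{Z}}^2 \geq \xi^2]$. Recalling that Paley--Zygmund states $\mathrm{Pr}[X \geq \theta\,\mathbb{E}[X]] \geq (1-\theta)^2 \mathbb{E}[X]^2/\mathbb{E}[X^2]$ for $\theta\in[0,1]$, I would invoke it with $X = S_{\mt{Z}}^2$ and threshold $\theta = \xi^2/\mathbb{E}[S_{\mt{Z}}^2]$, so that $\theta\,\mathbb{E}[S_{\mt{Z}}^2] = \xi^2$ and the event collapses to $\{|S_{\mt{Z}}|\geq\xi\}$.

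Two moment estimates then have to be pushed through uniformly over $\mt{Z}\in T_{\rho,r}$. First, to control $\theta$, I would use the exact second moment \eqref{eq:second_moment} together with the normalization $\|\mt{Z}\|_2 = 1$ to get $\mathbb{E}[S_{\mt{Z}}^2] \geq 1/((d+1)d)$; hence $\theta \leq (d+1)d\,\xi^2 = \left(\sqrt{(d+1)d}\,\xi\right)^2$, which lies in $[0,1]$ exactly in the stated range $0\leq\xi\leq 1/\sqrt{(d+1)d}$ and produces the factor $(1-\theta)^2 \geq \left(1-\left(\sqrt{(d+1)d}\,\xi\right)^2\right)^2$. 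Second, for the moment ratio I would invert the fourth-moment bound \eqref{eq:fourth_moment} to obtain
\[
\frac{\mathbb{E}[S_{\mt{Z}}^2]^2}{\mathbb{E}[S_{\mt{Z}}^4]} \geq \left(\frac{6}{d}\left\|\Xi(\vc{z}\vc{z}^*)\right\|_{\ell_4}^4 \frac{\|\mt{Z}\|_1^2}{\|\mt{Z}\|_2^2} + 13\right)^{-1},
\]
and then replace the effective rank $\|\mt{Z}\|_1^2/\|\mt{Z}\|_2^2$ by its uniform bound $\left(\tfrac{\rho+1}{\rho}\right)^2 r \leq 4r/\rho^2$ from Lemma~\ref{lem:effective_low_rank}, where the last step uses $\rho\in(0,1)$.

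It then remains to recognize the definition of $\kappa$ and absorb the additive constant. Writing $\tfrac{r}{d}\|\Xi(\vc{z}\vc{z}^*)\|_{\ell_4}^4 = \sqrt{\kappa(\vc{z},r)}-1$ from \eqref{eq:kappa}, the denominator is at most $\tfrac{24}{\rho^2}\left(\sqrt{\kappa(\vc{z},r)}-1\right) + 13$, which I would bound by $\tfrac{24}{\rho^2}\sqrt{\kappa(\vc{z},r)}$ using $\rho<1$ (so $24/\rho^2 > 24 > 13$) and $\kappa(\vc{z},r)\geq 1$. Multiplying the two factors and taking the infimum over $T_{\rho,r}$ yields the claim. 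The argument is largely bookkeeping once Lemmas~\ref{lem:effective_low_rank} and \ref{lem:moments} are available; the one genuine subtlety is that every estimate must hold uniformly in $\mt{Z}$, and the effective-rank bound is precisely the device that converts the $\mt{Z}$-dependent ratio $\|\mt{Z}\|_1^2/\|\mt{Z}\|_2^2$ into a quantity depending only on $r$ and $\rho$. This uniformization is therefore the linchpin of the proof, whereas the hard analytic work has already been discharged in the fourth-moment estimate of Lemma~\ref{lem:moments}.
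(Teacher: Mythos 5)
Your proposal is correct and follows essentially the same route as the paper: Paley--Zygmund applied to $S_{\mt{Z}}^2$, the second-moment identity \eqref{eq:second_moment} with $\|\mt{Z}\|_2=1$ to control the threshold, the fourth-moment bound \eqref{eq:fourth_moment} for the moment ratio, and Lemma~\ref{lem:effective_low_rank} to uniformize over $T_{\rho,r}$ before recognizing $\sqrt{\kappa(\vc{z},r)}$. The only cosmetic difference is that the paper first enlarges the threshold to $(d+1)d\,\xi^2\,\mathbb{E}[S_{\mt{Z}}^2]\geq\xi^2$ and then applies Paley--Zygmund with a $\mt{Z}$-independent $\theta$, whereas you apply it with $\theta=\xi^2/\mathbb{E}[S_{\mt{Z}}^2]$ and bound $\theta$ afterwards; these are equivalent.
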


\begin{proof}
Fix $\mt{Z} \in T_{\rho,r}$, $\xi  \geq 0$ and  define the real-valued random variable $S_\mt{Z}=\mathrm{tr} \left( \vc{a} \vc{a}^* \mt{Z} \right)$, where $\vc{a}\vc{a}^*$ is chosen uniformly from $\Cli_n \cdot \vc{z} \vc{z}^*$.
Then
\begin{align*}
\mathrm{Pr} \left[ | \tr \left( \mt{A} \mt{Z} \right)| \geq \xi \right]
=& \mathrm{Pr} \left[ |S_\mt{Z}| \geq \xi \right] = \mathrm{Pr} \left[ S_\mt{Z}^2 \geq \xi^2 \right] \\
\geq & \mathrm{Pr} \left[ S_\mt{Z}^2 \geq (d+1)d \xi^2 \mathbb{E} \left[ S_{\mt{Z}}^2 \right] \right], 
\end{align*}
where the last inequality follows from \eqref{eq:second_moment}, because every $\mt{Z} \in T_{\rho,r}$ obeys $\| \mt{Z} \|_2=1$. 
Applying the Paley-Zygmund inequality, see e.g.\ \cite[Lemma 7.16]{foucart_mathematical_2013}, to the non-negative random variable $S_\mt{Z}^2$ yields
\begin{align}
& \mathrm{Pr} \left[ S_\mt{Z}^2 \geq (d+1)d \xi^2 \mathbb{E} \left[ S_\mt{Z}^2 \right] \right] \nonumber \\
\geq & \left( 1-(d+1)d \xi^2 \right)^2 \frac{ \mathbb{E} \left[ S_\mt{Z}^2 \right]^2}{\mathbb{E} \left[ S_{\mt{Z}}^4 \right]} \nonumber \\
\geq & \frac{\left( 1- (d+1)d \xi^2 \right)^2}{\frac{6}{d} \| \Xi \left( \vc{z} \vc{z}^*\right)\|_{\ell_4}^4 \frac{ \| \mt{Z} \|_1^2}{\| \mt{Z} \|_2^2}+13}, \label{eq:Qxiaux1}
\end{align}
where the last line is due to \eqref{eq:fourth_moment}. 

According to Lemma~\ref{lem:effective_low_rank}, each $\mt{Z} \in T_{\rho,r}$ admits the bound
\begin{align*}
 \frac{6}{d} \left\| \Xi \left( \vc{z} \vc{z}^*\right)\right\|_{\ell_4}^4 \frac{ \| \mt{Z} \|_1^2}{\| \mt{Z} \|_2^2}+13  
\leq & 6 \left( \frac{1+\rho}{\rho} \right)^2 \frac{r}{d} \left\| \Xi \left( \vc{z} \vc{z}^*\right)\right\|_{\ell_4}^4 +13 \\
\leq & \frac{24}{\rho^2} \left( \frac{r}{d}\left\| \Xi \left( \vc{z} \vc{z}^*\right)\right\|_{\ell_4}^4 +1 \right) \\
=& \frac{24}{\rho^2} \sqrt{\kappa (\vc{z},r)}.
\end{align*}
Inserting this into \eqref{eq:Qxiaux1} results in a lower bound that is valid for all $\mt{Z} \in T_{\rho,r}$ simultaneously. Thus it also applies to
\begin{equation*}
Q_{\xi} \left( T_{\rho,r}; \mt{A} \right) = \inf_{\mt{Z} \in T_{\rho,r}} \mathrm{Pr} \left[ \left| \left( A,Z \right) \right| \geq \xi \right],
\end{equation*}
where $(A,Z) =\tr (AZ)$
and the claim follows.
\end{proof}

\subsection{A Null Space Property for Clifford obits} \label{sub:clifford_nsp}

We have now assembled all necessary ingredients to prove a Null Space Property in the sense of Definition~\ref{def:nsp} for random Clifford orbit measurements.

\begin{theorem} \label{thm:clifford_nsp}
Set $d=2^n$, and fix $1 \leq r \leq d$, $\rho \in (0,1 )$, $1 \leq q \leq \infty$. Suppose that $\mathcal{A}:H_d \to \mathbb{R}^m$ contains
\begin{equation}
m \geq \frac{\tilde{C}_1}{\rho^6}\kappa (\vc{z},r) r d \log (2d) \label{eq:nsp_sampling_rate}
\end{equation}
randomly chosen elements of a Clifford orbit $\Cli_n \cdot \vc{z} \vc{z}^*$ with $\| \vc{z} \|_{\ell_2}=1$.
Then, with probability at least $1~-~\mathrm{e}^{-\frac{2 \rho^4 \tilde{\gamma} m}{\kappa (\vc{z},r)}}$, $\mathcal{A}$ obeys the $r/\ell_q$-NSP from Definition~\ref{def:nsp} with parameters
\begin{equation}
\rho \quad \textrm{and} \quad \tau = \frac{\tilde{C}_3}{\rho^2} \sqrt{ \kappa (z,r)} d m^{-\frac{1}{q}}. \label{eq:clifford_nsp_constants}
\end{equation}
Here, $C_1,\tilde{C}_3$ and $\tilde{\gamma}$ denote constants of sufficient size.
\end{theorem}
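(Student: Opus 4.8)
The plan is to combine the three preparatory results of this subsection---Corollary~\ref{cor:mendelson}, Proposition~\ref{prop:Wm}, and Proposition~\ref{prop:Qxi}---with the sufficient condition \eqref{eq:nsp_sufficient} for the $r/\ell_q$-NSP. By \eqref{eq:nsp_sufficient}, it suffices to establish a lower bound
\[
\inf_{\mt{Z} \in T_{\rho,r}} \| \mathcal{A}(\mt{Z}) \|_{\ell_q} \geq \frac{1}{\tau}
\]
with the claimed value of $\tau$. Corollary~\ref{cor:mendelson} already supplies such a bound in terms of the parameters $Q_{2\xi}(T_{\rho,r};\mt{A})$ and $W_m(T_{\rho,r};\mt{A})$, leaving only the task of inserting the estimates from Propositions~\ref{prop:Qxi} and~\ref{prop:Wm} and then tuning the free parameters $\xi$ and $t$. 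Note that $W_m$ and $Q_{2\xi}$ are deterministic quantities, so the only random event is the one furnished by Corollary~\ref{cor:mendelson}, and no union bound is required.

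First I would fix $\xi$ near the upper end of the admissible range in Proposition~\ref{prop:Qxi}: setting $2\xi = c_0/\sqrt{(d+1)d}$ for a fixed constant $c_0 \in (0,1)$ (e.g.\ $c_0 = 1/\sqrt{2}$) makes the factor $(1-c_0^2)^2$ a positive constant, so that $Q_{2\xi}(T_{\rho,r};\mt{A}) \geq c_1 \rho^2/\sqrt{\kappa(\vc{z},r)}$ for an absolute constant $c_1$. With this choice the leading term of the Mendelson bound satisfies
\[
\xi \sqrt{m}\, Q_{2\xi} \geq \frac{c_2 \rho^2 \sqrt{m}}{\sqrt{(d+1)d}\,\sqrt{\kappa(\vc{z},r)}}.
\]
The strategy is then to force this term to dominate by requiring both $2W_m$ and $\xi t$ to be at most, say, one quarter of it.

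For the Gaussian-width term, Proposition~\ref{prop:Wm} gives $W_m \leq c_3 \rho^{-1}\sqrt{r\log(2d)/(d+1)}$, so the condition $2W_m \leq \tfrac{1}{4}\,\xi\sqrt{m}\,Q_{2\xi}$ rearranges, after squaring, into exactly the sampling rate $m \geq \tilde{C}_1\, \rho^{-6}\,\kappa(\vc{z},r)\, r\, d\log(2d)$ of \eqref{eq:nsp_sampling_rate}; this is the origin of both the factor $\rho^{-6}$ and the product $\kappa\, r\, d \log d$. For the deviation term I would choose $t$ proportional to $\sqrt{m}\,Q_{2\xi} \propto \sqrt{m}\,\rho^2/\sqrt{\kappa(\vc{z},r)}$, which simultaneously guarantees $\xi t \leq \tfrac{1}{4}\,\xi\sqrt{m}\,Q_{2\xi}$ and turns the failure probability $e^{-2t^2}$ of Corollary~\ref{cor:mendelson} into the stated $\exp(-2\tilde\gamma\rho^4 m/\kappa(\vc{z},r))$. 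One must also check that the hypothesis $m\geq 2d\log d$ of Proposition~\ref{prop:Wm} is implied by \eqref{eq:nsp_sampling_rate}, which is immediate.

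Under these two domination conditions the bracket in \eqref{eq:mendelson} is at least half its leading term, so that
\[
\inf_{\mt{Z}\in T_{\rho,r}} \|\mathcal{A}(\mt{Z})\|_{\ell_q} \geq m^{\frac{1}{q}-\frac{1}{2}}\cdot \tfrac{1}{2}\,\xi\sqrt{m}\,Q_{2\xi} = \tfrac{1}{2}\, m^{1/q}\,\xi\, Q_{2\xi} \geq \frac{c_4 \rho^2\, m^{1/q}}{d\,\sqrt{\kappa(\vc{z},r)}},
\]
where I used $\sqrt{(d+1)d}\leq d+1$ and collected constants. Reading off $1/\tau$ from the right-hand side yields precisely $\tau = \tilde{C}_3\,\rho^{-2}\sqrt{\kappa(\vc{z},r)}\,d\,m^{-1/q}$, as claimed in \eqref{eq:clifford_nsp_constants}. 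I expect the genuine content to reside entirely in the already-established Propositions~\ref{prop:Wm} and~\ref{prop:Qxi}; the main obstacle here is purely one of bookkeeping---choosing $\xi$ and $t$ so that the admissibility constraint $2\xi \leq 1/\sqrt{(d+1)d}$ of Proposition~\ref{prop:Qxi} holds, the two domination inequalities are met, and the resulting constants in the sampling rate, the failure probability, and $\tau$ all line up with the statement.
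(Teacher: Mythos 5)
Your proposal is correct and follows essentially the same route as the paper's proof: apply Corollary~\ref{cor:mendelson} to $T_{\rho,r}$, insert the bounds of Propositions~\ref{prop:Wm} and~\ref{prop:Qxi}, and choose $\xi$ of order $1/\sqrt{(d+1)d}$ and $t=\rho^2\sqrt{\tilde\gamma m/\kappa(\vc{z},r)}$ so that the sampling rate \eqref{eq:nsp_sampling_rate} forces the $Q$-term to dominate, yielding the claimed $\tau$ and failure probability. Your bookkeeping (origin of $\rho^{-6}$, the check that $m\geq 2d\log d$, and the absence of any union bound) matches the paper's.
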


We point out that the sampling rate \eqref{eq:nsp_sampling_rate} scales non-optimally in the NSP parameter $\rho \in (0,1)$. The required sampling rate in comparable statements, such as \cite[Theorem 3]{kabanava_stable_2015}, only scales proportionally to $\rho^{-2}$.
This non-optimality is due to the fact that the fourth moment bound in Lemma~\ref{lem:moments} implicitly depends on  the ``effective rank'' of $\mt{Z} \in T_{\rho,r}$ which is proportional to $\frac{r}{\rho^2}$ (see Lemma~\ref{lem:effective_low_rank}).
In turn, this effective rank also features in the bound on $Q_{\xi} \left( \mt{A}; T_{\rho,r} \right)$ and affects the results of Mendelson's Small Ball Method.
We believe that such a behavior is unavoidable when using Mendelson's Small Method for Clifford orbits, and intend to address this issue in the future.

\begin{proof}
Applying Corollary~\ref{cor:mendelson} with $\xi = \frac{1}{4 \sqrt{(d+1)d}}$ and $t = \rho^2 \sqrt{\frac{\tilde{\gamma} m}{\kappa (\vc{z},r)}}$---where $\tilde{\gamma}$ is a sufficiently small constant---implies
\begin{widetext}
\begin{align}
\inf_{\mt{Z} \in T_{\rho,r}} \left\| \mathcal{A}(\mt{Z}) \right\|_{\ell_q}
\geq & m^{\frac{1}{q}-\frac{1}{2}} \left( \sqrt{m}\frac{Q_{\frac{1}{2\sqrt{(d+1)d}}}(T_{\rho,r};\mt{A})}{4\sqrt{(d+1)d}}-2W_m (T_{\rho,r};\mt{A}) - \frac{\rho^2}{4} \sqrt{\frac{ \tilde{\gamma}m}{(d+1)d \kappa (\vc{z},r)}}
\right) \nonumber \\
\geq & \frac{m^{\frac{1}{q}-\frac{1}{2}}}{\sqrt{(d+1)d}}
\left( \frac{\frac{9}{16} \rho^2 \sqrt{m}}{4 \times 24 \sqrt{\kappa (\vc{z},r )}} -  \frac{2 \times 6.2098}{\rho} \sqrt{ r d \log (2d)}  - \frac{\rho^2}{4} \sqrt{\frac{\tilde{\gamma}m}{\kappa (\vc{z},r)}} \right)  \nonumber \\
\geq &
\frac{ \rho^2 m^{\frac{1}{q}-\frac{1}{2}}}{\sqrt{(d+1)d \kappa (\vc{z},r)}} \left( \frac{  \sqrt{m}}{171}- 13 \sqrt{ \frac{ \kappa (\vc{z},r)^2}{\rho^6} r d \log (2d)}- \frac{\sqrt{\tilde{\gamma} m}}{4} \right) \label{eq:nsp_aux1} 
\end{align}
\end{widetext}
with probability at least $1-\mathrm{e}^{-\frac{2\rho^4 \tilde{\gamma} m}{\kappa (\vc{z},r)}}$. In the second line, we have inserted the bounds provided by Proposition~\ref{prop:Wm} and Proposition~\ref{prop:Qxi}, respectively.
Let us now fix
\begin{equation*}
m \geq \frac{\tilde{C}_1}{\rho^6} \kappa (\vc{z},r) rd \log (2d),
\end{equation*}
where $\tilde{C}_1$ is a sufficiently large constant. 
Note that such a choice in particular assures $m \geq 2d \log (d)$ which justifies the applicability of Proposition~\ref{prop:Wm}. Moreover,
provided that $\tilde{\gamma}$ is small enough, this choice assures that the bracket expression in \eqref{eq:nsp_aux1} is lower bounded by $\frac{2 \sqrt{m}}{\tilde{C}_3}$, where $\tilde{C}_3$ is constant.
Inserting this novel bound into \eqref{eq:nsp_aux1} allows us to conclude
\begin{equation*}
\inf_{\mt{Z} \in T_{\rho,r}} \| \mathcal{A} (\mt{Z}) \|_{\ell_q} \geq 
 \frac{ 2 \rho^2 m^{\frac{1}{q}}}{\tilde{C}_3  \sqrt{(d+1)d \kappa (\vc{z},r)}} \geq \frac{ \rho^2 m^{\frac{1}{q}}}{\tilde{C}_3d \sqrt{ \kappa (\vc{z},r)}}
\end{equation*}
with high probability. Inserting this bound into \eqref{eq:nsp_sufficient} establishes the claimed Null Space Property for $\mathcal{A}$ with probability at least $1-\mathrm{e}^{-\frac{2\rho^4 \tilde{\gamma} m}{\kappa (\vc{z},r)}}$.

\end{proof}

\subsection{Derivation of Theorem~\ref{thm:main}} \label{sub:main_proof}

Suppose that $\mathcal{A}:H_d \to \mathbb{R}^m$ obeys a $r/\ell_q$-NSP in the sense of Definition~\ref{def:nsp}. 
Then, every approximately rank-$r$ matrix $X\in H_d$ can be estimated from noisy measurements of the form $y = \mathcal{A}(X) + \epsilon$.
One way to achieve stable reconstruction is via constrained nuclear norm minimization \eqref{eq:phaselift}, provided that the parameter $\eta$ obeys $\eta \geq \| \error \|_{\ell_q}$:

\begin{theorem}[Theorem 3.3 in \cite{kabanava_stable_2015} for hermitian matrices] \label{thm:nsp_implication}
Fix $r$, $q \geq 1$ and suppose that $\mathcal{A}: H_d \to \mathbb{R}^m$ obeys a $r/\ell_q$-NSP with constants $\rho \in (0,1)$ and $\tau >0$. Then 
\begin{align}
\| \mt{Z} - \mt{X} \|_2
\leq &\frac{C_\rho}{\sqrt{r}} ( \| \mt{Z} \|_1 - \| \mt{X} \|_1 + 2 \sigma_r (\mt{X}) ) \label{eq:nsp_implication}\\
+& D_\rho \tau \| \mathcal{A}(\mt{Z}-\mt{X}) \|_{\ell_q} \quad \forall \mt{X},\mt{Z} \in H_d, \nonumber
\end{align}
with $C_\rho = \frac{(1+\rho)^2}{1-\rho}$ and $D_\rho = \frac{3+\rho}{1-\rho}$. 
\end{theorem}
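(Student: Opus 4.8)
The plan is to transcribe the classical sparse-recovery implication ``robust NSP $\Rightarrow$ stable recovery'' (e.g.\ \cite[Theorem 4.25]{foucart_mathematical_2013}) into the matrix setting, under the dictionary sparsity $\leftrightarrow$ rank, $\ell_1 \leftrightarrow$ nuclear norm $\|\cdot\|_1$, $\ell_2 \leftrightarrow$ Frobenius norm $\|\cdot\|_2$, and best $s$-term truncation $\leftrightarrow$ best rank-$r$ truncation. I would work with the error matrix $\mt{M} = \mt{Z} - \mt{X}$ and abbreviate $u = \|\mathcal{A}(\mt{M})\|_{\ell_q}$ and $\eta_0 = \|\mt{Z}\|_1 - \|\mt{X}\|_1 + 2\sigma_r(\mt{X})$, so the goal becomes $\|\mt{M}\|_2 \le C_\rho\, r^{-1/2}\eta_0 + D_\rho\,\tau\,u$. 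Two elementary facts do most of the work. First, since $\mt{M}_r$ has rank at most $r$, the inequality $\|\mt{M}_r\|_1 \le \sqrt r\,\|\mt{M}_r\|_2$ upgrades the Frobenius NSP \eqref{eq:nsp} to its nuclear-norm form $\|\mt{M}_r\|_1 \le \rho\,\|\mt{M}_c\|_1 + \sqrt r\,\tau\,u$. Second, Ky Fan's maximum principle guarantees that the canonical truncation $\mt{M}_r$ maximises the retained nuclear mass and minimises the discarded one among all rank-$r$ truncations; this lets me evaluate the NSP at the truncation adapted to $\mt{X}$ whenever convenient.

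Next I would establish a cone inequality via the standard tangent-space decomposition. Let $P$ be the orthogonal projection onto the leading rank-$r$ eigenspace of $\mt{X}$ and write $\mt{M}_0 = P^\perp\mt{M}P^\perp$ and $\mt{M}_1 = \mt{M} - \mt{M}_0$, the latter of rank at most $2r$. Since $\mt{X}_r = P\mt{X}P$ and $\mt{M}_0$ have mutually orthogonal row and column spaces, their nuclear norms add, $\|\mt{X}_r + \mt{M}_0\|_1 = \|\mt{X}_r\|_1 + \|\mt{M}_0\|_1$. Inserting this into the triangle inequality applied to $\|\mt{Z}\|_1 = \|\mt{X} + \mt{M}\|_1$, together with $\|\mt{X}\|_1 = \|\mt{X}_r\|_1 + \sigma_r(\mt{X})$, yields the cone inequality $\|\mt{M}_0\|_1 \le \|\mt{M}_1\|_1 + \eta_0$. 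Combined with the nuclear-norm NSP---and Ky Fan's principle, used to evaluate the NSP at the truncation adapted to $\mt{X}$ rather than the canonical one---this produces a bound of the form $\|\mt{M}\|_1 \le (1-\rho)^{-1}\big((1+\rho)\eta_0 + 2\sqrt r\,\tau\,u\big)$.

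Finally I would pass back to the Frobenius norm through $\|\mt{M}\|_2 \le \|\mt{M}_r\|_2 + \|\mt{M}_c\|_2$. The head is controlled directly by the NSP \eqref{eq:nsp}, giving $\|\mt{M}_r\|_2 \le \rho\,r^{-1/2}\|\mt{M}_c\|_1 + \tau\,u \le \rho\,r^{-1/2}\|\mt{M}\|_1 + \tau\,u$. For the tail I would use the singular-value Stechkin estimate $\|\mt{M}_c\|_2 \le r^{-1/2}\|\mt{M}\|_1$, which holds because $\sigma_i(\mt{M}) \le i^{-1}\|\mt{M}\|_1$ and $\sum_{i>r} i^{-2} \le r^{-1}$. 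Adding the two estimates gives $\|\mt{M}\|_2 \le (1+\rho)\,r^{-1/2}\|\mt{M}\|_1 + \tau\,u$, and inserting the $\ell_1$ bound from the previous step collapses everything into $\|\mt{M}\|_2 \le \frac{(1+\rho)^2}{1-\rho}\,r^{-1/2}\eta_0 + \big(\tfrac{2(1+\rho)}{1-\rho}+1\big)\tau\,u$, which is exactly \eqref{eq:nsp_implication} with $C_\rho = \frac{(1+\rho)^2}{1-\rho}$ and $D_\rho = \frac{3+\rho}{1-\rho}$, reproducing the sparse-vector constants.

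The main obstacle is the cone step, where all the genuinely matrix-theoretic content sits. In the vector setting the split over a support set is costless because $\|v_S\|_1 + \|v_{\bar S}\|_1 = \|v\|_1$; for matrices the nuclear norm fails to be additive over the $P$/$P^\perp$ blocks, and a naive pinching bound can discard essentially all of $\mt{M}$ when its eigenbasis is misaligned with $P$ (for instance an off-diagonal $\mt{M}$). The remedy is the tangent-space decomposition $\mt{M} = \mt{M}_0 + \mt{M}_1$, whose orthogonal-support additivity restores the needed identity at the price of the rank-$2r$ object $\mt{M}_1$; the delicate point is then to reconcile this rank-$2r$ truncation with the order-$r$ NSP through Ky Fan's maximum principle without introducing spurious factors, so that the final constants match the clean scalar ones. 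Hermiticity helps throughout, since every truncation and eigenprojection may be taken self-adjoint with real spectrum, and once these matrix inequalities are in place the remaining algebra is a direct transcription of the corresponding scalar computation.
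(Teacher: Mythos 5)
First, a remark on the comparison itself: the paper contains no proof of Theorem~\ref{thm:nsp_implication} --- it is imported verbatim from \cite{kabanava_stable_2015}, as the bracketed attribution indicates, and is used as a black box in Section~\ref{sub:main_proof}. So your proposal must stand on its own. Your steps 1 and 3 are fine: the nuclear-norm upgrade $\|\mt{M}_r\|_1 \le \rho\|\mt{M}_c\|_1 + \sqrt{r}\,\tau u$, the Stechkin-type tail bound $\|\mt{M}_c\|_2 \le r^{-1/2}\|\mt{M}\|_1$, and the final arithmetic do reproduce $C_\rho$ and $D_\rho$ --- \emph{provided} you have the intermediate bound $\|\mt{M}\|_1 \le (1-\rho)^{-1}\bigl((1+\rho)\eta_0 + 2\sqrt{r}\,\tau u\bigr)$. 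That bound requires the cone inequality for the \emph{canonical} spectral truncation, $\|\mt{M}_c\|_1 \le \|\mt{M}_r\|_1 + \eta_0$, because the NSP of Definition~\ref{def:nsp} is pinned to exactly that truncation.

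This is where the gap sits. Your tangent-space argument correctly yields $\|\mt{M}_0\|_1 \le \|\mt{M}_1\|_1 + \eta_0$ with $\mt{M}_0 = P^\perp \mt{M} P^\perp$ and $\mathrm{rank}(\mt{M}_1)\le 2r$, but this is not the canonical cone inequality, and the NSP cannot be ``evaluated at the truncation adapted to $\mt{X}$'': inequality \eqref{eq:nsp} controls only $\mt{M}_r$, the best rank-$r$ approximation of $\mt{M}$, and says nothing about an $\mt{X}$-adapted rank-$2r$ piece. Bridging the two via Ky Fan, as you suggest, loses constants rather than preserving them: one gets $\|\mt{M}_c\|_1 - \|\mt{M}_r\|_1 \le \sigma_{2r}(\mt{M})_1 \le \|\mt{M}_0\|_1 \le \|\mt{M}_1\|_1 + \eta_0 \le 2\|\mt{M}_r\|_1 + \eta_0$, hence $\|\mt{M}_c\|_1 \le 3\|\mt{M}_r\|_1 + \eta_0$, which after inserting the NSP forces $\rho < 1/3$ and yields strictly worse constants than claimed. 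The decomposition $\mt{M}=\mt{M}_0+\mt{M}_1$ is the right tool when the null space property is formulated over all rank-$2r$ supports (as in earlier matrix NSP literature), but not for the order-$r$, canonical-truncation version used here. What is actually needed --- and what \cite{kabanava_stable_2015} supplies --- is a direct singular-value perturbation lemma of the form: for $\mathrm{rank}(\mt{A})\le r$ and arbitrary $\mt{B}$, $\|\mt{A}+\mt{B}\|_1 \ge \|\mt{A}\|_1 - \|\mt{B}_r\|_1 + \|\mt{B}_c\|_1$, applied with $\mt{A}=\mt{X}_r$, $\mt{B}=\mt{M}$ after peeling off $\mt{X}_c$ by the triangle inequality; this gives the canonical cone inequality with constant one and is proved from Weyl's inequalities for singular values of sums together with the Ky Fan $r$-norm triangle inequality. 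Without that lemma (or an equivalent), your argument does not reach the stated constants or the full range $\rho\in(0,1)$.
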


Now let $X \in H_d$ be the matrix of interest and $Z^\sharp$ be the minimizer of \eqref{eq:phaselift}. 
By construction, $X$ is also a feasible point of this minimization and optimality of $Z^\sharp$ assures $\| Z^\sharp \|_1 - \| X\|_1 \leq 0$.
Moreover:
\begin{align*}
\| \mathcal{A}(\mt{X})-\mt{Z}) \|_{\ell_q} \leq & \| \mathcal{A}(\mt{X})-\vc{y} \|_{\ell_q} + \| \mathcal{A}(\mt{Z}) - \vc{y} \|_{\ell_q} \\
\leq & \| \error \|_{\ell_q} + \eta \leq 2 \eta.
\end{align*}
Inserting these inequalities into \eqref{eq:nsp_implication} implies
\begin{equation}
\| \mt{Z}^\sharp - \mt{X} \|_2 \leq \frac{2 C_\rho}{\sqrt{r}} \sigma_r (\mt{X})_1 + 2 D_\rho \tau \eta,
\label{eq:nuclear_norm_reconstruction}
\end{equation}
provided that $\mathcal{A}$ obeys a $r/\ell_q$-NSP.

Now, set $d=2^n$ and fix $\rho = \rho_0 \in (0,1)$, as well as
\begin{equation*}
C_1 \geq \frac{2 \tilde{C}_1}{\rho_0^6} \quad \textrm{and} \quad \gamma = 2 \rho_0^4 \tilde{\gamma},
\end{equation*}
where $\tilde{C}_1, \tilde{\gamma}_1$ are constants featuring in Theorem~\ref{thm:clifford_nsp}.
This statement then assures that choosing 
\begin{equation*}
m \geq C_1 \kappa (\vc{z}, r) r d \log (d) \geq \frac{ \tilde{C}_1}{\rho_0^6} \kappa (z,r) r d \log (2d)
\end{equation*}
random elements of a Clifford orbit $\Cli_n \cdot \vc{z} \vc{z}^*$ results in a measurement operator $\mathcal{A}: H_d \to \mathbb{R}^m$ that obeys a $r/\ell_q$-NSP with probability at least $1-\mathrm{e}^{-\frac{\gamma m}{\kappa (\vc{z}, r)}}$.
The associated constants are
\begin{equation*}
\rho= \rho_0\quad \textrm{and} \quad \tau = \frac{\tilde{C}_3}{\rho_0^2} \sqrt{ \kappa (z,r)} d m^{-\frac{1}{q}}.
\end{equation*}
Inserting these constants into \eqref{eq:nuclear_norm_reconstruction}
then implies
\begin{equation*}
\| \mt{Z}^\sharp - \mt{X} \|_2 \leq  \frac{C_2}{\sqrt{r}} \sigma_r (\mt{X}) + C_3 \sqrt{ \kappa (\vc{z},r)} d m^{-\frac{1}{q}} \eta
\end{equation*}
with constants
$
C_2 = \frac{2(1+\rho_0)^2}{1-\rho_0}$ and
$
C_3~=~\tilde{C}_3 \frac{(3+\rho_0)}{(1-\rho_0) \rho_0^2}.
$

\subsection{Extension to positive semidefinite matrix reconstruction} \label{sub:psd}

Suppose that the matrices of interest $\mt{X} \in H_d$ are not only approximately low rank, but also positive semidefinite. 
Also, we shall assume a phaseless measurement process $A_k = \vc{a}_k \vc{a}^*_k$ that is isotropic in the sense that
\begin{equation*}
\mathbb{E} \left[ A_k \right] = \frac{1}{d} \mathbb{I}.
\end{equation*}
Note that this is the case for random Clifford orbit measurements, since they obey equation~\eqref{eq:infty_design} for $t=1$.
For any $\beta \in (0,1)$ this expected concentration together with unit normalization of the $a_k$'s implies that the measurement matrices contained in a concrete sampling operator $\mathcal{A}:H_d \to \mathbb{R}^m$ obey
\begin{equation}
\mathrm{Pr} \left[ \left\| \frac{d}{m} \sum_{k=1}^m a_k a_k^* - \mathbb{I} \right\|_\infty \geq \beta \right] \leq d \mathrm{e}^{-\frac{4 \beta^2 m}{8(d+1)}}. 
\label{eq:Y_deviation}
\end{equation}
This follows from standard matrix concentration inequalities, see for instance \cite[Proof of Proposition 8.2]{kabanava_stable_2015}.
We shall fix $Y_{\mathcal{A}} := \frac{d}{m} \sum_{k=1}^m a_k a_k^* \in H_d$ and $\beta_0 = \frac{\sqrt{2}-1}{\sqrt{2}+1}$---which both may not be optimal---to simplify presentation in the remainder of this section.

Positive semi-definiteness of $X$ and $\| Y_{\mathcal{A}} - \mathbb{I} \|_{\infty} \leq \beta$ allows for replacing constrained nuclear norm minimization \eqref{eq:phaselift} by the simpler reconstruction algorithm \eqref{eq:nnls}:
\begin{equation*} 
	  \mt{Z}^\sharp = \underset{\mt{Z} \geq \mt{0}}{\textrm{argmin}} \left\| \mathcal{A}(\mt{Z}) - \vc{y} \right\|_{\ell_q}.
\end{equation*}

\begin{theorem}[Special case of Theorem 8.1 in \cite{kabanava_stable_2015}] \label{thm:nsp_psd}
Fix $r,q \geq 1$ and suppose that $\mathcal{A}: H_d \to \mathbb{R}^m$ obeys a $r/\ell_q$-NSP with parameters $\rho \in (0,\frac{1}{2})$ and $\tau>0$, as well as $\left\| Y_{\mathcal{A}} - \mathbb{I} \right\|_{\infty} \leq \beta_0$
Then, any pair $\mt{X}, \mt{Z} \in H_d$ of positive semidefinite matrices obeys
\begin{equation*}
\| \mt{Z} - \mt{X} \|_2
\leq \frac{ \bar{C}_\rho}{\sqrt{r}}  \sigma_r (\mt{X}) + \bar{D}_\rho \left( dm^{-\frac{1}{q}} + \tau \right) \| \mathcal{A} (\mt{X}-\mt{Z}) \|_{\ell_q}
\end{equation*}
with
$
\bar{C}_\rho = 4 \frac{(1+2\rho)^2}{1-2\rho}
$
and
$
\bar{D}_\rho = 2 \frac{3+2 \rho}{1-2 \rho}
$.  
\end{theorem}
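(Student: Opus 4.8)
The plan is to reduce the statement to the Hermitian null space estimate of Theorem~\ref{thm:nsp_implication}, the only real difficulty being that the least-squares reconstruction \eqref{eq:nnls} does \emph{not} come with the optimality inequality $\|\mt{Z}\|_1-\|\mt{X}\|_1\le 0$ that drove the nuclear-norm argument of Section~\ref{sub:main_proof}. Instead I would exploit positive semidefiniteness to control the nuclear-norm gap directly. Writing $\mt{W}:=\mt{Z}-\mt{X}$ and using $\|\mt{Z}\|_1=\tr(\mt{Z})$, $\|\mt{X}\|_1=\tr(\mt{X})$ for PSD matrices, one has the exact identity $\|\mt{Z}\|_1-\|\mt{X}\|_1=\tr(\mt{W})$, so the whole burden shifts to estimating a single scalar: the trace of the error.

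The second step is to observe that, since the ensemble is near-isotropic, the trace is almost a measurement. With $Y_{\mathcal{A}}=\tfrac{d}{m}\sum_k \vc{a}_k\vc{a}_k^*$ one has $\tr(Y_{\mathcal{A}}\mt{W})=\tfrac{d}{m}\sum_k(\mathcal{A}(\mt{W}))_k$, whence, splitting off the identity,
\[
\tr(\mt{W})=\tr(Y_{\mathcal{A}}\mt{W})-\tr\big((Y_{\mathcal{A}}-\mathbb{I})\mt{W}\big).
\]
The first term is bounded by $d\,m^{-1/q}\|\mathcal{A}(\mt{W})\|_{\ell_q}$ (using $\|\cdot\|_{\ell_1}\le m^{1-1/q}\|\cdot\|_{\ell_q}$), while matrix H\"older together with the hypothesis $\|Y_{\mathcal{A}}-\mathbb{I}\|_\infty\le\beta_0$ bounds the second by $\beta_0\|\mt{W}\|_1$. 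Thus $\tr(\mt{W})\le d\,m^{-1/q}\|\mathcal{A}(\mt{W})\|_{\ell_q}+\beta_0\|\mt{W}\|_1$. Feeding this into Theorem~\ref{thm:nsp_implication} already yields a Frobenius bound, but with an unwanted $\beta_0\|\mt{W}\|_1$ contamination on the right-hand side.

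Removing that contamination is the crux of the argument. I would first extract from the $r/\ell_q$-NSP a self-contained nuclear-norm estimate: combining $\|\mt{W}_r\|_1\le\sqrt{r}\|\mt{W}_r\|_2$ with the NSP and the standard cone inequality $\|\mt{W}_c\|_1\le(\|\mt{Z}\|_1-\|\mt{X}\|_1+2\sigma_r(\mt{X}))+\|\mt{W}_r\|_1$ gives, after solving for $\|\mt{W}_c\|_1$,
\[
\|\mt{W}\|_1\le\tfrac{1+\rho}{1-\rho}\big(\|\mt{Z}\|_1-\|\mt{X}\|_1+2\sigma_r(\mt{X})\big)+\tfrac{2\sqrt{r}}{1-\rho}\,\tau\,\|\mathcal{A}(\mt{W})\|_{\ell_q}.
\]
Substituting the trace identity and the Step-2 bound turns this into a self-referential inequality for $\|\mt{W}\|_1$; since $\rho<\tfrac12$ forces $\tfrac{1+\rho}{1-\rho}<3$, the choice $\beta_0=\tfrac{\sqrt2-1}{\sqrt2+1}$ (equivalently $\tfrac{1+\beta_0}{1-\beta_0}=\sqrt2$) keeps the coefficient $\tfrac{1+\rho}{1-\rho}\beta_0$ strictly below $1$, so $\|\mt{W}\|_1$ can be solved for and bounded purely by $\sigma_r(\mt{X})$ and $\|\mathcal{A}(\mt{W})\|_{\ell_q}$. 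Inserting this $\ell_1$ bound back into the contaminated Frobenius estimate---noting that the $\sqrt{r}\,\tau\,\|\mathcal{A}(\mt{W})\|_{\ell_q}$ term, once divided by $\sqrt{r}$, produces exactly the $\tau$ scaling---collects every contribution into the advertised shape, and the resulting products of $C_\rho$, $D_\rho$, $\tfrac{1+\rho}{1-\rho}$ and $(1-\tfrac{1+\rho}{1-\rho}\beta_0)^{-1}$ can be organized, using $\rho<\tfrac12$, into the closed forms $\bar C_\rho=4\tfrac{(1+2\rho)^2}{1-2\rho}$ and $\bar D_\rho=2\tfrac{3+2\rho}{1-2\rho}$.

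The main obstacle is precisely this absorption: one must simultaneously isolate an NSP-only bound on $\|\mt{W}\|_1$ and calibrate $\beta_0$ so that the PSD-induced $\beta_0\|\mt{W}\|_1$ term can be transported to the left without degrading the estimate, all while keeping the constants in the stated closed form. This is also where the restriction $\rho\in(0,\tfrac12)$ is used, as it guarantees $\tfrac{1+\rho}{1-\rho}\beta_0<1$ and hence the solvability of the self-referential inequality.
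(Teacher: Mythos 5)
The paper does not actually prove this statement; it is imported verbatim as a special case of Theorem~8.1 in \cite{kabanava_stable_2015}, so there is no internal proof to compare against. Your reconstruction follows the same strategy as the proof in that reference: the identity $\|\mt{Z}\|_1-\|\mt{X}\|_1=\tr(\mt{Z}-\mt{X})$ for PSD matrices, the near-isotropy bound $\tr(\mt{W})\le d\,m^{-1/q}\|\mathcal{A}(\mt{W})\|_{\ell_q}+\beta_0\|\mt{W}\|_1$, and the self-referential absorption of the $\beta_0\|\mt{W}\|_1$ term made possible by $\rho<\tfrac12$ and the calibration of $\beta_0$. The key ideas are all present and the argument goes through.

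One step is glossed over more than it should be: the ``standard cone inequality'' $\|\mt{W}_c\|_1\le(\|\mt{Z}\|_1-\|\mt{X}\|_1+2\sigma_r(\mt{X}))+\|\mt{W}_r\|_1$ is immediate for vectors via support decomposition, but for matrices the nuclear norm is not decomposable in this naive way, since the best rank-$r$ part of $\mt{W}=\mt{Z}-\mt{X}$ need not align with the eigenspaces of $\mt{X}_r$. One needs either the Recht--Fazel--Parrilo-type decomposition (splitting $\mt{W}$ into a part of rank at most $2r$ aligned with $\mt{X}_r$ and an orthogonal remainder on which the nuclear norm adds) or, equivalently, the nuclear-norm half of \cite[Theorem~3.3]{kabanava_stable_2015}, which delivers exactly your intermediate inequality $\|\mt{W}\|_1\le\tfrac{1+\rho}{1-\rho}(\|\mt{Z}\|_1-\|\mt{X}\|_1+2\sigma_r(\mt{X}))+\tfrac{2\sqrt{r}}{1-\rho}\tau\|\mathcal{A}(\mt{W})\|_{\ell_q}$ as a consequence of the $r/\ell_q$-NSP. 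If you invoke that lemma rather than the vector-style cone argument, the step is rigorous. The final bookkeeping producing $\bar{C}_\rho=4\tfrac{(1+2\rho)^2}{1-2\rho}$ and $\bar{D}_\rho=2\tfrac{3+2\rho}{1-2\rho}$ is asserted rather than carried out, but the structure of your bounds (every term either carries a $1/\sqrt{r}\,\sigma_r(\mt{X})$ or a $(dm^{-1/q}+\tau)\|\mathcal{A}(\mt{W})\|_{\ell_q}$ factor) makes it clear that constants of this form emerge; these particular closed forms correspond to the original theorem's constants evaluated with the NSP parameter degraded from $\rho$ to $2\rho$, which is precisely what the absorption step costs.
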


Now note that the minimizer $Z^\sharp$ of \eqref{eq:nnls}, as well as any matrix $X$ of interest are positive semidefinite. Moreover, $Z^\sharp$ obeys
\begin{align*}
\left\| \mathcal{A}\left(\mt{X}-\mt{Z}^\sharp \right) \right\|_{\ell_q}
=& \left\| \vc{y}-\error - \mathcal{A}\left(\mt{Z}^\sharp \right) \right\|_{\ell_q} \\
\leq& \| \error \|_{\ell_q} + \left\| \mathcal{A} \left( \mt{Z}^\sharp \right) - \vc{y} \right\|_{\ell_q} \\
\leq &\| \error \|_{\ell_q} + \left\| \mathcal{A}(\mt{X}) - \vc{y} \right\|_{\ell_q} = 2 \| \error \|_{\ell_q},
\end{align*}
where the last inequality is due to the fact that $\mt{X}$ itself is a feasible point of the optimization \eqref{eq:nnls}. 
Inserting this bound into the assertion of Theorem~\ref{thm:nsp_psd} gives
\begin{equation}
\left\| \mt{Z}^\sharp - \mt{X} \right\|_2
\leq  \frac{ \tilde{C}_\rho}{\sqrt{r}} \sigma_r (\mt{X}) + 2 \tilde{D}_\rho (d m^{-\frac{1}{q}} + \tau ) \| \error \|_{\ell_q}.\label{eq:nnls_reconstruction}
\end{equation}

In order to derive Theorem~\ref{thm:main_psd}, we fix $d=2^n$,  $\rho = \rho_0 \in \left(0, \frac{1}{2} \right)$ and once more set $C_1 \geq \frac{2 \tilde{C}_1}{\rho_0^6}$ and $\gamma = 2 \rho_0^4 \tilde{\gamma}$. 
Then for any $r,q$,  Theorem~\ref{thm:clifford_nsp} assures that a measurement operator $\mathcal{A}: H_d \to \mathbb{R}^m$ containing $m \geq C_1 \kappa (z,r)r d \log (d)$ random elements of a Clifford orbit $\Cli_n \cdot \vc{z} \vc{z}^*$ obeys the $r/\ell_q$-NSP with parameters $\rho = \rho_0$ and $\tau = \frac{\tilde{C}_3}{\rho_0^2} \sqrt{ \kappa (z,r)} d m^{-\frac{1}{q}}$. The probability of failure for this to be true is bounded by $\mathrm{e}^{-\frac{\gamma m}{\kappa (z,r)}}$.

Moreover, Eq.~\eqref{eq:Y_deviation} asserts that the second condition in Theorem~\ref{thm:nsp_psd}, namely $\| Y_{\mathcal{A}} - \mathbb{I}\|_\infty \leq \beta_0$, is met with probability at least $1- d \mathrm{e}^{-\frac{\beta_0^2 m}{8(d+1)}}$. 

A union bound over these two individual probabilities of failure yields
\begin{align*}
\mathrm{e}^{-\frac{ \gamma m}{\kappa (\vc{z},r)}} + d \mathrm{e}^{-\frac{4 \beta_0^2 m}{8(d+1)}}
\leq & (d+1) \mathrm{e}^{-\frac{ \min \left\{ \frac{1}{2} \beta_0^2, \gamma \right\} m}{\max \left\{ \kappa (\vc{z},r), 8(d+1) \right\}}} \\
\leq & (d+1) \mathrm{e}^{-\frac{ \gamma m}{d+1}}
\end{align*}

Provided that both assertions hold true, Eq.~\eqref{eq:nnls_reconstruction} implies  
\begin{align*}
\left\| \mt{Z}^\sharp - \mt{X} \right\|_2
\leq &  \frac{\tilde{C}_{\rho_0}}{\sqrt{r}} \sigma_r(\mt{X}) + 2 \tilde{D}_{\rho_0} \left( d m^{-\frac{1}{q}} + \tau \right) \| \error \|_{\ell_q} \\
\leq &\frac{\hat{C}_2}{\sqrt{r}} \sigma_r (X) + \hat{C}_3 \sqrt{ \kappa (z,r)} d m^{-\frac{1}{q}} \| \error \|_{\ell_q}
\end{align*}
with constants $\hat{C}_2 = \frac{(1+2 \rho)^2}{1-2\rho}$ and $\hat{C}_3 = 4 \frac{3+2 \rho}{1-2 \rho} \left( 1 + \frac{\tilde{C}_3}{\rho_0^2} \right)$.

\subsection*{Acknowledgments}

This work has been supported by the Excellence Initiative of the German Federal and State Governments (Grant ZUK 81), the ARO under contract W911NF-14-1-0098 (Quantum Characterization, Verification, and Validation), and the DFG (SPP1798 CoSIP). 
Major parts of this project were undertaken while DG and RK participated in the \emph{Mathematics of Signal Processing} program of the Hausdorff Research Institute of Mathematics at the University of Bonn.

\section{Appendix: Reconstructing stabilizer states from stabilizer
measurements}
\label{sec:heuristic}

Here, we present a heuristic argument that suggests that $O(n^2)$ noise-free stabilizer measurements might be sufficient to identify an unknown stabilizer state. 
The argument neither suggests an algorithm, nor does it seem easy to base a rigorous proof on it.

We note that there are results (e.g.\ the presentation archived at \cite{aaronson_identifying_2006} and an announcement \cite{ashley_presentation} of results due to 
Montanaro,
Aaronson,
Chen, 
Gottesman, and Liew)
in quantum information stating that a stabilizer state can be identified from $O(n)$ measurements of stabilizer bases. 
These results come with matching converses, based on Holevo's bound.
Their $O(n)$ basis measurements involve $O(n)\, 2^n$ inner products -- exponentially more than we conjecture are necessary.
There is no direct contradiction, however, as the quantum model is weaker than the one employed here:
In the quantum setup, the squared inner products $|\langle a_k, x \rangle|^2$ need to be estimated through quantum mechanical experiments, while in the classical noise-free model, we have direct access to their values.

For the analysis, we assume that $x, a_1, \dots, a_m$ are uniformly drawn stabilizer states and set $y_k = |\langle x, a_k\rangle|^2$. 
Recovery is possible when
\begin{equation*}
	I(x:y_1, \dots, y_m|a_1, \dots, a_m) = H(x),
\end{equation*}
i.e.\ when the mutual information between the object $x$ to be recovered and the outcomes $y_k$, conditioned on the choices of measurement, reach the entropy of $x$.
Here, we compute $H(x)$ and 
\begin{equation*}
	I(x:y_k|a_k) = H(y_k|a_k).
\end{equation*}
using the results of Ref.~\cite{kueng_qubit_2015}.

Adopting the notation of Ref.~\cite{kueng_qubit_2015}, the entropy of $x$ is
\begin{equation*}
	H(x) = \log_2 | \operatorname{Stabs}(2,n) | \simeq \frac12n(n+1)=\mathcal{O}(n^2).
\end{equation*}
The approximation becomes tight as $n\to\infty$, as can be checked numerically using the explicit formulas in \cite{aaronson2004improved,kueng_qubit_2015}.

To compute $H(y_k|a_k)$, we need to find the distribution of $y_k$, i.e.\ of the squared inner product between a fixed stabilizer state $a_k$ and a random one $x$.
Let $K$ be the intersection between the Lagrangian subspaces associated with $a_k$ and with $x$, respectively.
Then according to \cite{kueng_qubit_2015}, the squared inner product $y_k$ is equal to $2^{\dim K - n}$ if the respective phase functions agree on $K$; else it is equal to $0$.
The former event occurs with probability $2^{-\dim K}$.
Based on this, one can compute the distribution of $y_k$ conditioned on $a_k$:
\begin{eqnarray*}
	\operatorname{Pr}[ y_k = 0 | a_k] &=& \sum_{l=0}^n (1-2^{-l}), \\
	\operatorname{Pr}[ y_k = 2^{-l} | a_k ] &=& 2^{-l}\frac{\kappa(2,n,l)}{|\operatorname{Stabs}(2,n)|}\qquad(l=0,\dots,n),
\end{eqnarray*}
where, following \cite{kueng_qubit_2015}, $\kappa(2,n,l)$ denotes the number of Lagrangian subspaces intersecting a given Lagrangian suspace in $l$ dimensions.
One can plug these expressions 
into a computer algebra system to compute the entropy of the distribution.
It turns out to converge as $n\to\infty$ to
\begin{equation*}
	I(x:y_k|a_k) = H(y_k|a_k) \simeq 1.719 = \mathcal{O}(1).
\end{equation*}

Now it is \emph{not} true that $I(x:y_1, \dots, y_m|a_1, \dots, a_m)$ equals $m\,I(x:y_k|a_k)$. 
But it seems to us to be a plausible assumption that the conditional mutual information increases roughly linearly with $m$ until it reaches $H(x)$.
If true, this would imply that recovery is possible from
\begin{equation*}
	m \simeq H(x)/I(x:y_k|a_k) = \mathcal{O}(n^2)
\end{equation*}
stabilizer measurements.

Verifying or disproving this statement is an interesting open problem.

\bibliographystyle{ieeetr}
\bibliography{cliffinv}

\begin{thebibliography}{10}

\bibitem{walther_question_1963}
A.~Walther, ``The question of phase retrieval in optics,'' {\em J. Mod.
  Optic.}, vol.~10, no.~1, pp.~41--49, 1963.

\bibitem{millane_phase_1990}
R.~Millane, ``Phase retrieval in crystallography and optics,'' {\em JOSA A},
  vol.~7, pp.~394--411, 1990.

\bibitem{fienup_phase_1987}
C.~Fienup and J.~Dainty, ``Phase retrieval and image reconstruction for
  astronomy,'' in {\em Image Recovery: Theory and Application}, pp.~231--275,
  Elsevier, 1987.

\bibitem{fienup_hubble_1993}
J.~R. Fienup, J.~C. Marron, T.~J. Schulz, and J.~H. Seldin, ``Hubble space
  telescope characterized by using phase-retrieval algorithms,'' {\em Appl.
  Optics}, vol.~32, no.~10, pp.~1747--1767, 1993.

\bibitem{chapman_femtosecond_2006}
H.~N. Chapman, A.~Barty, M.~J. Bogan, S.~Boutet, M.~Frank, S.~P. Hau-Riege,
  S.~Marchesini, B.~W. Woods, S.~Bajt, W.~H. Benner, {\em et~al.},
  ``Femtosecond diffractive imaging with a soft-{X}-ray free-electron laser,''
  {\em Nat. Phys.}, vol.~2, no.~12, pp.~839--843, 2006.

\bibitem{pfeiffer_phase_2006}
F.~Pfeiffer, T.~Weitkamp, O.~Bunk, and C.~David, ``Phase retrieval and
  differential phase-contrast imaging with low-brilliance {X}-ray sources,''
  {\em Nat. Phys.}, vol.~2, no.~4, pp.~258--261, 2006.

\bibitem{flammia_minimal_2005}
S.~T. Flammia, A.~Silberfarb, and C.~M. Caves, ``Minimal informationally
  complete measurements for pure states,'' {\em Found. Phys.}, vol.~35, no.~12,
  pp.~1985--2006, 2005.

\bibitem{peres_quantum_2006}
A.~Peres, {\em Quantum theory: concepts and methods}, vol.~57.
\newblock Springer, 2006.

\bibitem{heinosaari_quantum_2013}
T.~Heinosaari, L.~Mazzarella, and M.~M. Wolf, ``Quantum tomography under prior
  information,'' {\em Commun. Math. Phys.}, vol.~318, no.~2, pp.~355--374,
  2013.

\bibitem{baldwin_strictly_2016}
C.~H. Baldwin, I.~H. Deutsch, and A.~Kalev, ``Strictly-complete measurements
  for bounded-rank quantum-state tomography,'' {\em Phys. Rev. A}, vol.~93,
  no.~5, p.~052105, 2016.

\bibitem{carmeli_efficient_2016}
C.~Carmeli, T.~Heinosaari, M.~Kech, J.~Schultz, and A.~Toigo, ``Efficient pure
  state quantum tomography from five orthonormal bases,'' {\em preprint
  arXiv:1604.02970}, 2016.

\bibitem{fienup_phase_1982}
J.~R. Fienup, ``Phase retrieval algorithms: a comparison,'' {\em Appl. Optics},
  vol.~21, no.~15, pp.~2758--2769, 1982.

\bibitem{candes_phase_2013}
E.~J. Cand{\`e}s, Y.~C. Eldar, T.~Strohmer, and V.~Voroninski, ``Phase
  retrieval via matrix completion,'' {\em SIAM J. Imaging Sci.}, vol.~6,
  pp.~199--225, 2013.

\bibitem{candes_exact_2013}
E.~{C}and{\`e}s, T.~{S}trohmer, and V.~{V}oroninski, ``{P}hase{L}ift: {E}xact
  and stable signal recovery from magnitude measurements via convex
  programming,'' {\em {C}omm. {P}ure {A}ppl. {M}ath.}, vol.~66, pp.~1241--1274,
  2013.

\bibitem{candes_solving_2013}
E.~J. Cand{\`e}s and X.~Li, ``Solving quadratic equations via {PhaseLift} when
  there are about as many equations as unknowns,'' {\em Found. Comput. Math.},
  pp.~1--10, 2013.

\bibitem{gross_partial_2014}
D.~Gross, F.~Krahmer, and R.~Kueng, ``A partial derandomization of
  {P}hase{L}ift using spherical designs,'' {\em J. Fourier Anal. Appl.},
  pp.~1--38, 2014.

\bibitem{alexeev_phase_2014}
B.~Alexeev, A.~S. Bandeira, M.~Fickus, and D.~G. Mixon, ``Phase retrieval with
  polarization,'' {\em SIAM J. Imaging Sci.}, vol.~7, no.~1, pp.~35--66, 2014.

\bibitem{candes_coded_2015}
E.~J. Cand{\`e}s, X.~Li, and M.~Soltanolkotabi, ``Phase retrieval from coded
  diffraction patterns,'' {\em Appl. Comput. Harmonic Anal.}, vol.~39, no.~2,
  pp.~277--299, 2015.

\bibitem{gross_improved_2015}
D.~Gross, F.~Krahmer, and R.~Kueng, ``Improved recovery guarantees for phase
  retrieval from coded diffraction patterns,'' {\em Appl. Comput. Harmonic
  Anal.}, 2015.
\newblock doi:10.1016/j.acha.2015.05.004.

\bibitem{kueng_orthonormal_2015}
R.~Kueng, ``Low rank matrix recovery from few orthonormal basis measurements,''
  in {\em 2015 International Conference on Sampling Theory and Applications
  (SampTA)}, pp.~402--406, IEEE, 2015.

\bibitem{kueng_low_2015}
R.~{K}ueng, H.~{R}auhut, and U.~Terstiege, ``{L}ow rank matrix recovery from
  rank one measurements,'' {\em Appl. Comput. Harmonic Anal.}, 2015.
\newblock DOI:10.1016/j.acha.2015.07.007.

\bibitem{salanevich_polarization_2015}
P.~Salanevich and G.~E. Pfander, ``Polarization based phase retrieval for
  time-frequency structured measurements,'' in {\em 2015 International
  Conference on Sampling Theory and Applications (SampTA)}, pp.~187--191, IEEE,
  2015.

\bibitem{tropp_convex_2015}
J.~A. {T}ropp, ``{C}onvex recovery of a structured signal from independent
  random linear measurements,'' in {\em Sampling Theory, a Renaissance},
  pp.~67--101, Birkh{\"a}user/Springer, 2015.

\bibitem{krahmer_phase_2016}
F.~Krahmer and Y.-K. Liu, ``Phase retrieval without small-ball probability
  assumptions,'' {\em preprint arXiv:1604.07281}, 2016.

\bibitem{kabanava_stable_2015}
M.~Kabanava, R.~Kueng, H.~Rauhut, and U.~Terstiege, ``Stable low-rank matrix
  recovery via null space properties,'' {\em Inf. Inference}, 2016.
\newblock doi:10.1093/imaiai/iaw014.

\bibitem{kech_explicit_2015}
M.~Kech, ``Explicit frames for deterministic phase retrieval via phaselift,''
  {\em preprint arXiv:1508.00522}, 2015.

\bibitem{bodmann_algorithms_2016}
B.~G. Bodmann and N.~Hammen, ``Algorithms and error bounds for noisy phase
  retrieval with low-redundancy frames,'' {\em Appl. Comput. Harmonic Anal.},
  2016.

\bibitem{pohl_phaseless_2014}
V.~Pohl, F.~Yang, and H.~Boche, ``Phaseless signal recovery in infinite
  dimensional spaces using structured modulations,'' {\em J. Fourier Anal.
  Appl.}, vol.~20, no.~6, pp.~1212--1233, 2014.

\bibitem{delsarte_spherical_1977}
P.~Delsarte, J.-M. Goethals, and J.~J. Seidel, ``Spherical codes and designs,''
  {\em Geometriae Dedicata}, vol.~6, no.~3, pp.~363--388, 1977.

\bibitem{renes_symmetric_2004}
J.~M. Renes, R.~Blume-Kohout, A.~J. Scott, and C.~M. Caves, ``Symmetric
  informationally complete quantum measurements,'' {\em J. Math. Phys.},
  vol.~45, no.~6, pp.~2171--2180, 2004.

\bibitem{kueng_spherical_2015}
R.~Kueng, D.~Gross, and F.~Krahmer, ``Spherical designs as a tool for
  derandomization: The case of {PhaseLift},'' in {\em 2015 International
  Conference on Sampling Theory and Applications (SampTA)}, pp.~192--196, May
  2015.

\bibitem{ehler_phase_2015}
M.~Ehler, M.~Gr{\"a}f, and F.~J. Kir{\'a}ly, ``Phase retrieval using random
  cubatures and fusion frames of positive semidefinite matrices,'' {\em Waves,
  Wavelets and Fractals}, vol.~1, no.~1, 2015.

\bibitem{ambainis_quantum_2007}
A.~Ambainis and J.~Emerson, ``Quantum $t$-designs: $t$-wise independence in the
  quantum world,'' in {\em 22nd Annual IEEE Conference on Computational
  Complexity (CCC'07)}, pp.~129--140, June 2007.

\bibitem{brandao_local_2012}
F.~G. Brandao, A.~W. Harrow, and M.~Horodecki, ``Local random quantum circuits
  are approximate polynomial-designs,'' {\em preprint arXiv:1208.0692}, 2012.

\bibitem{zhu_multiqubit_2015}
H.~Zhu, ``Multiqubit {Clifford} groups are unitary 3-designs,'' {\em preprint
  arXiv:1510.02619}, 2015.

\bibitem{webb_clifford_2015}
Z.~Webb, ``The {Clifford} group forms a unitary 3-design,'' {\em preprint
  arXiv:1510.02769}, 2015.

\bibitem{kueng_qubit_2015}
R.~Kueng and D.~Gross, ``Qubit stabilizer states are complex projective
  3-designs,'' {\em preprint arXiv:1510.02767}, 2015.

\bibitem{klappenecker_mutually_2005}
A.~Klappenecker and M.~Rotteler, ``{Mutually unbiased bases are complex
  projective 2-designs},'' in {\em {2005 IEEE International Symposium on
  Information Theory (ISIT), Vols 1 and 2}}, pp.~{1740--1744}, {2005}.

\bibitem{bodmann_achieving_2015}
B.~G. Bodmann and J.~Haas, ``Achieving the orthoplex bound and constructing
  weighted complex projective 2-designs with {S}inger sets,'' {\em preprint
  arXiv:1509.05333}, 2015.

\bibitem{povm_paper}
R.~Kueng, H.~Zhu, M.~Grassl, and D.~Gross, ``Distinguishing quantum states
  using {Clifford} orbits,'' {\em preprint arXiv:1609.08595}, 2016.

\bibitem{perina_quantum_2012}
J.~Perina, Z.~Hradil, and B.~Jurco, {\em Quantum optics and fundamentals of
  physics}.
\newblock Springer, 2012.

\bibitem{banaszek_focus_2013}
K.~Banaszek, M.~Cramer, and D.~Gross, ``Focus on quantum tomography,'' {\em New
  J. Phys.}, vol.~15, no.~12, p.~125020, 2013.

\bibitem{foucart_mathematical_2013}
S.~Foucart and H.~Rauhut, {\em A {M}athematical {I}ntroduction to {C}ompressive
  {S}ensing}.
\newblock Applied and Numerical Harmonic Analysis, Birkh\"auser/Springer, New
  York, 2013.

\bibitem{donoho_compressed_2006}
D.~L. Donoho, ``Compressed sensing,'' {\em {IEEE} Trans. Inform. Theory},
  vol.~52, no.~4, pp.~1289--1306, 2006.

\bibitem{candes_near_2006}
E.~J. Cand{\`e}s and T.~Tao, ``Near-optimal signal recovery from random
  projections: Universal encoding strategies?,'' {\em {IEEE} Trans. Inform.
  Theory}, vol.~52, no.~12, pp.~5406--5425, 2006.

\bibitem{candes_stable_2006}
E.~J. Cand{\`e}s, J.~K. Romberg, and T.~Tao, ``Stable signal recovery from
  incomplete and inaccurate measurements,'' {\em Commun. Pur. Appl. Math.},
  vol.~59, no.~8, pp.~1207--1223, 2006.

\bibitem{candes_robust_2006}
E.~J. Cand{\`e}s, J.~Romberg, and T.~Tao, ``Robust uncertainty principles:
  Exact signal reconstruction from highly incomplete frequency information,''
  {\em {IEEE} Trans. Inform. Theory}, vol.~52, no.~2, pp.~489--509, 2006.

\bibitem{tao2003uncertainty}
T.~Tao, ``An uncertainty principle for cyclic groups of prime order,'' {\em
  preprint math/0308286}, 2003.

\bibitem{recht_guaranteed_2010}
B.~Recht, M.~Fazel, and P.~A. Parrilo, ``Guaranteed minimum-rank solutions of
  linear matrix equations via nuclear norm minimization,'' {\em SIAM Rev.},
  vol.~52, no.~3, pp.~471--501, 2010.

\bibitem{candes_exact_2009}
E.~J. Cand{\`e}s and B.~Recht, ``Exact matrix completion via convex
  optimization,'' {\em Found. Comput. Math.}, vol.~9, no.~6, pp.~717--772,
  2009.

\bibitem{candes_power_2010}
E.~J. Cand{\`e}s and T.~Tao, ``The power of convex relaxation: Near-optimal
  matrix completion,'' {\em {IEEE} Trans. Inform. Theory}, vol.~56, no.~5,
  pp.~2053--2080, 2010.

\bibitem{keshavan_matrix_2010}
R.~H. Keshavan, A.~Montanari, and S.~Oh, ``Matrix completion from a few
  entries,'' {\em {IEEE} Trans. Inform. Theory}, vol.~56, no.~6,
  pp.~2980--2998, 2010.

\bibitem{gross_quantum_2010}
D.~Gross, Y.-K. Liu, S.~T. Flammia, S.~Becker, and J.~Eisert, ``Quantum state
  tomography via compressed sensing,'' {\em Phys. Rev. Lett.}, vol.~105,
  no.~15, p.~150401, 2010.

\bibitem{gross_recovering_2011}
D.~Gross, ``{Recovering low-rank matrices from few coefficients in any
  basis},'' {\em {IEEE} Trans. Inform. Theory}, vol.~{57}, pp.~{1548--1566},
  {2011}.

\bibitem{liu_universal_2011}
Y.~K. Liu, ``Universal low-rank matrix recovery from pauli measurements,'' in
  {\em Advances in Neural Information Processing Systems 24}, pp.~1638--1646,
  Curran Associates, Inc., 2011.

\bibitem{FlamGLE12}
S.~T. Flammia, D.~Gross, Y.-K. Liu, and J.~Eisert, ``Quantum tomography via
  compressed sensing: error bounds, sample complexity and efficient
  estimators,'' {\em New J. Phys.}, vol.~14, no.~9, p.~095022, 2012.

\bibitem{gottesman_stabilizer_1997}
D.~Gottesman, {\em {‘Stabilizer codes and quantum error correction}}.
\newblock PhD thesis, California Institute of Technology, Pasadena, CA, 1997.

\bibitem{macwilliams1977theory}
F.~MacWilliams and N.~Sloane, {\em The Theory of Error-correcting Codes}.
\newblock North-Holland mathematical library, North-Holland Publishing Company,
  1977.

\bibitem{nebe_self_2006}
G.~Nebe, E.~M. Rains, and N.~J.~A. Sloane, {\em Self-dual codes and invariant
  theory}, vol.~17.
\newblock Springer, 2006.

\bibitem{grochenig2013foundations}
K.~Gr{\"o}chenig, {\em Foundations of Time-Frequency Analysis}.
\newblock Applied and Numerical Harmonic Analysis, Birkh{\"a}user Boston, 2013.

\bibitem{pfander-chapter}
G.~Pfander, ``Gabor frames in finite dimensions,'' in {\em Finite Frames}
  (P.~G. Casazza and G.~Kutyniok, eds.), Applied and Numerical Harmonic
  Analysis, pp.~193--239, 2013.

\bibitem{v1931eindeutigkeit}
J.~v.~Neumann, ``Die {E}indeutigkeit der {S}chr{\"o}dingerschen {O}peratoren,''
  {\em Math. Ann.}, vol.~104, no.~1, pp.~570--578, 1931.

\bibitem{mackey1955theory}
G.~Mackey, {\em The Theory of Unitary Group Representations}.
\newblock Chicago lectures in mathematics, University of Chicago Press, 1955.

\bibitem{Foll89book}
G.~B. Folland, {\em Harmonic Analysis in Phase Space}, vol.~122 of {\em Annals
  of Mathematics Studies}.
\newblock Princeton, NJ: Princeton University Press, 1989.

\bibitem{nielsen_quantum_2010}
M.~A. Nielsen and I.~L. Chuang, {\em {Quantum Computation and Quantum
  Information 10th Anniversary Edition}}.
\newblock Cambridge University Press, 2010.

\bibitem{gross_hudson_2006}
D.~Gross, ``Hudson's theorem for finite-dimensional quantum systems,'' {\em J.
  Math. Phys.}, vol.~47, no.~12, p.~122107, 2006.

\bibitem{hostens2005stabilizer}
E.~Hostens, J.~Dehaene, and B.~De~Moor, ``Stabilizer states and {Clifford}
  operations for systems of arbitrary dimensions and modular arithmetic,'' {\em
  Phys. Rev. A}, vol.~71, no.~4, p.~042315, 2005.

\bibitem{gross2008lu}
D.~Gross and M.~Van~den Nest, ``The {LU-LC} conjecture, diagonal local
  operations and quadratic forms over {GF (2)},'' {\em Quantum Inform.
  Comput.}, vol.~8, no.~3, pp.~263--281, 2008.

\bibitem{other_paper}
H.~Zhu, R.~Kueng, M.~Grassl, and D.~Gross, ``The {Clifford} group fails
  gracefully to be a unitary 4-design,'' {\em preprint arXiv:1609.08172}, 2016.

\bibitem{walls_quantum_2008}
D.~F. Walls and G.~J. Milburn, {\em Quantum Optics}.
\newblock Springer, 2nd~ed., 2008.

\bibitem{hudson_wigner_1974}
R.~L. Hudson, ``When is the {W}igner quasi-probability density non-negative?,''
  {\em Rep. Math. Phys.}, vol.~6, no.~2, pp.~249--252, 1974.

\bibitem{debeaudrap_linearized_2011}
N.~De~Beaudrap, ``A linearized stabilizer formalism for systems of finite
  dimension,'' {\em preprint arXiv:1102.3354}, 2011.

\bibitem{appleby_symmetric_2005}
D.~M. Appleby, ``Symmetric informationally complete--positive operator valued
  measures and the extended {Clifford} group,'' {\em J. Math. Phys.}, vol.~46,
  no.~5, p.~052107, 2005.

\bibitem{helsen_representations_2016}
J.~Helsen, J.~J. Wallman, and S.~Wehner, ``Representations of the multi-qubit
  {Clifford} group,'' {\em preprint arXiv:1609.08188}, 2016.

\bibitem{Runge96}
B.~Runge, ``Codes and {S}iegel modular forms,'' {\em Discrete Math.}, vol.~148,
  no.~1, pp.~175 -- 204, 1996.

\bibitem{balan_painless_2009}
R.~{Balan}, B.~G. {Bodmann}, P.~G. {Casazza}, and D.~{Edidin}, ``{Painless
  reconstruction from magnitudes of frame coefficients.},'' {\em {J. Fourier
  Anal. Appl.}}, vol.~15, pp.~488--501, 2009.

\bibitem{demanet_stable_2014}
L.~Demanet and P.~Hand, ``Stable optimizationless recovery from phaseless
  linear measurements,'' {\em J. Fourier Anal. Appl.}, vol.~20, no.~1,
  pp.~199--221, 2014.

\bibitem{shenoy_weyl_1994}
R.~G. Shenoy and T.~W. Parks, ``The weyl correspondence and time-frequency
  analysis,'' {\em {IEEE} Trans. Signal Process.}, vol.~42, no.~2,
  pp.~318--331, 1994.

\bibitem{kozek_spectral_1996}
W.~Kozek, ``Spectral estimation in non-stationary environments,'' {\em PhD
  Thesis}, 1996.

\bibitem{koltchinskii_bounding_2015}
V.~Koltchinskii and S.~Mendelson, ``Bounding the smallest singular value of a
  random matrix without concentration,'' {\em Internat. Math. Res. Notices},
  pp.~12991--13008, 2015.

\bibitem{mendelson_learning_2015}
S.~{M}endelson, ``{L}earning without {C}oncentration,'' {\em {J}. {A}{C}{M}},
  vol.~62, no.~3, pp.~1--25, 2015.

\bibitem{bannai_survey_2009}
E.~Bannai and E.~Bannai, ``A survey on spherical designs and algebraic
  combinatorics on spheres,'' {\em European J. Combin.}, vol.~30, no.~6,
  pp.~1392--1425, 2009.

\bibitem{mohan_iterative_2010}
K.~Mohan and M.~Fazel, ``Iterative reweighted least squares for matrix rank
  minimization,'' in {\em Proceedings of the Allerton Conference},
  pp.~653--661, 2010.

\bibitem{recht_null_11}
B.~Recht, W.~Xu, and B.~Hassibi, ``Null space conditions and thresholds for
  rank minimization,'' {\em Math. Program.}, vol.~Ser B, 127, pp.~175--211,
  2011.

\bibitem{recht_necessary_2008}
B.~Recht, W.~Xu, and B.~Hassibi, ``Necessary and sufficient conditions for
  success of the nuclear norm heuristic for rank minimization,'' in {\em Proc.
  47th IEEE Conference on Decision and Control}, pp.~3065--3070, 2008.

\bibitem{fornasier_low_2011}
M.~{F}ornasier, H.~{R}auhut, and R.~{W}ard, ``{L}ow-rank matrix recovery via
  iteratively reweighted least squares minimization,'' {\em {S}{I}{A}{M} {J}.
  {O}ptim.}, vol.~21, no.~4, pp.~1614--1640, 2011.

\bibitem{aaronson_identifying_2006}
S.~Aaronson and D.~Gottesman, ``Identifying stabilizer states,'' 2006.
\newblock \url{http://pirsa.org/08080052/}.

\bibitem{ashley_presentation}
A.~Montanaro, ``Three quantum learning algoirthms,'' 2013.
\newblock
  \url{https://people.maths.bris.ac.uk/~csxam/presentations/learningtalk.pdf}.

\bibitem{aaronson2004improved}
S.~Aaronson and D.~Gottesman, ``Improved simulation of stabilizer circuits,''
  {\em Phys. Rev. A}, vol.~70, no.~5, p.~052328, 2004.

\end{thebibliography}

\end{document}